\newtheorem*{theorem*}{Theorem}
\newtheorem{Theorem}{Theorem}
\newtheorem{Conjecture}{Conjecture}
\newtheorem{Notation}{Notation}
\newtheorem{Proposition}{Proposition}
\newtheorem{Lemma}{Lemma}
\newtheorem{Definition}{Definition}
\newtheorem{Remark}{Remark}
\newtheorem{Corollary}{Corollary}
\newtheorem{Example}{Example}
\newcommand{\F}{\mathbb{F}}
\def\R{\mathbb{R}}
\newcommand{\word}[1]{\ensuremath{\boldsymbol{#1}}}
\newcommand{\uv}{\word{u}}
\newcommand{\xv}{\word{x}}
\newcommand{\vv}{\word{v}}
\newcommand{\Rel}{\mathrm{Rel}}
\newcommand{\Avr}{\mathrm{Avr}}
\newcommand{\Mon}{\mathcal{M}_m}
\newcommand{\eqdef}{\stackrel{\text{def}}{=}}
\newcommand{\indices}{\text{\rm ind}}
\newcommand{\eval}{\mathsf{ev}}
\newcommand{\CC}{\mathscr{C}}
\newcommand{\CS}[1]{\mathcal{C}_{#1}}
\newcommand{\CN}[2]{\ensuremath{\mathcal{C}_{#1,#2}}}
\newcommand{\Ns}{\bm{N}}
\newcommand{\Css}{\bm{C}}
\newcommand{\Yc}{\mathcal{Y}}
\newcommand{\BEC}{\mathrm{BEC}}
\newcommand{\RM}{\mathcal{RM}}
\newcommand{\LAR}{\ensuremath{\leq_{\mathrm{Avr}}}}
\newcommand{\card}[1]{|#1|}
\newcommand{\Bha}[1]{{\mathcal{B}\left( #1 \right)}}
\newcommand{\W}[2]{W^{#1}_{#2}}
\newcommand{\Wb}[2]{\left(W^{#1}_{#2}\right)}
\newcommand{\Wg}[3]{{{#1}^{#2}_{#3}}}
\newcommand{\Wgb}[3]{\left({{#1}^{#2}_{#3}}\right)}
\newcommand{\weako}{\preceq_{\mathrm{w}}}
\title{Bhattacharyya parameter of monomials codes for the Binary Erasure Channel: from pointwise to average reliability}
\author{Vlad-Florin Dr\u{a}goi $^{1,2}$ and Gabriela Cristescu $^{1}$}
\date{\footnotesize
$^{1}$Faculty of Exact Sciences, Aurel Vlaicu University of Arad, Romania; \\\{vlad.dragoi,gabriela.cristescu\}@uav.ro\\
$^{2}$LITIS, University of Rouen Normandie, France.
}
\begin{document}
\maketitle

\begin{abstract}Monomial codes were recently equipped with partial order relations, fact that allowed researchers to discover structural properties and efficient algorithm for constructing polar codes. Here, we refine the existing order relations in the particular case of Binary Erasure Channel. The new order relation takes us closer to the ultimate order relation induced by the pointwise evaluation of the Bhattacharyya parameter of the synthetic channels. The best we can hope for is still a partial order relation. To overcome this issue we appeal to related technique from network theory. Reliability network theory was recently used in the context of polar coding and more generally in connection with decreasing monomial codes. In this article, we investigate how the concept of average reliability is applied for polar codes designed for the binary erasure channel. Instead of minimizing the error probability of the synthetic channels, for a particular value of the erasure parameter $p$, our codes minimize the average error probability of the synthetic channels. By means of basic network theory results we determine a closed formula for the average reliability of a particular synthetic channel, that recently gain the attention of researchers.\\
\end{abstract}

\section{Introduction}
One of the most striking development in coding theory in the last two decades is probably the theory around polar codes. In his seminal article \cite{A09}, Arikan demonstrated, for the first time, that one could achieve the capacity of Binary Discrete Memoryless Channels (BDMC) using both efficient encoding as well as efficient decoding algorithms. The so called polar codes are now present in the fifth generation (5G) technology \cite{Land2020}. Indeed, polar code was elected as the standard coding technique for the control channel in support of the enhanced Mobile BroadBand service, one of the major parts in the 5G wireless network technology. Getting back to the three principal directions on which coding theory evolved, \emph{polar coding} seemed to be unrelated to the classical \emph{algebraic coding}. Typically, the construction of polar codes does not
come from any particular structure in the code but rather from the process of channel polarization. However, polar codes are closely related to Reed-Muller codes, as pointed out even by Arikan \cite{A09}. Hence, polar and Reed-Muller code share a common algebraic description \cite{D17, BDOT16}. More precisely, they are sub-classes of a bigger family of algebraic codes called \emph{decreasing monomial codes} (DMC). The structure underlying DMCs and its algebraic formalism were applied in conjunction with other fields, e.g., in the context of quantum error correcting codes \cite{rengaswamy2020,rengaswamy2020b,krishna2019}, post-quantum cryptography \cite{BCDOT16,DBC19,BDK17}, network reliability \cite{DB2019,DCB18,BCD21}.

Several challenges regarding polar coding, among which efficient construction of polar codes given a specific BDMC,  were proposed. Arikan's initial technique \cite{A09} was improved by several authors \cite{M_2017, He_2017, TV13, MT09,  MELK13, KSU10, AD15, Trifonov2017, Huang2020}. Let $W$ denote a BDMC, $m$ a fixed integer and $\uv$ a binary vector of length $m.$ The main idea in the construction of polar codes is to estimate the reliability of the synthetic channels $\{W^{\uv}\;|\;\uv\in\{0,1\}^m\}$. For that one might use the Bhattacharyya parameter $\Bha{W^{\uv}}(p)$, where $p$ denotes the error probability of the channel $W$. The message
bits of a polar code of length $2^m$ and dimension $k$ are allocated to the $k$ sub-channels $W^{\uv}$ having the smallest $\Bha{W^{\uv}}$. Hence, one might classify the set of $W^{\uv}$ into "good" (reliable) or "bad" (non-reliable). For a fixed value of $p$ the values $\Bha{W^{\uv}}(p)$ are totally ordered. In other words, when the parameter $p$ is fixed any distinct pair of channels $W^{\uv},W^{\vv}$ satisfy either $\Bha{W^{\uv}}(p)\leq \Bha{W^{\vv}}(p)$ or $\Bha{W^{\vv}}(p)\leq \Bha{W^{\uv}}(p).$ In this case, we say that a channel $W^{\uv}$ is point-wise more reliable than a channel $W^{\vv}.$ However, when considering the whole interval $p\in[0,1]$, ranking the synthetic channels becomes complicated. In this case we say that $W^{\uv}$ is globally more reliable than $W^{\vv}$, and write $\uv\leq\vv$, if and only if $\forall p\in[0,1]\;,\;\Bha{W^{\uv}}(p)\leq \Bha{W^{\vv}}(p).$ 

One of the most efficient techniques that orders the set of synthetic channels (with respect to the concept of globally more reliable), provides sub-linear complexity construction \cite{M_2017}. It exploits the existence of a partial order (denoted by $\preceq$) on the set of synthetic channels \cite{BDOT16}. This partial order is compatible with the notion of globally more reliable, i.e., $\uv\preceq\vv\Rightarrow \uv\leq \vv.$ Relation $\preceq$ was also applied in other contexts. In \cite{BDOT16} the authors proved fundamental properties of polar codes by means of an algebraic formalism. The order $\preceq$ played a crucial role in proving that code-based cryptosystems employing polar codes are insecure \cite{BCDOT16, DBC19}. Even though $\preceq$ provided a contribution to understanding polar codes, i.e., their structure and construction, simulations show that $\preceq$ is far from ordering $\Bha{W^{\uv}}$ optimally. Hence, in a recent article $\preceq$ was refined \cite{WFS17}. 

In the analysis of the performance of several families of codes, among which polar, Reed-Muller, cyclic and BCH codes, the communication channel that received a lot of attention is the Binary Erasure Channel ($\BEC$). When polar codes are designed for $\BEC(p)$ (in this particular case $p$ denotes the erasure probability) all the synthetic channels $\{W^{\uv}\;| \; \uv\in\{0,1\}^m\}$ are also $\BEC$. In this case, the erasure probability of $W^{\uv}$ is equal to the Bhattacharyya parameter of $W^{\uv}.$ Here, we analyze this particular channel. Our choice is motivated by several results and methods. First of all the simplicity of this channel makes the theoretical proofs significantly simpler and easier. Also, many of the properties that hold for the $\BEC$ turn out to be valid for more general channel models. For example, the proof of Reed-Muller codes achieving the capacity of a communication channel started with the $\BEC$ \cite{KMSU17,SSV17}. Codes that admit a doubly-transitive automorphism group or having large orbits under the action of their permutation group achieve the capacity of the $\BEC$ \cite{KMSU17,KCP2016}. In \cite{Roth2019} the authors analyze threshold points for $W^{\uv}$ in the case of $\BEC$, fact that allows them to propose sets of asymptotically "good" channels. Recently, in \cite{BD2020} the authors analyzed the Bhattacharyya parameter of polar codes for the $\BEC$ using network reliability theory. They have proposed simple approximations of $\Bha{W^{\uv}}$. These were used to determine sub-intervals of $[0,1]$ where polar codes coincide with Reed-Muller codes. They have also managed to determine new sets of asymptotically "good" channel.

\subsection{Polar codes are strongly decreasing monomial codes}
Polar codes over the $\BEC$ satisfy an order relation that is finer than $\preceq.$ Hence, we define another order relation $\preceq_d$ on the set of monomials on $m$ variables\\ $\Mon=\{\bm{1},x_0,\dots,x_{m-1},x_0x_1,\dots, x_0x_1\dots x_{m-1}\}$, coming closer to the $\leq$ relation, i.e., we have
\begin{equation*}
    \forall f,g\in\Mon\quad f\preceq g\Rightarrow f\preceq_d g\Rightarrow f\leq g.
\end{equation*}
The relation $\preceq_d$ allows to compare monomials with equal degrees that were not comparable with respect to $\preceq$, e.g., $x_1x_2\preceq_d x_0x_3.$ The idea of $\preceq_d$ came from the link between the set of monomials of degree $d$ in $\Mon$ and the set of partitions/Young diagrams inside the $d\times(m-d)$ grid (see Proposition 3.7.8 in \cite{D17}). From that, looking at order relations on partitions came as a natural idea, and the most common one is the dominance order \cite{S_2011}. The order $\preceq_d$ is exactly defined as the dominance order on partitions inside a fixed grid.     

The main result in this section can be stated as follows

\begin{theorem*}
Polar codes over the Binary Erasure Channel are strongly decreasing monomial codes.
\end{theorem*}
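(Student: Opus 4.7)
My plan is to deduce the theorem from the single implication
\[
f\preceq_d g \;\Longrightarrow\; \forall p\in[0,1],\ \Bha{W^{f}}(p)\le\Bha{W^{g}}(p),
\]
that is, that $\preceq_d$ is contained in the globally-more-reliable relation $\leq$. Once this is in hand the theorem is automatic: a polar code over $\BEC(p)$ is obtained by retaining the monomials whose synthetic channels have smallest Bhattacharyya parameter at the given $p$, hence whenever $g$ belongs to the information set every $f\preceq_d g$ lies there too, which is precisely the defining property of a strongly decreasing monomial code.

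The first preparatory step is to peel off the comparisons already handled by the coarser order $\preceq$, which is known to imply $\leq$ and, as recalled in the excerpt, satisfies $\preceq\Rightarrow\preceq_d$. Only monomials of a common degree $d$ therefore remain. Via the bijection between degree-$d$ monomials in $\Mon$ and partitions inside the $d\times(m-d)$ rectangle, these are, by definition of $\preceq_d$, precisely the pairs comparable in the dominance order on partitions. The second preparatory step is to invoke the classical generation of the dominance order by elementary \emph{one-box} moves, i.e.\ $\mu\succ_d\lambda$ whenever $\lambda$ is obtained from $\mu$ by picking $i<j$ and replacing $(\mu_i,\mu_j)$ by $(\mu_i-1,\mu_j+1)$. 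Pulled back to monomials, such a move turns a factor $x_a\,x_b$ of $g$ into $x_{a+1}\,x_{b-1}$ inside $f$, the remaining factor $h$ being identical on both sides. By transitivity of $\le$, the whole inequality reduces to verifying
\[
\Bha{W^{x_{a+1}x_{b-1}h}}(p) \;\le\; \Bha{W^{x_{a}x_{b}h}}(p),\qquad p\in[0,1],
\]
for each such elementary pair.

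The third step, and the heart of the argument, is the polynomial verification just displayed. My plan is to iterate the BEC recursions
\[
\Bha{W^{u0}} = 2\Bha{W^{u}}-\Bha{W^{u}}^{2},\qquad \Bha{W^{u1}}=\Bha{W^{u}}^{2},
\]
first along the positions where $f$ and $g$ agree (the common factor $h$ and the coordinates strictly between $a$ and $b$), so that both sides become polynomials in a single intermediate Bhattacharyya value $Z\in[0,1]$ depending only on $p$ and on the shared coordinates. The inequality then collapses into a purely algebraic comparison between two length-two compositions of $z\mapsto z^{2}$ and $z\mapsto 2z-z^{2}$ performed in opposite orders, and the difference $\Bha{W^{g}}(p)-\Bha{W^{f}}(p)$ should factor through $Z(1-Z)$ times a residual polynomial. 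The main obstacle I foresee is establishing the non-negativity of that residual factor on $[0,1]$; the natural route is an induction on the separation $b-a$, exploiting that both polarisation maps are monotone non-decreasing and preserve the unit interval, which should prevent the two Bhattacharyya curves from crossing on $[0,1]$ and thereby close the argument.
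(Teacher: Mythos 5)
Your combinatorial skeleton is sound and in places cleaner than the paper's: reducing to equal degrees via $\weako$, and then decomposing any dominance comparison of partitions in the $d\times(m-d)$ rectangle into a chain of elementary one-box moves, is a legitimate alternative to the paper's double induction (on the degree and on $m$), provided you check that the intermediate partitions stay inside the rectangle (they do, by conjugation). The direction of the resulting elementary inequality, $\Bha{W^{x_{a+1}x_{b-1}h}}\le\Bha{W^{x_ax_bh}}$, is also correct.

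The gap is in your third step, which is the actual mathematical content of the theorem, and it is left unproved. First, the structure is misdescribed: the coordinates where $f$ and $g$ differ are not a contiguous block but two separated pairs, $(a,a+1)$ and $(b-1,b)$, with the common transforms coming from $h$ and from the shared zero positions interleaved \emph{between} them. Writing $S(z)=z^2$ and $P(z)=2z-z^2$, the comparison is $S\circ P\circ C\circ P\circ S$ versus $P\circ S\circ C\circ S\circ P$ applied to a common value $Z$, where $C$ is the common middle composition --- not ``two length-two compositions performed in opposite orders.'' Second, your proposed closing argument cannot work as stated: monotonicity of $S$ and $P$ only transports an inequality through \emph{identical} outer maps, and the two available pairwise commutation inequalities point in opposite directions (the inner swap $P\circ S\le S\circ P$ helps, but the outer swap then needs the reverse), so no monotonicity/interpolation argument settles the sign of the difference. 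This is precisely why the paper's Lemma on degree-two monomials resorts to an explicit computation, expanding both sides in the Bernstein basis and verifying $N_i(\Css^{(0,1,1,0)})\le N_i(\Css^{(1,0,0,1)})$ coefficient by coefficient for the base case $\Bha{\W{x_1x_2}{4}}\le\Bha{\W{x_0x_3}{4}}$, before propagating it through common multiplications and the induction. Until you supply a proof of the elementary inequality with the common block $C$ in the middle (or reduce rigorously to a case where $C$ is absent), the argument is a plan rather than a proof.
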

In the proof of this theorem we will need to demonstrate two useful properties of this new order
\begin{itemize}
    \item given two monomials in $f,g\in\Mon$ such that $f\preceq_d g$ then for any multiples $fh,gh$ with $\gcd(h,f)=1$ and $\gcd(g,h)=1$ we have $fh\preceq_d gh,$ where $\gcd(f,g)$ denotes the greatest common divisor of $f,g$.   
    \item two particular monomials are the key ingredients in the proof, $x_1x_2\preceq_d x_0x_3.$ We show that for all $p\in[0,1], \Bha{\W{x_1x_2}{}}(p)\leq \Bha{\W{x_0x_3}{}}(p)$, and in general that any pair of monomials of degree 2 $f,g$,  satisfying $f\preceq_d g$ has the property for all $p\in[0,1], \Bha {W^{f}}(p)\le\Bha{W^{g}}(p).$  
\end{itemize}
Even though $\preceq_d$ get us closer to the ultimate order relation $\leq$ we know that $\preceq_d$ is a partial order relation. $\preceq_d$ seems to perform as well as the order relation from \cite{WFS17}, being much simpler to describe and analyze that the order in \cite{WFS17}. Also, in \cite{WFS17} the authors determine new order relations based on some hypothesis which are not algebraically easy to express, and which are to be tested each time we change the parameters of the code.    

\subsection{Average reliability of synthetic channels} 
Hence, we are still left with elements that are not comparable and for which we need to compute $\Bha{W^{\uv}}.$ In order to overcome this issue we propose an alternative solution. Suppose that the erasure probability of the channel $p$ changes with respect to the uniform distribution over the closed interval $[0,1].$ Instead of constructing, for each $p$, the corresponding polar code, we propose to construct the best polar code in average. More exactly, we consider the average reliability of the synthetic channels $W^{\uv}$, $\Avr(W^{\uv})=\int_{0}^{1}\Bha{W^{\uv}}dp$, and choose those $\uv$ that minimize this quantity. As the average reliability induces a total order relation (see Figure \ref{fig:Avr}) there is only one polar code for a given dimension and length. It is the linear code that minimizes the average error probability for all $p\in[0,1].$ Hence, it might be less efficient than polar codes designed for a particular value of $p$ but it has the best performance in average. 
\begin{figure}[!h]
    \centering
    \begin{subfigure}{.32\textwidth}
  \includegraphics[width=\textwidth]{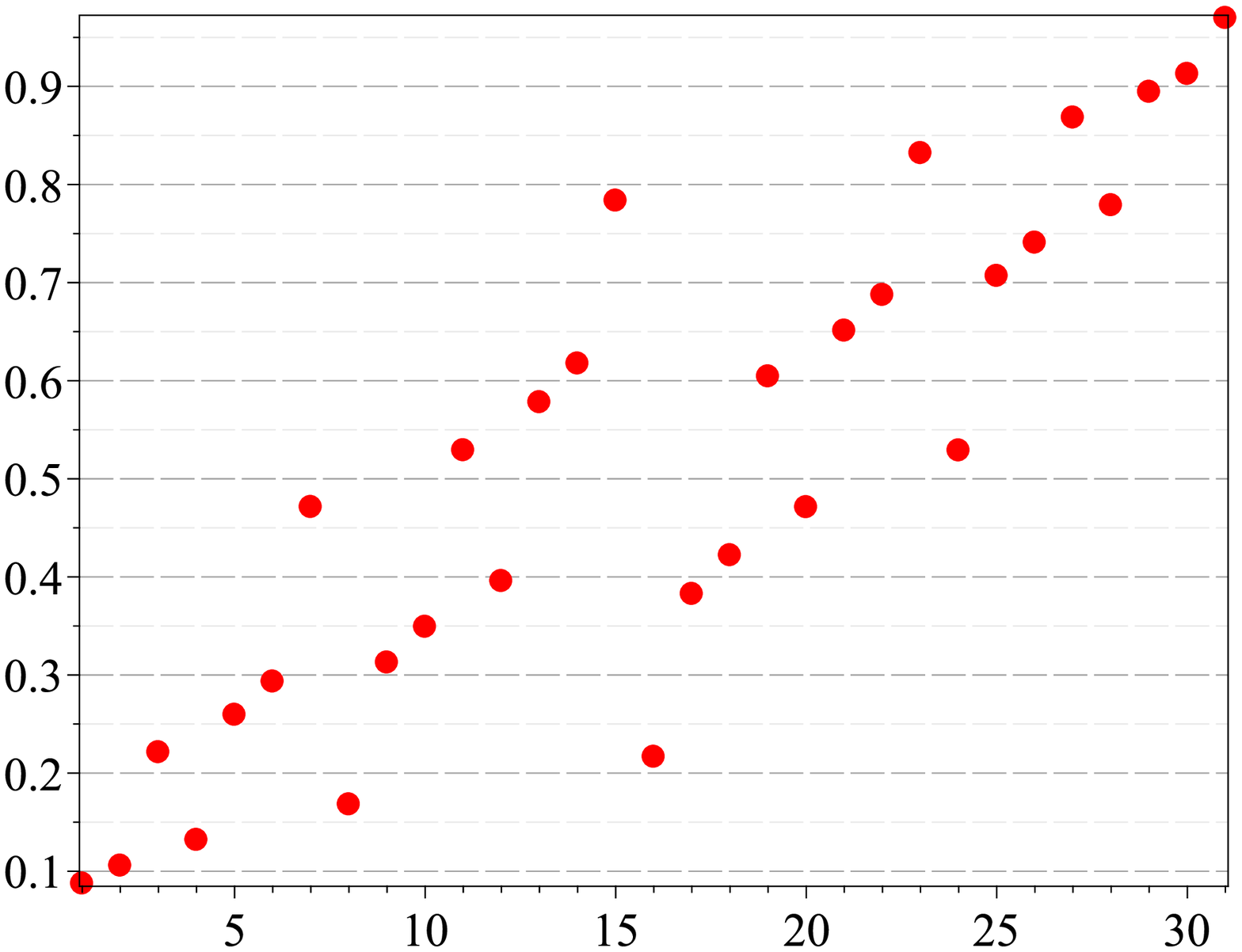}  
  \caption{$m=5$}
  \label{fig:avr5}
\end{subfigure}
    \begin{subfigure}{.32\textwidth}
    \includegraphics[width=\textwidth]{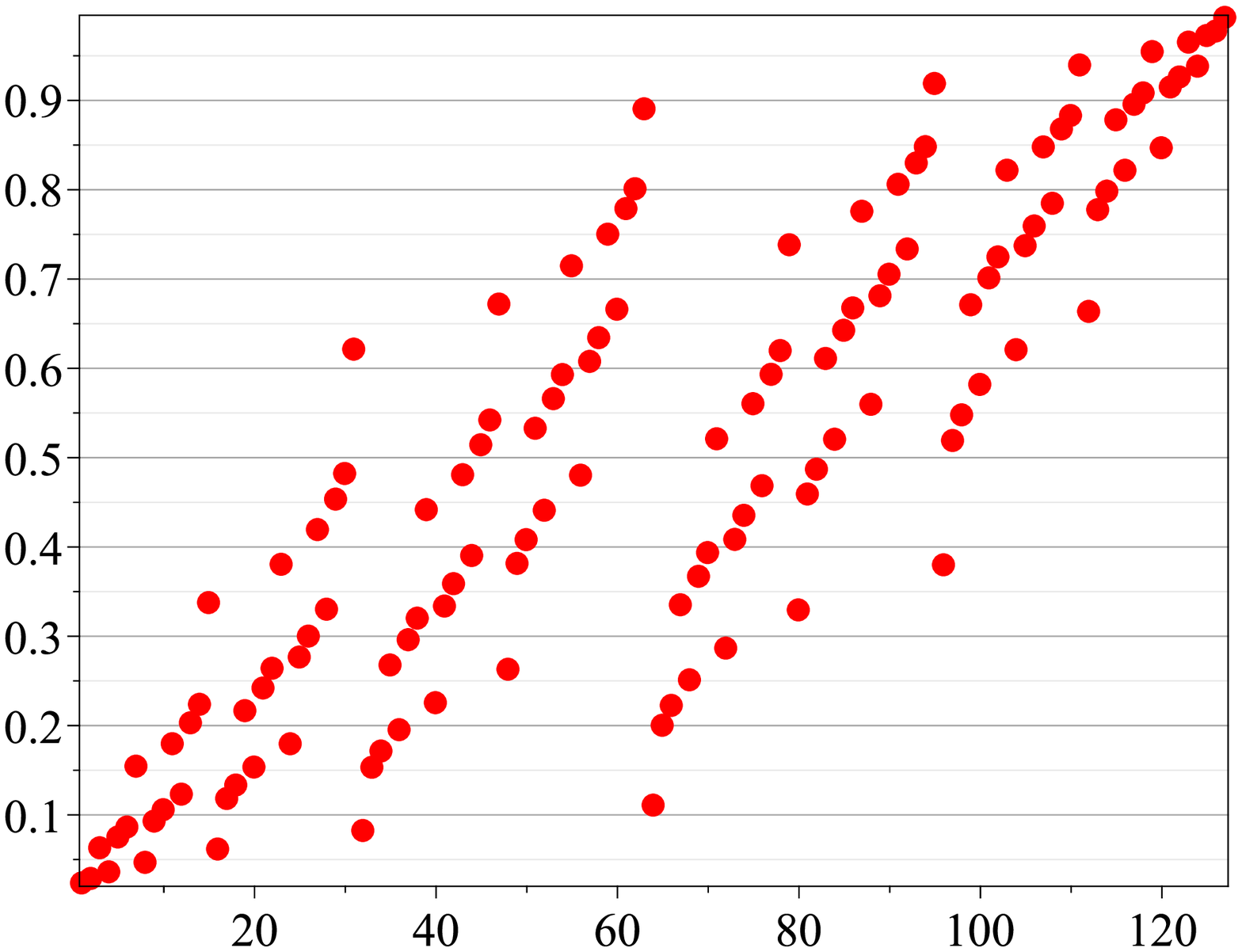}
    \caption{$m=7$}
    \label{fig:avr7}
    \end{subfigure}
\begin{subfigure}{.32\textwidth}
    \centering
    \includegraphics[width=\textwidth]{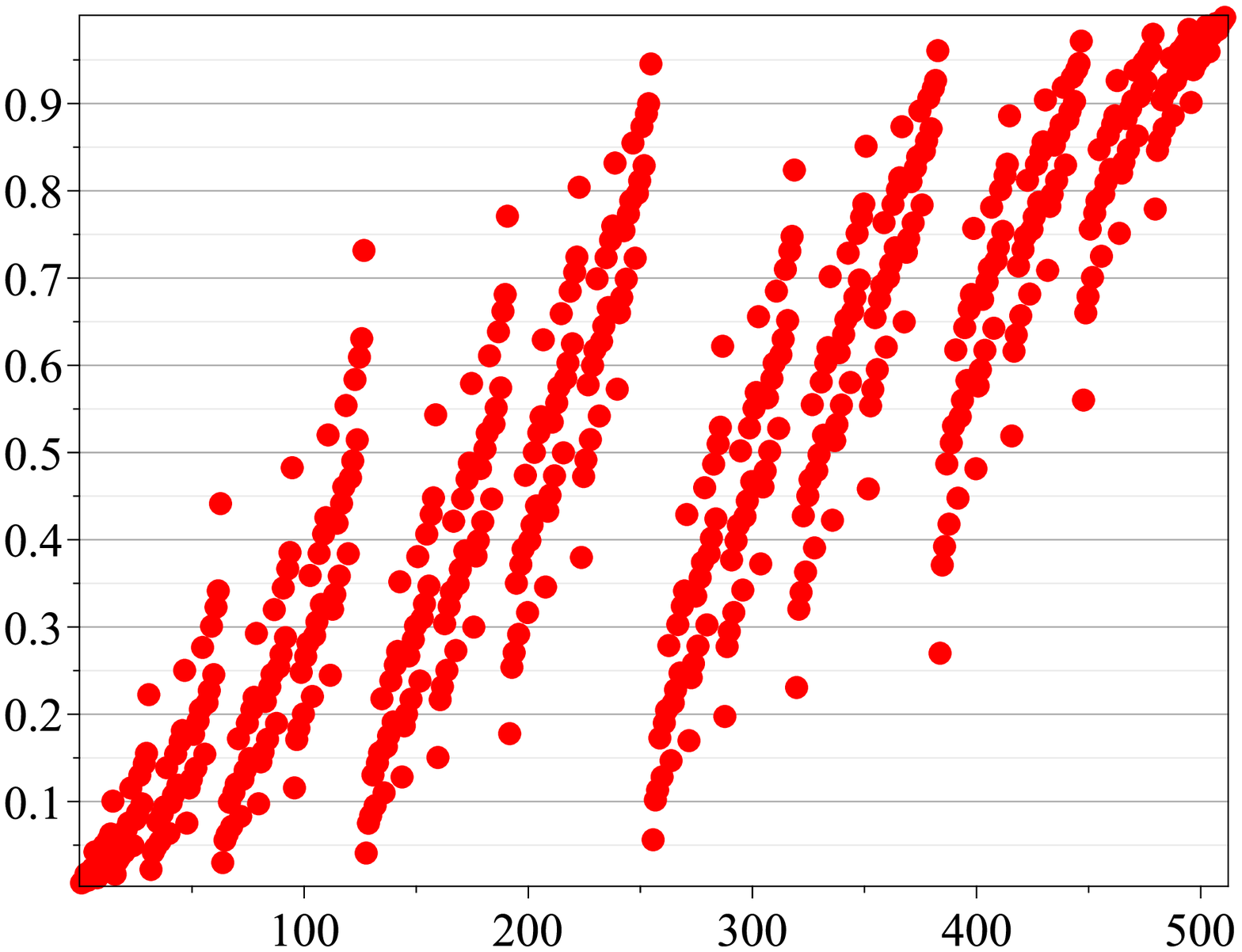}
    \caption{$m=9$}
    \label{fig:avr9}
    \end{subfigure}
    \caption{Average Bhattacharyya parameter. On the x-axis are the integer values of the binary vectors $\uv\in\{0,1\}^m$, and on the y-axis are the values $\Avr(\Bha{W^{\uv}}).$  }\label{fig:Avr}
\end{figure}

The preorder $\uv\LAR\vv\Leftrightarrow \Avr({W^{\uv}})\leq\Avr({W^{\vv}})$  induces a complementarity property with respect to the integral operator over $[0,1]$ as defined in \cite{CD2020,CD2021} in case of two-terminal networks. We retrieve a similar property, i.e., $\Avr({W^{\uv}})=1-\Avr\left(W^{\overline{\uv}}\right),$ where $\overline{\uv}$ is the bit-wise complement of $\uv$, in the context of monomial codes. Our simulations have shown that, considering the relation $\LAR$  in the set of the synthetic channels, in each sub-interval $(i/10,(i+1)/10)$, for $0\leq i\leq 9$, we have a rough proportion of $2^m/10$ binary vectors $\uv.$ So, roughly speaking an uniform distribution could be used to approximate the number of $\uv$ inside each sub-interval, with respect to $\Avr.$ However, our result is not constructive, in the sense that it does not fully characterize exactly the $\uv$ that belong to a specific interval. An answer to this question might provide an extremely efficient method for constructing polar codes and give much more insight on the synthetic channels $W^{\uv}.$  

\subsubsection{Threshold points for sharp transitions} 
Determining the threshold point of $\Bha{W^{\uv}}$ is in general a difficult task \cite{mondelli2016,Roth2019}. In \cite{Roth2019} the authors analyze a particular synthetic channel $W^{(1^i0^{m-1})}$, for which asymptotic threshold points were determined. The conditions on $i$ and $m$ were further improved in \cite{WFS17}. Based on some basic notions and facts from network theory we determine an exact formula for the average reliability of $W^{(1^i0^{m-1})}$. The main result is
\begin{theorem*}Let $\uv=(1^i0^{m-i}).$ Then
\begin{equation}
\Avr\left(W^{\uv}\right)=1-\dfrac{1}{\binom{2^i+2^{i-m}}{2^i}}
\end{equation}
\end{theorem*}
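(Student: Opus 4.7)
The plan is to proceed in two steps. First, I would identify $\Bha{W^{\uv}}(p)$ as the unreliability polynomial of an explicit two-terminal series-parallel reliability network; then I would compute the corresponding integral over $[0,1]$ via the Beta function.

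For the first step, the BEC polarization recursion $1-\Bha{W^-}=(1-\Bha{W})^2$ and $\Bha{W^+}=\Bha{W}^2$ is exactly the rule for combining reliabilities of two identical components in series ($-$) and in parallel ($+$), as emphasized in \cite{BD2020}. Unwinding this recursion for $\uv=(1^i0^{m-i})$, starting from the base channel with $\Bha{W}(p)=p$ and applying the $m-i$ parallel compositions induced by the zero coordinates before the $i$ series compositions induced by the one coordinates, identifies $W^{\uv}$ with the series-parallel network consisting of $2^i$ blocks connected in series, each block being $2^{m-i}$ identical edges in parallel, each edge failing independently with probability $p$. This yields the closed form
\begin{equation*}
    \Bha{W^{\uv}}(p) = 1-\left(1-p^{2^{m-i}}\right)^{2^i}.
\end{equation*}

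For the second step, by linearity $\Avr(W^{\uv})=1-\int_0^1 (1-p^{2^{m-i}})^{2^i}\,dp$. The substitution $t=p^{2^{m-i}}$ converts the remaining integral into $2^{i-m}\,B\!\left(2^{i-m},\,2^i+1\right)$, and applying $B(x,y)=\Gamma(x)\Gamma(y)/\Gamma(x+y)$ together with the telescoping identity $\Gamma(z+n+1)=\Gamma(z)\prod_{k=0}^{n}(z+k)$ reduces this to
\begin{equation*}
    \frac{(2^i)!}{\prod_{k=1}^{2^i}\bigl(k+2^{i-m}\bigr)},
\end{equation*}
which is precisely $\binom{2^i+2^{i-m}}{2^i}^{-1}$ by the standard falling-factorial definition of the generalized binomial coefficient. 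Substituting back yields the claimed formula.

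The main obstacle is the first step: matching the $(1^i 0^{m-i})$ pattern with the correct series-parallel structure, since the precise identification depends on the convention used for reading the coordinates of $\uv$ as a sequence of polarization operations (equivalently, on how the pointwise ordering of $\Bha{W^{\uv}}$ interacts with the recursive $\pm$ decomposition). Once $\Bha{W^{\uv}}(p)$ has the correct factored form, the remaining Beta-function evaluation is routine; the only subtlety is that the exponent $2^{i-m}$ is fractional whenever $i<m$, so the concluding step must invoke the generalized binomial coefficient $\binom{x}{n}=x(x-1)\cdots(x-n+1)/n!$ rather than any combinatorial counting interpretation.
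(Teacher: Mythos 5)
Your proposal is correct, and it reaches the formula by a genuinely different route than the paper. Both arguments start from the same closed form $\Bha{W^{(1^i0^{m-i})}}(p)=1-(1-p^{l})^{w}$ with $l=2^{m-i}$, $w=2^{i}$ (the paper obtains it from Theorem \ref{thm:reliab_comp} and the structure of the composition $\Css^{(1^i0^{m-i})}$; you re-derive it from the polarization recursion, in the dual convention where $\Bha{W^{\uv}}$ is the \emph{un}reliability of a network whose edges fail with probability $p$ --- your series/parallel assignment is consistent with the paper's once that duality is accounted for, and the resulting polynomial matches). From there the paper takes a combinatorial detour: it first proves Theorem \ref{thm:rel_POS}, computing the Bernstein coefficients $N_i(\Css^{(1^i0^{m-i})})$ by inclusion--exclusion on paths, then uses $\Avr(W^{\uv})=\frac{1}{n+1}\sum_i N_i/\binom{n}{i}$ and a chain of binomial-coefficient identities and sum interchanges, ending at $1-\sum_{j=0}^{w}(-1)^j\binom{w}{j}\frac{1}{jl+1}=1-\binom{(n+1)/l}{w}^{-1}$. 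You instead integrate $1-(1-p^{l})^{w}$ directly: the substitution $t=p^{l}$ gives $2^{i-m}B\!\left(2^{i-m},2^{i}+1\right)$, and the Gamma-function recursion collapses this to $\binom{2^{i}+2^{i-m}}{2^{i}}^{-1}$ via the generalized (falling-factorial) binomial coefficient; I checked the Beta-function bookkeeping and it is exact. Your route is shorter and avoids the path-counting machinery entirely (note that the paper's penultimate alternating sum is precisely the termwise integration of the binomial expansion of $(1-p^{l})^{w}$, so the two computations are evaluating the same integral); what it does not give you is Theorem \ref{thm:rel_POS} itself, i.e.\ the explicit coefficients $N_i$, which the paper establishes as an independent result of interest. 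Your closing caveat about the fractional upper argument $2^{i}+2^{i-m}$ of the binomial coefficient is well placed: the identity must be read through the falling-factorial definition, not a counting one.
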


This allows us to determine the exact threshold point of this particular channel. Moreover, we demonstrate that for any $i\leq m-\log_2(m)-\log_2(\log_2(m))$ the channel $W^{(1^i0^{m-i})}$ has an average Bhattacharyya parameter that tends to zero when $m$ goes to infinity, i.e., $W^{(1^i0^{m-i})}$ is asymptotically "good" in average. Another consequence of our formula is that for any monomial $g\preceq_d x_{m-i+1}\dots x_{m}$ with $i\leq \log_2(\log_2(m))$ is such that $\Avr(W^{g})$ tends to zero when $m$ goes to infinity.    

Another significant implication of our result is that any synthetic channel in the $\RM(i,m)$ is asymptotically "good" in average, for any $i\leq \log_2(\log_2(m)).$ 
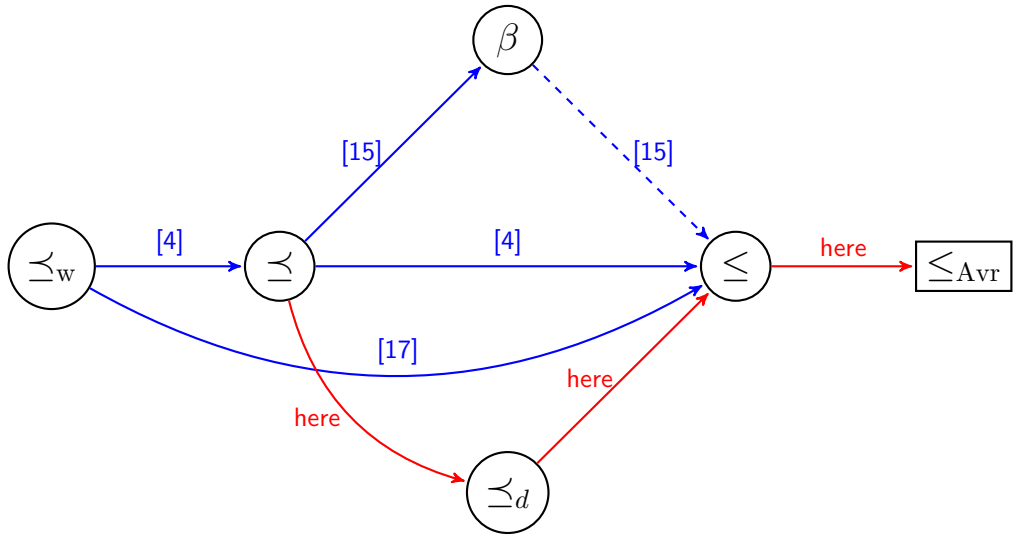
\begin{figure}[!h]
    \centering
\begin{tikzpicture}[->,>=stealth',shorten >=1pt,auto,node distance=3cm,
                    thick,main node/.style={circle,draw,font=\sffamily\Large\bfseries}]

  \node[main node] (1) {$\weako$};
  \node[main node] (2) [right of=1] {$\preceq$};
  \node (3) [right of=2] {};
    \node[main node] (7) [below of=3] {$\preceq_d$};
  \node[main node] (4) [right of=3] {$\leq$};
  \node[main node, rectangle] (5) [right of=4] {$\LAR$};
  \node[main node] (6) [above of=3] {$\beta$};

  \path[every node/.style={font=\sffamily\small}]
    (1) edge[blue] node [above] {\cite{BDOT16}} (2)
		edge[blue, bend right] node[above] {\cite{MT09}}(4)
    (2) edge[blue] node [above] {\cite{BDOT16}} (4)
        edge[blue] node [left]{\cite{He_2017}}(6)
    (6) edge[dashed,blue] node [right]{\cite{He_2017}} (4)
    (2) edge [red, bend right] node [left] {here} (7)
    (7) edge [red] node [left] {here} (4)
    (4) edge [red] node [above] {here} (5);
    \end{tikzpicture}
    \caption{Order ($\weako,\preceq,\preceq_d,\leq$) and pre-order relations ($\LAR$) for monomials codes over the $\BEC$. The connections in red are the results coming from this article. The dotted edge from $\beta$ to $\leq$ represents an order relation that is valid only for a sub-interval of $[0,1].$}
    \label{fig:order-rel-BEC}
\end{figure}
\clearpage
  
\section{Background and preliminary results}

Let us begin by listing some of the usual notations from coding theory that are going to be used in this article. $\F_2$ will denote the finite field with two elements $\{0,1\}.$ 
Let $k, n$ be two strictly positive integers and $k\leq n.$ A code $\CC$ of length $n$ and dimension $k$ is a vector sub-space of $\F_2^n$ of dimension $k.$ In this article, we focus out attention on a particular family or linear codes, namely monomial codes. $W$ will be used to denote a communication channel with binary input $x\in \F_2$ and output from an alphabet $y\in\Yc.$ In particular, we will focus on $\BEC(p)$, where the output is $\Yc=\{0,1,?\}$, $?$ denoting an erasure and $p$ being the erasure probability. For a more detailed reading of the subject we recommend \cite{RU08,R06}. 

\subsection{Monomial codes}
Monomial codes are a special class of structured codes. Informally, any code that admits a basis, in which each vector is the evaluation of a monomial, is called a monomial code. In general, monomial codes have a predefined length, i.e., $2^m.$ Many of the notations, definitions, properties, and results presented in this section are taken from \cite{D17}.  

In this article, binary vectors of length $m$ will be denoted using bold small letters, e.g., $\uv=(u_{0},\dots,u_{m-1})\in\F_2^m$, with the convention that bits are ordered from left to right, $u_0$ being the least significant bit. We also define the bit-wise complement of $\uv\in \{0,1\}$ by $\overline{\uv}=\bm{1_m}\oplus\uv$ (as in \cite{BDOT16}), where $\bm{1_m}$ is the all-ones vector. 
 The set $\{\uv \in \F_2^m\}$ will be ordered in a natural manner, using the mapping 
\begin{equation*}(u_0,\dots,u_{m-1})\rightarrow u=\sum_{i=0}^{m-1}u_i 2^i,
\end{equation*} and the natural order on the integers. Notice that we compute the value $u$ regardless of the fact that $u_i\in \F_2.$ Notice that the relation between $\uv$ and $\overline{\uv}$ induces $u+\overline{u}=2^m-1$. 

We consider multivariate polynomials and monomials defined over the polynomial ring $\R_m={\F_2[x_0,x_1,\dots,x_{m-1}]}/{(x_0^2-x_0,\dots,x_{m-1}^2-x_{m-1})}.$ The usual operators will be employed, i.e., for $f,g\in\R_m$, we denote by $\deg{f}$ the degree of $f$, $\gcd(f,g)$ the greatest common divisor of $f$ and $g$. $f/g$ denotes the quotient of $f$ and $g$.  
\begin{Notation}Let $m$ be a strictly positive integer. We denote 
\begin{itemize}
\item monomials: $\xv^{\uv}=x_{0}^{u_{0}}\cdots{} x_{m-1}^{u_{m-1}},$ where $\uv\in\F_2^m.$ 
\item support of a monomial: $\indices(g)=\{l_1\,\dots,l_s\}$ , where $g=x_{l_1}\dots x_{l_s}$ and $0\le l_1<l_2\dots<l_s\le m-1.$
\item a subset of the support of a monomial: $g_{[0,s]}=\gcd(g,\prod_{i=0}^s x_i).$
\item the set of monomials: $\Mon \eqdef \left \{\xv^{\uv}~|~ \uv=(u_{0},\dots{},u_{m-1})\in\F_2^m \right \}.$
\end{itemize}
\end{Notation}

\begin{Proposition}[\cite{C10}]
Let $g\in \R_m$ and order the elements in $\F_2^m$ with respect to the decreasing index order. Define the evaluation function
\[
\begin{array}[h]{ccccc}
\R_m    & \to &\F_2^{2^m}\\
g& \mapsto &\eval(g) = \big(g(\uv) \big)_{\uv \in \F_2^m}
\end{array}
\]
Then $\eval$ is a bijection defining an isomorphism between the vector spaces $(\R_m,+,\cdot)$ and $(\F_2^n,+,\cdot).$
\end{Proposition}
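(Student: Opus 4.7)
The plan is to show that $\eval$ is an $\F_2$-linear injection between two $\F_2$-vector spaces of dimension $2^m$; bijectivity then follows from the rank--nullity theorem. Linearity is immediate: for each fixed $\vv\in\F_2^m$, the componentwise evaluation $g\mapsto g(\vv)$ is additive and commutes with scalar multiplication by $\F_2$, so $\eval$ is $\F_2$-linear. It therefore remains to identify $\dim_{\F_2}\R_m$ and to prove injectivity, and I would do these two tasks simultaneously by analyzing the matrix of $\eval$ on the candidate basis $\Mon$.

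First, $\Mon$ spans $\R_m$: any representative in $\F_2[x_0,\dots,x_{m-1}]$ of a class in $\R_m$ can be reduced modulo the ideal $(x_0^2-x_0,\dots,x_{m-1}^2-x_{m-1})$ to a polynomial of degree at most one in each variable, which is by definition an $\F_2$-linear combination of elements of $\Mon$. In particular $\dim_{\F_2}\R_m \leq 2^m$. Next, for a monomial $\xv^{\uv}$ and a point $\vv\in\F_2^m$, the value $\xv^{\uv}(\vv)=\prod_{i=0}^{m-1}v_i^{u_i}$ equals $1$ precisely when $u_i\leq v_i$ for every $i$, that is, when $\indices(\xv^{\uv})\subseteq\indices(\xv^{\vv})$, or equivalently when $\uv\preceq\vv$ coordinate-wise. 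Consequently, the $2^m\times 2^m$ matrix $M$ whose $(\vv,\uv)$-entry is $\xv^{\uv}(\vv)$ is the zeta matrix of the Boolean lattice on $\F_2^m$.

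If both rows and columns are arranged by decreasing index as in the statement, the implication $\uv\preceq\vv \Rightarrow \sum_{i=0}^{m-1} u_i 2^i \leq \sum_{i=0}^{m-1} v_i 2^i$ ensures that whenever $M_{\vv,\uv}=1$ the column position of $\uv$ is at least the row position of $\vv$, so $M$ is upper triangular; the diagonal entries $M_{\vv,\vv}=\xv^{\vv}(\vv)=1$ then make $M$ invertible over $\F_2$. Invertibility of $M$ shows that the images of the elements of $\Mon$ under $\eval$ are linearly independent in $\F_2^{2^m}$, which simultaneously forces $\Mon$ to be linearly independent in $\R_m$ (so $\dim_{\F_2}\R_m=2^m$) and $\eval$ to be injective on the spanning set $\Mon$, hence on all of $\R_m$. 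A linear injection between $\F_2$-vector spaces of the same finite dimension $2^m$ is a bijection, yielding the claimed isomorphism. The only step requiring care is verifying the upper-triangular shape of $M$ under the stipulated ordering; once that is in place, the rest is routine linear algebra, so I do not anticipate any serious obstacle.
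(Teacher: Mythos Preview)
Your argument is correct. The paper does not supply a proof of this proposition---it is stated as a known fact with a citation to \cite{C10}---so there is nothing to compare against; your approach of recognizing the evaluation matrix of $\Mon$ as the unitriangular zeta matrix of the Boolean lattice (under a linear extension of the coordinate-wise order) is a standard and clean way to establish the isomorphism.
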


Now, we are ready to define the concept of monomial codes. 
\begin{Definition}[Monomial code]
Let $I\subseteq\Mon$ be a finite set of monomials in $m$ variables. 
The linear code  defined by $I$ is the vector subspace $\CC(I) \subseteq \F_2^{2^m}$ 
generated by $\{ \eval(f) ~|~ f \in I\}$ that is called \emph{monomial code}.
\end{Definition}

\begin{Proposition}[\cite{D17}]\label{prop:dim_monomial}
For all $I \subseteq \Mon$ the dimension of the monomial code $\CC(I)$ is equal to $\card{I}$.
\end{Proposition}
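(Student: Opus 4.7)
The plan is to deduce this proposition directly from the isomorphism $\eval$ established in the preceding proposition, reducing the question of dimension in $\F_2^{2^m}$ to the question of linear independence of monomials in $\R_m$.

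First I would observe that by definition $\CC(I)$ is the $\F_2$-span of the finite family $\{\eval(f) : f \in I\}$, so the inequality $\dim \CC(I) \leq \card{I}$ is immediate. The content of the proposition is the reverse inequality, which amounts to showing that the vectors $\eval(f)$, for $f$ ranging over $I$, are linearly independent over $\F_2$.

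Next I would invoke the preceding proposition, which asserts that $\eval: \R_m \to \F_2^{2^m}$ is an $\F_2$-linear isomorphism. Since isomorphisms preserve and reflect linear independence, the family $\{\eval(f) : f \in I\}$ is linearly independent in $\F_2^{2^m}$ if and only if the family $\{f : f \in I\}$ is linearly independent in $\R_m$. Thus the problem reduces to verifying that any subset of $\Mon$ is linearly independent in the quotient ring $\R_m = \F_2[x_0,\dots,x_{m-1}]/(x_0^2-x_0,\dots,x_{m-1}^2-x_{m-1})$.

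For this last step I would note that, by the relations $x_i^2 = x_i$, every element of $\R_m$ can be written uniquely as an $\F_2$-linear combination of the $2^m$ squarefree monomials $\xv^{\uv}$, $\uv \in \F_2^m$. Equivalently, $\Mon$ is an $\F_2$-basis of $\R_m$: it spans (by repeated reduction of $x_i^2$ to $x_i$) and it has exactly $2^m = \dim_{\F_2} \R_m$ elements, where the dimension count follows from the isomorphism $\eval$ itself (or, alternatively, from the fact that evaluations of these $2^m$ distinct monomials at the $2^m$ points of $\F_2^m$ yield a nonsingular matrix, essentially a Möbius/zeta correspondence on the Boolean lattice). Consequently any subset $I \subseteq \Mon$ is linearly independent, its image $\{\eval(f) : f \in I\}$ is linearly independent in $\F_2^{2^m}$, and therefore $\dim \CC(I) = \card{I}$.

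The only potential subtlety, and the step I would treat most carefully, is the assertion that $\Mon$ is a basis (not merely a spanning set) of $\R_m$. This is a standard fact about the Boolean quotient ring, but it is the place where the definition of $\R_m$ via the ideal $(x_0^2-x_0,\dots,x_{m-1}^2-x_{m-1})$ is essential; without this reduction one would have infinitely many monomials, many of which evaluate to the same function on $\F_2^m$. Once this is established, the rest of the argument is purely formal.
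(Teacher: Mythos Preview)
Your argument is correct: the isomorphism $\eval$ from the preceding proposition immediately reduces the claim to the linear independence of $\Mon$ in $\R_m$, and your justification that the $2^m$ squarefree monomials form a basis of the $2^m$-dimensional space $\R_m$ is sound. The paper itself does not supply a proof of this proposition (it is simply quoted from \cite{D17}), so there is nothing to compare against; your approach is the standard one and would be the expected proof.
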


\begin{Remark}
 The $r^{th}$ order Reed-Muller code $\RM(r,m) \eqdef \big \{\eval(g) ~|~ g \in\R_m, \deg g \leq r \big\}$ is a monomial code with dimension $k=\sum_{i=0}^r \binom{m}{i}.$
\end{Remark}


\subsection{Polar codes}

In order to define polar codes we have to introduce the concept of synthetic channels. Consider the channel transformation $W \rightarrow (W_2^{(0)} , W_2^{(1)} )$ defined in the following manner. 

 \begin{Definition}[Synthetic channels]\label{def:W_moins_plus}
Let $W$ be a BDMC with output alphabet $\Yc$and $x_1,x_2\in\F_2$ be the inputs and $y_1,y_2\in\Yc$ be the outputs of two copies of $W.$ Define two new channels
 \begin{eqnarray*}
  {W^{(1)}}(y_1,y_2|x_2) & \stackrel{def}{=} & \frac{1}{2} \sum_{x_1 \in \F_2} W(y_1|x_1) W(y_2|x_1 \oplus x_2)\\
 {W^{(0)}}(y_1,y_2,x_2|x_1) & \stackrel{def}{=} & \frac{1}{2} W(y_1|x_1) W(y_2|x_1 \oplus x_2). 
 \end{eqnarray*}
 \end{Definition}

 For any $\uv=(u_0,\dots,u_{m-1})\in \{0,1\}^m$ we define $W^{\uv}=((W^{u_{m-1}})^{\dots})^{u_0}$ as in \cite{BDOT16}. Also, we extend the notation to monomials, by $\W{f}{m}=W^{\uv}$ where $f=\xv^{\uv}\in\Mon.$ We are using the index $m$ in $\W{f}{m}$ to precisely identify the number of variables on which $f$ is expressed. For example, if $\uv=(1,0,0,1,1)$ we have $\W{f}{5}=\W{x_0x_3x_4}{5}=\Wg{\Wb{x_4}{1}}{x_0x_3}{4}=\Wg{\Wgb{\Wb{x_4}{1}}{x_3}{1}}{x_0}{3}.$ 
 
 \begin{Definition}\label{def_Bha}
Let $W$ be a BDMC with output alphabet $\Yc.$ Then the Bhattacharyya parameter of the channel $W$ is 
 \begin{equation}
 \Bha{W} = \sum_{y \in \Yc} \sqrt{W(y|0) W(y|1)}.
 \end{equation}
 \end{Definition}
 \begin{Remark}
 Let $W$ be a $\BEC(p)$ then we have that $\Bha{W^{x_0}}=\Bha{W^{(1)}}=2p-p^2$ and $\Bha{W^{\bm{1}}}=\Bha{W^{(0)}}=p^2.$ 
 \end{Remark}

 \begin{Definition}\label{def:polar_code}
 The polar code of length $n=2^m$ and dimension $k$ devised for the channel $W$ is the linear code obtained by selecting the set of $k$ synthetic channels with the smallest $\Bha{W^{\uv}}$ values among all $\uv\in\{0,1\}^m$.
 
 Moreover, we define the relation \begin{equation}f\leq g \Leftrightarrow \uv\leq \vv \Leftrightarrow \Bha{W^{{\uv}}}(p)\leq \Bha{W^{{\vv}}}(p), \forall p\in [0,1].\label{eq:rel-order}\end{equation} 

 \end{Definition}
The relation \eqref{eq:rel-order} is called \emph{universal}, i.e., two monomials $f,g$ satisfying $f\leq g$ are always comparable for any $p$ and any $m.$ This property can be used when constructing polar codes by storing a table with all such monomials. However, there might be several monomials bigger than $f$ which are not comparable pairwise (see for example \cite{BDOT16,M_2017}). Indeed, one can easily verify that $\leq$ is a well-defined order relation (reflexive, anti-symmetric, and transitive), and thus, induces a poset on the set of monomials. In some particular cases, the order $\leq$ becomes total (all elements are ordered in a chain), e.g., when $W=\BEC$ and $m\leq 4$. However, in general $\leq$ is a partial order, even in the case of $W=\BEC$ (starting from $m=5$), as pointed out in \cite{DB2019,BD2020,D17}. 
\begin{Proposition}
Let $m\ge 5$ be an integer and $W=\BEC.$ Then $\{\Mon,\leq\}$ is a poset. 
\end{Proposition}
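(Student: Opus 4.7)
My plan is to note first that $(\Mon,\leq)$ is automatically a poset for every $m$: reflexivity and antisymmetry are immediate, and transitivity follows by fixing $p$ and using transitivity of $\leq$ on $\R$. Hence the interesting content of the proposition is not the poset structure itself but the accompanying observation that, for $m\geq 5$, this poset is \emph{strictly} partial, i.e.\ $\leq$ fails to be total. Accordingly the strategy is to exhibit, for every $m\geq 5$, a pair of monomials whose Bhattacharyya polynomials cross inside $(0,1)$.

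For the base case $m=5$, I would use the BEC specialisation of Definition~\ref{def:W_moins_plus}, namely
\[
\Bha{W^{(0)}}(p)=\Bha{W}(p)^{2},\qquad \Bha{W^{(1)}}(p)=2\Bha{W}(p)-\Bha{W}(p)^{2},
\]
to recursively compute $\Bha{W^{\uv}}(p)$, a polynomial of degree $2^{5}$, for each of the $32$ vectors $\uv\in\F_2^{5}$. A finite search through the $\binom{32}{2}$ pairwise differences then produces $\uv,\vv\in\F_2^{5}$ for which $\Bha{W^{\uv}}(p)-\Bha{W^{\vv}}(p)$ takes both signs on $(0,1)$; explicit pairs of this form are already recorded in \cite{DB2019,BD2020,D17}. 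Such a sign change immediately witnesses $\xv^{\uv}\not\leq \xv^{\vv}$ and $\xv^{\vv}\not\leq \xv^{\uv}$, so $\leq$ is not a chain for $m=5$.

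To extend the conclusion to $m>5$, I would lift any such incomparable pair by padding with zeros at the high-index positions, setting $\uv'=(u_{0},\dots,u_{4},0,\dots,0)\in\F_2^{m}$ (and similarly $\vv'$); the underlying monomials are unchanged. Applying Definition~\ref{def:W_moins_plus} from the innermost index outward shows that the $m-5$ leading ``$0$-branches'' simply replace the starting $\BEC(p)$ by $\BEC(p^{2^{m-5}})$, after which the remaining five transformations reproduce the $5$-variable computation verbatim. Hence
\[
\Bha{W^{\uv'}}(p)=\Bha{W^{\uv}}\!\left(p^{2^{m-5}}\right),
\]
and since $p\mapsto p^{2^{m-5}}$ is a bijection of $[0,1]$, the sign change of $\Bha{W^{\uv}}-\Bha{W^{\vv}}$ pulls back to a sign change of $\Bha{W^{\uv'}}-\Bha{W^{\vv'}}$ on $(0,1)$. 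This delivers incomparable monomials in $\Mon$ for every $m\geq 5$.

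The main obstacle is the base case: pinning down a concrete crossing pair at $m=5$ and certifying the sign change rigorously, for instance by exhibiting two points $p_{1},p_{2}\in(0,1)$ at which the difference takes opposite signs (a computation that is finite but tedious, which is why it is usually outsourced to the cited references). The lifting step, by contrast, is a clean change of variable and needs no further work.
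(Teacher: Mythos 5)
Your proposal is correct and matches the paper's (implicit) justification: the paper states this proposition without proof, relying on \cite{DB2019,BD2020,D17} for the existence of incomparable pairs at $m=5$, and it even lists them explicitly in Section~4 (e.g., the pair $(3,16)$, i.e., $x_0x_1$ versus $x_4$), which you can use to close your base case concretely. Your lifting step via $\Bha{W^{\uv'}}(p)=\Bha{W^{\uv}}\bigl(p^{2^{m-5}}\bigr)$ is a clean way to make the extension to $m>5$ explicit, a point the paper leaves unsaid.
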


For simplification, when we refer to ordering the Bhattacharyya parameters we will just write $\Bha{W^{{\uv}}}\leq \Bha{W^{{\vv}}}.$  
\subsection{Weakly decreasing and decreasing monomial codes}
\begin{Definition}\label{def:order}
Let $f$ and $g$ be two monomials in $\Mon.$ 
\begin{itemize} 
\item The $\weako$ order between $f$ and $g$ is defined as  
\[f \weako g\quad \text{iff} \quad f|g.\]
\item The $\preceq$ order between $f$ and $g$ is defined as
				\begin{itemize}
				\item when $\deg(f)=\deg(g)=s$ and $f=x_{i_1}\dots x_{i_s}$, $g=x_{j_1}\dots x_{j_s}$ we have \[f\preceq g\quad \text{iff}\quad \forall\;1\leq\ell\leq s\quad  i_\ell \le j_\ell.\]
				\item when $\deg(f)<\deg(g)$ we have \[f\preceq g\quad \text{iff}\quad \exists g^*\in \Mon\;\text{s.t.}\; f\preceq g^*\weako g.\]
				\end{itemize}
\end{itemize}			
\end{Definition}

The two order relations $\weako$ and $\preceq$ are well defined. $\weako$ was already used in the case of polar codes but in a completely different context by Mori and Tanaka in \cite{MT09}. In their case the purpose was to tighten the bounds of the error block probability of a polar code designed for the $\BEC$ family.

 Notice that $\weako$ is weaker than $\preceq$, meaning that $\forall f,g\in\Mon f\weako g\Rightarrow f\preceq g.$ The converse is not always true: taking, for example, $f=x_0x_2$ and $g=x_1x_2$ it follows by definition that $f\preceq g$ but $f\not\weako g.$ We also remark that $\bm{1}$ is the smallest element both for $\preceq$ and for $\weako$, and we have
\[\bm{1}\preceq x_0\preceq x_1\preceq \dots \preceq x_{m-1}.\]

\begin{Definition}\label{def:weak_set}
Let $f$ and $g$ be two monomials in $\Mon$ such that $f\preceq g$ and $I\subset \Mon.$
\begin{itemize}
\item We define the closed interval $[f,g]_{\preceq}=\{h\in\Mon\;|\;f\preceq h\preceq g\}.$
\item $I\in\Mon$ is called a decreasing set if and only if ($f \in I$ and $g \preceq f $) implies $g \in I$.
\item Let $I\in\Mon$ be a decreasing set. Then $\CC(I)$ is called decreasing monomial code.
\end{itemize}
\end{Definition}

Polar codes were recently related to network theory. In \cite{BD2020} the authors make a connection between the Bhattacharyya parameter of a synthetic channel and the reliability polynomial of a two-terminal network. Following on the same path, we introduce in the next subsection all the required preliminaries in reliability and network theory.

\subsection{Two-terminal networks}

\begin{Definition}
Let $n$ be a strictly positive integer. We say that $\Ns$ is a two-terminal network (2TN) of size $n$ if $\Ns$ is a network made of $n$ identical devices, that has two distinct terminals: an input $S$, and an output $T$. 
\end{Definition}

To any $\Ns$ we associate three parameters: {\it width} ($w$), {\it length} ($l$), and {\it size} ($n$), where $w$ is the size of a ``minimal cut" separating $S$ from $T$, and $l$ is the size of a ``minimal path" from $S$ to $T$. The size of $\Ns$ satisfies
\begin{equation}
n\ge wl \label{eq:nwl}
\end{equation} (see Theorem 3 in \cite{1956_MS_1}). 
When $n=wl$ we say that $\Ns$ is a minimal 2TN \cite{1956_MS_1}.

The composition of $\Ns_1$ and $\Ns_2$ can be defined as in \cite{1956_MS_1}. The resulting network is obtained by replacing each device in $\Ns_1$ by a copy of $\Ns_2$. 
We will denote a composition by $\Css$, the simplest possible being two devices in series $\Css^{(0)}$, and two devices in parallel $\Css^{(1)}$. The composition of $\Css^{(0)}$ with $\Css^{(1)}$ is $\Css^{\uv}=\Css^{(0)}\bullet\Css^{(1)}$,  where $\uv=(0,1)$. The set of all $2^m$-size compositions will be denoted by $\CS{2^m}$, and the set of all compositions of width $2^i$ and length $2^{m-i}$ by $\CN{2^i}{2^{m-i}}$ (see Figure \ref{fig:Net-circ}).

\begin{Proposition}[\cite{2018_DCHGB_J}]\label{pr:param_comp}
Let $m>0$ and $\Css^{\uv}\in \CS{2^m}.$ Then $\Css^{\uv}$ is a minimal 2TN of size $2^m,$ length $l=2^{m-|\uv|}$ and width $w=2^{|\uv|}.$ We also have
$\CS{2^m}=\bigcup_{i=0}^{m}\CN{2^i}{2^{m-i}}.$
\end{Proposition}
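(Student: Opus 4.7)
The plan is to prove the proposition by induction on $m$, leveraging the fact that size, width, and length are all multiplicative under the composition operation $\bullet$. The key preliminary ingredient is the observation that for any two two-terminal networks $\Ns_1, \Ns_2$, one has $n(\Ns_1 \bullet \Ns_2) = n(\Ns_1)\,n(\Ns_2)$, $w(\Ns_1 \bullet \Ns_2) = w(\Ns_1)\,w(\Ns_2)$ and $l(\Ns_1 \bullet \Ns_2) = l(\Ns_1)\,l(\Ns_2)$. The size identity is immediate: substituting each of the $n_1$ devices of $\Ns_1$ by a copy of $\Ns_2$ produces $n_1 n_2$ devices. For length (resp.\ width), any minimal path (resp.\ cut) of $\Ns_1 \bullet \Ns_2$ is obtained by taking a minimal path (resp.\ cut) of $\Ns_1$ and replacing each device on it by a minimal path (resp.\ cut) of its copy of $\Ns_2$; this follows from the fact that $S$--$T$ connectivity in $\Ns_1 \bullet \Ns_2$ coincides with $S$--$T$ connectivity in $\Ns_1$ after declaring a macro-device ``alive'' iff the copy of $\Ns_2$ substituted into it is internally $S$--$T$ connected.

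With multiplicativity in hand, the induction is clean. The base case $m=1$ is the direct check that $\Css^{(0)}$ (two devices in series) has $n=2,\, l=2,\, w=1$ while $\Css^{(1)}$ (two devices in parallel) has $n=2,\, l=1,\, w=2$; in both instances $l=2^{1-u_0},\; w=2^{u_0}$ and $wl=n$. For the inductive step, I write $\Css^{\uv}=\Css^{\uv'}\bullet \Css^{(u_{m-1})}$ with $\uv'=(u_0,\dots,u_{m-2})\in\{0,1\}^{m-1}$. By the inductive hypothesis $\Css^{\uv'}$ has size $2^{m-1}$, width $2^{|\uv'|}$ and length $2^{m-1-|\uv'|}$, and multiplying by the base values for $\Css^{(u_{m-1})}$ yields $n(\Css^{\uv})=2^m$, $w(\Css^{\uv})=2^{|\uv'|+u_{m-1}}=2^{|\uv|}$ and $l(\Css^{\uv})=2^{m-|\uv|}$. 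Minimality is then immediate from \eqref{eq:nwl} applied with equality, since $w\cdot l = 2^{|\uv|}\cdot 2^{m-|\uv|}=2^m=n$.

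For the union identity, the inclusion $\CS{2^m}\subseteq \bigcup_{i=0}^{m}\CN{2^i}{2^{m-i}}$ follows directly: every element of $\CS{2^m}$ is of the form $\Css^{\uv}$ for some $\uv\in\{0,1\}^m$, and by the previous paragraph such an element belongs to $\CN{2^{|\uv|}}{2^{m-|\uv|}}$; since $|\uv|$ ranges over $\{0,1,\dots,m\}$ as $\uv$ varies, the union on the right-hand side exhausts $\CS{2^m}$. The reverse inclusion holds by definition of $\CN{2^i}{2^{m-i}}$ as a subset of $\CS{2^m}$.

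The principal obstacle I anticipate is the rigorous justification of multiplicativity of $w$ and $l$ under $\bullet$. For the series--parallel compositions generated from $\Css^{(0)}$ and $\Css^{(1)}$ the product structure of minimal paths and cuts is transparent by induction, but in the fully general setting this is really the content of the Moore--Shannon analysis of iterated compositions in~\cite{1956_MS_1}. Since that paper is already invoked for the composition framework, I would cite it for the multiplicative formulas rather than reprove them from scratch; the remainder of the argument is then a short induction, and the union statement essentially restates what the induction delivers.
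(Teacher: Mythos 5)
This proposition is imported from \cite{2018_DCHGB_J} and stated without proof in the paper, so there is no in-paper argument to compare against; judged on its own, your proof is correct and is essentially the standard one. The induction on $m$ with base cases $\Css^{(0)}$ and $\Css^{(1)}$, driven by multiplicativity of size, width and length under $\bullet$, is exactly the Moore--Shannon analysis of iterated compositions \cite{1956_MS_1} that the cited reference builds on, and you correctly isolate the only non-trivial step: showing that $w$ and $l$ are \emph{exactly} multiplicative (the upper bounds $w\le w_1w_2$, $l\le l_1l_2$ come from the explicit product constructions, while the matching lower bounds need your observation that a disconnecting set must internally disconnect a set of macro-devices forming a cut of $\Ns_1$, each costing at least $w_2$ removals, and dually for paths). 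Note that the Moore--Shannon inequality \eqref{eq:nwl} alone cannot substitute for those lower bounds ($n\ge wl$ together with $w\le 2^{|\uv|}$ and $l\le 2^{m-|\uv|}$ does not force equality), so it is right that you use \eqref{eq:nwl} only at the end to conclude minimality from $wl=n$, not to derive the values of $w$ and $l$. The union identity then indeed just repackages the computed parameters, since $|\uv|$ takes every value in $\{0,\dots,m\}$ and $\CN{2^i}{2^{m-i}}\subseteq\CS{2^m}$ by definition.
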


\begin{Theorem}[\cite{DB2019}]There is a natural bijection between $\CS{2^m}$ and the set of all $W^{\uv}$, for any fixed positive integer $m.$  
\end{Theorem}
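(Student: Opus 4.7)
The plan is to construct the bijection explicitly by matching the recursive structures on both sides: a composition $\Css^{\uv} \in \CS{2^m}$ is built by a length-$m$ sequence of ``series'' ($u_i=0$) and ``parallel'' ($u_i=1$) compositions, which mirrors exactly the recursive channel transformation $W\to W^{(0)}$ and $W\to W^{(1)}$ from Definition \ref{def:W_moins_plus}. Concretely, I would define the map
\[
\Phi: \CS{2^m}\longrightarrow \{W^{\uv}\,|\,\uv\in\{0,1\}^m\},\qquad \Phi(\Css^{\uv})=W^{\uv}.
\]
By the inductive construction of $\CS{2^m}$ using the operator $\bullet$, every element of $\CS{2^m}$ carries a unique label $\uv\in\{0,1\}^m$, and Proposition \ref{pr:param_comp} (together with $\CS{2^m}=\bigcup_{i=0}^m\CN{2^i}{2^{m-i}}$) gives $|\CS{2^m}|=2^m$; since there are also exactly $2^m$ synthetic channels $W^{\uv}$, matching cardinalities will reduce the problem to either injectivity or surjectivity.

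The proof then proceeds by induction on $m$. For the base case $m=1$, I would verify directly that $\Css^{(0)}$ (two identical devices in series) corresponds to $W^{(0)}$ and $\Css^{(1)}$ (two in parallel) corresponds to $W^{(1)}$; this can be certified either structurally, or more quantitatively in the $\BEC(p)$ setting by checking that the series unreliability polynomial equals $\Bha{W^{(0)}}(p)=p^2$ and the parallel unreliability polynomial equals $\Bha{W^{(1)}}(p)=2p-p^2$ (as noted in the Remark after Definition \ref{def_Bha}). For the inductive step, I would write $\Css^{\uv}=\Css^{\uv'}\bullet \Css^{(u_{m-1})}$ with $\uv'=(u_0,\ldots,u_{m-2})$ and compare this with the unfolding $W^{\uv}=\bigl(W^{\uv'}\bigr)^{(u_{m-1})}$. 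The key observation is that the network operator $\bullet$ replaces each atomic device of $\Css^{\uv'}$ by a copy of $\Css^{(u_{m-1})}$, and this is precisely the combinatorial counterpart of applying the transformation $W\mapsto W^{(u_{m-1})}$ to the underlying channel at each stage; the inductive hypothesis then gives $\Phi(\Css^{\uv})=W^{\uv}$.

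Injectivity follows since distinct binary labels $\uv\ne\vv$ produce distinct compositions: one can read off $\uv$ from $\Css^{\uv}$ by recursively decomposing the network and recording at each level whether the top-level composition is series or parallel. Combined with the cardinality count, this upgrades to a bijection. The main obstacle is to make rigorous in what sense $\Phi$ is ``natural,'' i.e., that the correspondence preserves the relevant reliability-theoretic quantities: one wants not only a set-theoretic bijection but also the identification $\Bha{W^{\uv}}(p)=U_{\Css^{\uv}}(p)$, where $U$ denotes the unreliability polynomial. I would handle this in parallel with the induction, using the fact that the $\BEC$ reliability transformations satisfy $\Bha{W^{(0)}}=\Bha{W}^2$ and $\Bha{W^{(1)}}=2\Bha{W}-\Bha{W}^2$, which are exactly the reliability formulas for the series and parallel composition of independent identical devices; hence the two recursions coincide term by term, making the bijection canonical.
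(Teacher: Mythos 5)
The paper does not prove this statement itself: it is quoted from \cite{DB2019} and only illustrated (Figure 3), so there is no internal argument to measure yours against. Your reconstruction is the natural one and its skeleton is sound: both $\CS{2^m}$ and the family $\{W^{\uv}\}$ are generated by iterating a two-letter alphabet of operations, the common labelling by $\uv\in\{0,1\}^m$ furnishes the bijection, and the substantive content is that the two recursions compute the same polynomial, which follows from the composition law $\Rel(\Ns_1\bullet\Ns_2;p)=\Rel\bigl(\Ns_1;\Rel(\Ns_2;p)\bigr)$ together with the base identities $\Rel(\Css^{(0)};p)=p^2=\Bha{W^{(0)}}(p)$ and $\Rel(\Css^{(1)};p)=1-(1-p)^2=\Bha{W^{(1)}}(p)$. (Note these are \emph{reliability}, not unreliability, polynomials in the paper's convention: the erasure probability of the channel plays the role of the closing probability of a device, cf.\ Theorem \ref{thm:reliab_comp}.)

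Two points need repair. First, your displayed inductive step is inconsistent with the paper's convention $W^{\uv}=((W^{u_{m-1}})^{\cdots})^{u_0}$, in which $u_{m-1}$ is the \emph{innermost} (first-applied) transformation; hence $\Css^{\uv}=\Css^{\uv'}\bullet\Css^{(u_{m-1})}$ must be paired with $W^{\uv}=\bigl(W^{(u_{m-1})}\bigr)^{\uv'}$, not with $\bigl(W^{\uv'}\bigr)^{(u_{m-1})}$ as written. Your verbal justification (``replace the underlying channel by $W^{(u_{m-1})}$'') already says the right thing, but with the formula as stated the two function compositions are taken in opposite orders and the induction does not close. Second, your cardinality-plus-injectivity argument only establishes that distinct labels give distinct \emph{compositions}; for a bijection onto the \emph{set} of synthetic channels one also needs the $W^{\uv}$ to be pairwise distinct, i.e., the polynomials $\Bha{W^{\uv}}$ to be pairwise distinct as polynomials (for a fixed value of $p$ two such channels can coincide --- that is exactly the crossing phenomenon behind the non-comparable pairs discussed later in the paper). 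So either the bijection should be read at the level of indexed families, or the distinctness of the $2^m$ polynomials should be argued explicitly.
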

In figure \ref{fig:Circ-Net-3} we illustrate the bijection between the two aforementioned sets. More significant is the equality between the reliability polynomial of a composition $\Css^{\uv}$ and the Bhattacharyya parameter of $W^{\uv}$, fact that is visible from figure \ref{fig:Net-circ} and proved in the next paragraph.

\paragraph{\textit{Reliability polynomial}} The reliability of $\Ns$ is defined as the probability that $S$ and $T$ are connected (also known as $s,t$-connectivity) \cite{C87}. One of the most common hypothesis considered in network theory is that devices are uniformly and identically suppose to close with a probability $p\in[0,1].$ Hence, the reliability of $\Ns$, denoted by $\Rel(\Ns;p)$, can be expressed as a polynomial
\begin{align}\Rel(\Ns;p)&=\sum\nolimits_{i=0}^nN_i(\Ns)\;p^i(1-p)^{n-i}.\label{rel:N_form}
\end{align}

\begin{figure}[!h]
    \begin{subfigure}{.52\textwidth}
 \centering
  \includegraphics[width=\textwidth]{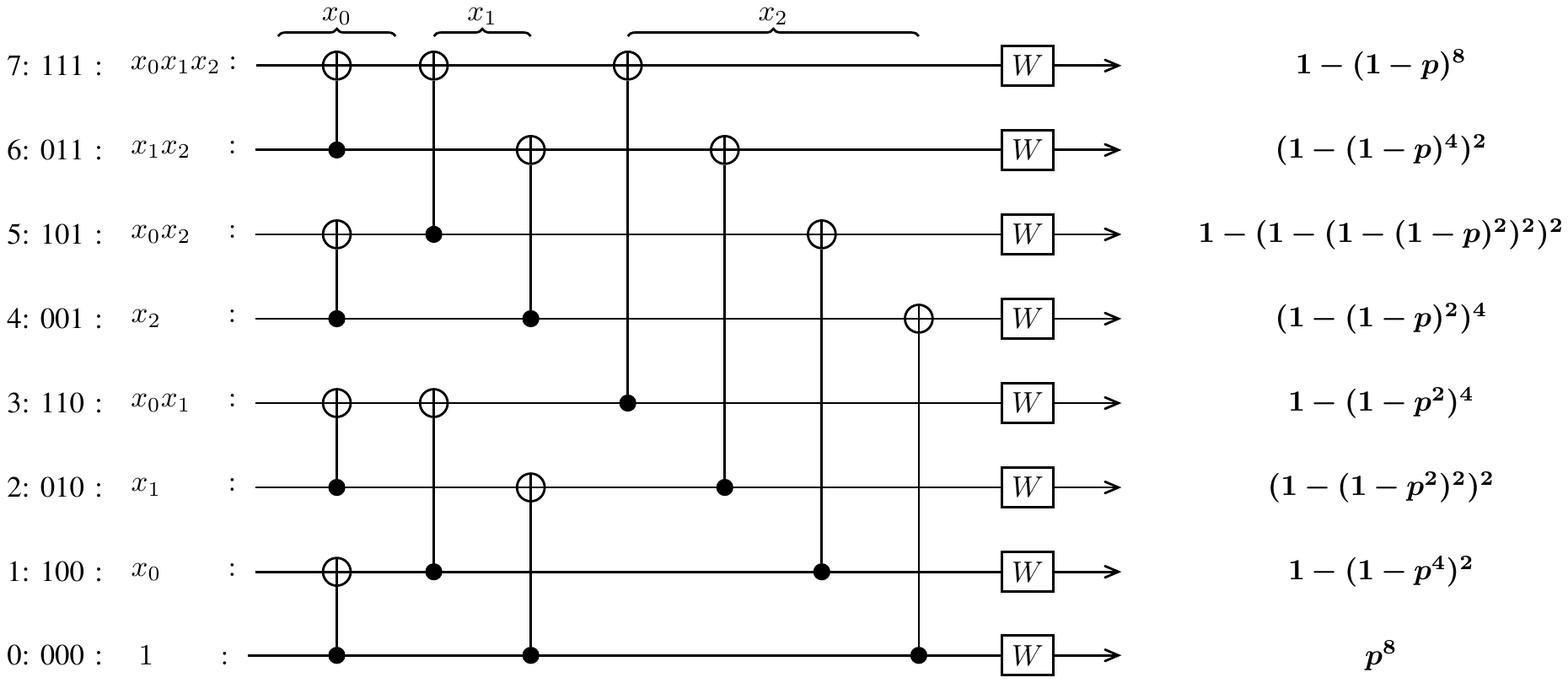}  
  \caption{$W^{\uv}$ and $\Bha{W^{\uv}}$}
  \label{fig:Bha-circuit}
\end{subfigure}
    \begin{subfigure}{.53\textwidth}
\centering
    \includegraphics[width=\textwidth]{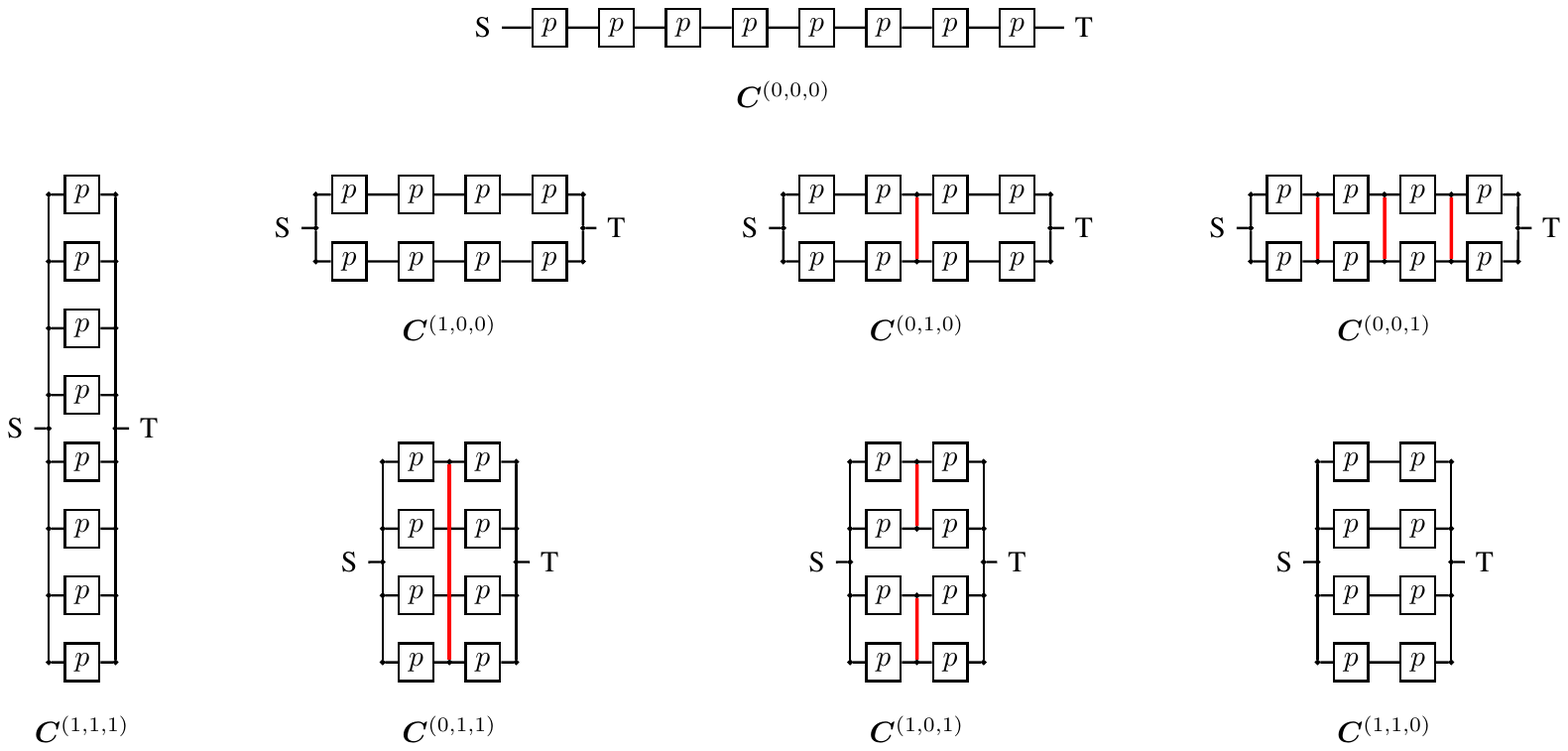}
    \caption{$\CS{3}$}
    \label{fig:Net-circ}
    \end{subfigure}
\caption{Combined circuit as defined by Arikan \cite{A09}, the Bhattacharyya parameter of the corresponding synthetic channels and the compositions in $\CS{3}.$}    \label{fig:Circ-Net-3}
\end{figure}

The coefficients $N_i(\Ns)$ represent the number of paths from $S$ to $T$ of length $i$. Several properties regarding the coefficients $N_i(\Ns)$, as well as complementarity relations between a 2TN $\Ns$ and its dual $\Ns$, are detailed in \cite{CD2020}. 



\subsection{Bhattacharyya parameters and reliability polynomials}
\begin{Theorem}[\cite{DB2019}]\label{thm:reliab_comp}
Let $m>0$ , $\uv\in \{0,1\}^m$, and $W=\BEC(p)$
\begin{equation}
\Bha{W^{{\uv}}}(p)=\Rel(\Css^{\uv};p)
\end{equation}
where $\Rel(\Css^{(0)};p)=p^2$ and $\Rel(\Css^{(1)};p)=1-(1-p)^2$.
\end{Theorem}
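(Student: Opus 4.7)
The plan is to proceed by induction on $m$, exploiting on one side the recursive definition of the synthetic channels $W^{\uv}$ and on the other side the recursive definition of the compositions $\Css^{\uv}$ via the $\bullet$ operation. The key bridge will be the classical fact that reliability polynomials commute with composition, namely
\begin{equation*}
\Rel(\Ns_1\bullet\Ns_2;p)=\Rel\bigl(\Ns_1;\,\Rel(\Ns_2;p)\bigr),
\end{equation*}
which reflects the fact that in $\Ns_1\bullet\Ns_2$ each device of $\Ns_1$ is replaced by an independent copy of $\Ns_2$, hence functions with probability exactly $\Rel(\Ns_2;p)$.

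\emph{Base case} ($m=1$). A direct computation from Definition \ref{def_Bha} for $W=\BEC(p)$ yields $\Bha{W^{(0)}}(p)=p^2$ and $\Bha{W^{(1)}}(p)=2p-p^2$, while the trivial series and parallel networks satisfy $\Rel(\Css^{(0)};p)=p^2$ and $\Rel(\Css^{(1)};p)=1-(1-p)^2=2p-p^2$. The two sides agree entry-by-entry.

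\emph{Inductive step.} Suppose the identity holds for all binary vectors of length $m$. For $\uv=(u_0,u_1,\dots,u_m)\in\{0,1\}^{m+1}$ write $\vv=(u_1,\dots,u_m)$. From the definition $W^{\uv}=(W^{\vv})^{(u_0)}$, and because the class of BECs is stable under the polarization transformation, $W^{\vv}$ is itself a $\BEC$ with erasure probability $q:=\Bha{W^{\vv}}(p)$. The base case applied to this induced BEC then gives
\begin{equation*}
\Bha{W^{\uv}}(p)=
\begin{cases}
q^2, & u_0=0,\\
2q-q^2, & u_0=1,
\end{cases}
\qquad\text{i.e.,}\qquad \Bha{W^{\uv}}(p)=\Rel\bigl(\Css^{(u_0)};q\bigr).
\end{equation*}
On the network side, the composition convention gives $\Css^{\uv}=\Css^{(u_0)}\bullet\Css^{\vv}$, so by the composition identity for reliability polynomials and the inductive hypothesis $q=\Rel(\Css^{\vv};p)$ we obtain
\begin{equation*}
\Rel(\Css^{\uv};p)=\Rel\bigl(\Css^{(u_0)};\Rel(\Css^{\vv};p)\bigr)=\Rel\bigl(\Css^{(u_0)};q\bigr)=\Bha{W^{\uv}}(p),
\end{equation*}
closing the induction.

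\emph{Where the difficulty lies.} The algebraic manipulation is light; the real work sits in two places. First, one must verify that $W^{\vv}$ is genuinely a $\BEC$ whose erasure probability coincides with its Bhattacharyya parameter — this is the classical stability of the BEC family under polar transforms and can be derived from Definition \ref{def:W_moins_plus} by a short case analysis on the output alphabet $\{0,1,?\}^2$. Second, and more importantly, one must justify the substitution rule $\Rel(\Ns_1\bullet\Ns_2;p)=\Rel(\Ns_1;\Rel(\Ns_2;p))$; this follows from the $\{0,1\}$-valued nature of the state of each copy of $\Ns_2$ together with mutual independence of the copies, so that an expansion of $\Rel(\Ns_1;\cdot)$ in the form \eqref{rel:N_form} can be evaluated at the Bernoulli parameter $\Rel(\Ns_2;p)$. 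Once these two ingredients are in place the inductive argument is essentially mechanical, and it is precisely the alignment of conventions (which outer index is applied, and which of $\Css^{(0)},\Css^{(1)}$ corresponds to $W^{(0)},W^{(1)}$) that requires the most care to state correctly.
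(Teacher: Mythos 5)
Your proof is correct: the paper states this theorem as a citation to \cite{DB2019} without reproducing a proof, and your induction on $m$ — peeling off the outermost index $u_0$, using the stability of the $\BEC$ family under the polar transform (so $\Bha{W^{\vv}}$ is the erasure probability of $W^{\vv}$), and matching it against the Moore--Shannon substitution rule $\Rel(\Ns_1\bullet\Ns_2;p)=\Rel(\Ns_1;\Rel(\Ns_2;p))$ — is exactly the intended identification of the two recursions. Your conventions ($W^{\uv}=(W^{\vv})^{(u_0)}$ with $\vv=(u_1,\dots,u_m)$, $\Css^{\uv}=\Css^{(u_0)}\bullet\Css^{\vv}$, and $\Rel(\Css^{(0)};p)=p^2$, $\Rel(\Css^{(1)};p)=2p-p^2$) are all consistent with the paper's, so nothing further is needed.
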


 \begin{Proposition}[\cite{WFS17}]Let $m>0$ and $\uv\in \{0,1\}^m.$ Then
\begin{equation}
\Bha{W^{\overline{\uv}}}(p)=1-\Bha{W^{{\uv}}}(1-p).\label{eq:dual}
\end{equation}
\end{Proposition}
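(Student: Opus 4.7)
I would prove the identity by induction on $m$, using the recursive construction of synthetic channels over the $\BEC$. Since both $\Bha{W^{(0)}}(p)=p^2$ and $\Bha{W^{(1)}}(p)=1-(1-p)^2$ depend only on $p$, and since the $\BEC$-to-$\BEC$ transformation preserves the structure, the Bhattacharyya parameter of $W^{\uv}$, with $\uv=(u_0,u_1,\dots,u_{m-1})$, obeys the recursion
\begin{equation*}
\Bha{W^{\uv}}(p)=\begin{cases}\Bha{W^{\uv'}}(p)^2,& u_0=0,\\[2pt] 1-\bigl(1-\Bha{W^{\uv'}}(p)\bigr)^2,& u_0=1,\end{cases}
\end{equation*}
where $\uv'=(u_1,\dots,u_{m-1})$ and the outermost transformation is indexed by $u_0$ (so that $\overline{\uv}=(\overline{u_0},\overline{\uv'})$). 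The base case $m=0$ is $\Bha{W}(p)=p$, and $1-\Bha{W}(1-p)=1-(1-p)=p$, which matches.

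For the inductive step, suppose the identity holds for $\uv'$. If $u_0=0$ (hence $\overline{u_0}=1$), then by the recursion and the inductive hypothesis $1-\Bha{W^{\overline{\uv'}}}(p)=\Bha{W^{\uv'}}(1-p)$, so
\begin{equation*}
\Bha{W^{\overline{\uv}}}(p)=1-\bigl(1-\Bha{W^{\overline{\uv'}}}(p)\bigr)^2=1-\Bha{W^{\uv'}}(1-p)^2=1-\Bha{W^{\uv}}(1-p).
\end{equation*}
The symmetric case $u_0=1$ is handled identically: $\Bha{W^{\overline{\uv}}}(p)=\Bha{W^{\overline{\uv'}}}(p)^2=\bigl(1-\Bha{W^{\uv'}}(1-p)\bigr)^2=1-\Bha{W^{\uv}}(1-p)$, where the last equality just reads the recursion backwards at $1-p$.

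\textbf{A network-theoretic view.} The same identity can be obtained for free from the bijection of Theorem \ref{thm:reliab_comp}, which gives $\Bha{W^{\uv}}(p)=\Rel(\Css^{\uv};p)$. Bitwise complementation of $\uv$ swaps every series block ($0$) for a parallel block ($1$) at each nesting level, so $\Css^{\overline{\uv}}$ is exactly the dual network $(\Css^{\uv})^{*}$. The classical series--parallel reliability duality $\Rel(\Ns^{*};p)=1-\Rel(\Ns;1-p)$ (cited from \cite{CD2020} in the 2TN paragraph) then yields the claim immediately. I would probably present the induction as the primary argument, since it is self-contained and uses only the recursion already introduced in Definition \ref{def_Bha}, and mention the dual-network interpretation as a remark tying the statement to the complementarity theme used later for $\Avr$.

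\textbf{Expected obstacle.} There is essentially none: the only nontrivial point is keeping track of which end of $\uv$ indexes the outermost channel transformation, so that the complementation $\overline{\uv}=(\overline{u_0},\overline{\uv'})$ is compatible with the recursion. Once the indexing convention is fixed as in Definition \ref{def:W_moins_plus}, both cases of the induction reduce to the elementary algebraic identity $1-(1-x)^{2}\big|_{x=1-y}=1-y^{2}$.
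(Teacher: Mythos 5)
Your proposal is correct. Note, however, that the paper does not prove this proposition at all: it is imported verbatim from \cite{WFS17}, and the only commentary the authors add is the remark immediately following it, namely that the identity expresses the duality of the networks $\Css^{\overline{\uv}}$ and $\Css^{\uv}$ --- which is precisely your second, network-theoretic argument. Your induction on $m$ is a valid, self-contained derivation: the recursion $\Bha{W^{\uv}}(p)=\Bha{W^{\uv'}}(p)^2$ for $u_0=0$ and $\Bha{W^{\uv}}(p)=1-\bigl(1-\Bha{W^{\uv'}}(p)\bigr)^2$ for $u_0=1$ matches the paper's convention $W^{\uv}=((W^{u_{m-1}})^{\dots})^{u_0}$ (outermost transform indexed by $u_0$), the base case $\Bha{W}(p)=p$ is right for the $\BEC$, and both cases of the step check out. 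The one hypothesis you use tacitly --- that each $W^{\uv'}$ is again a $\BEC$ whose erasure probability equals $\Bha{W^{\uv'}}(p)$, so that the one-step formulas can be composed --- is exactly the fact the paper invokes when it says all synthetic channels of a $\BEC$ are $\BEC$s; it is worth stating explicitly as the justification for the recursion. Either of your two arguments would serve as a complete proof where the paper offers only a citation.
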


This condition expresses the duality of the two corresponding networks, namely $\Css^{\overline{\uv}}$ is the dual of $\Css^{{\uv}}$ (see \cite{CD2020,DB2019}). Notice that by \eqref{eq:dual} one has to analyze only $\uv$ with $|\uv| \leq m/2$.


\section{Polar codes are strongly decreasing monomial code over the BEC}

\begin{Definition}
The $\preceq_d$ order between $f$ and $g$ is defined as
				\begin{itemize}
				\item when $\deg(f)=\deg(g)=s$ and $f=x_{i_1}\dots x_{i_s}$, $g=x_{j_1}\dots x_{j_s}$ we have \[f\preceq_d g\quad \text{iff}\quad \forall\ell \in \{1,\dots,s\} \text{we have}  \sum\limits_{k=0}^{\ell}i_{s-k} \le\sum\limits_{k=0}^{\ell} j_{s-k}
.\]
				\item when $\deg(f)<\deg(g)$ we have \[f\preceq_d g\quad \text{iff}\quad \exists g^*\in \Mon\;\text{s.t.}\; f\preceq_d g^*\weako g.\]
\end{itemize}
\end{Definition}

\begin{Definition}\label{def:strong_set}
Let $f$ and $g$ be two monomials in $\Mon$ such that $f\preceq g$ and $I\subset \Mon.$
\begin{itemize}
\item We define the closed interval $[f,g]_{\preceq_d}=\{h\in\Mon\;|\;f\preceq_d h\preceq_d g\}.$
\item $I\in\Mon$ is called a strongly decreasing set if and only if ($f \in I$ and $g \preceq_d f $) implies $g \in I$.
\item Let $I\in\Mon$ be a strongly decreasing set. Then $\CC(I)$ is called strongly decreasing monomial code.
\end{itemize}
\end{Definition}

\begin{Lemma}
The order $\preceq_d$ is a well-defined order relation and $\{\Mon,\preceq_d\}$ forms a Poset.
\end{Lemma}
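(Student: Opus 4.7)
The plan is to begin by deriving a single non-recursive characterization of $\preceq_d$ that handles both degree cases uniformly. Writing $f = x_{i_1}\cdots x_{i_r}$ and $g = x_{j_1}\cdots x_{j_s}$ with indices strictly increasing, I claim $f \preceq_d g$ if and only if $r \le s$ and, for every $\ell \in \{0, 1, \dots, r-1\}$,
\[
\sum_{k=0}^{\ell} i_{r-k} \;\le\; \sum_{k=0}^{\ell} j_{s-k}.
\]
This is the dominance order on the associated partitions inside the $r \times (m-r)$ grid, extended across different degrees. When $r = s$ this coincides with the definition. When $r < s$, the forward direction follows by unrolling the recursion to a $g^* \mid g$ of degree $r$ with $f \preceq_d g^*$ in the equal-degree sense, and noting that $\indices(g^*) \subseteq \indices(g)$ forces the top-$(\ell+1)$ index sum of $g^*$ to be at most that of $g$. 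For the reverse direction, take the canonical witness $g^* = x_{j_{s-r+1}}\cdots x_{j_s}$, whose top-$(\ell+1)$ sums coincide exactly with those of $g$ for every $\ell < r$; the hypothesis then gives $f \preceq_d g^*$ directly, and hence $f \preceq_d g$.

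With this characterization in hand, the poset axioms become short. Reflexivity is immediate. For antisymmetry, note that $f \preceq_d g$ forces $\deg f \le \deg g$, so the two relations $f \preceq_d g$ and $g \preceq_d f$ together force equal degree $r$; the top-$(\ell+1)$ index sums then coincide for $f$ and $g$ at every $\ell$, and successive backward differences give $i_{r-\ell} = j_{r-\ell}$, hence $f = g$. For transitivity, when $f \preceq_d g \preceq_d h$ with degrees $r \le s \le t$, one chains the partial top-sum inequalities at each $\ell \in \{0,\dots,r-1\}$ through $g$, yielding $f \preceq_d h$.

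The main obstacle I anticipate is verifying the equivalence of the recursive and non-recursive descriptions: one must be sure that no ``stronger'' choice of $g^*$ in the recursion would give a finer condition than the top-sum one. This is resolved by observing that $x_{j_{s-r+1}}\cdots x_{j_s}$ is the submonomial of $g$ with the largest top-$(\ell+1)$ sums among all degree-$r$ divisors of $g$, hence the easiest to dominate; any other valid $g^*$ would impose at least as strict a condition on $f$. Once this is settled, the remainder of the proof reduces to the classical fact that dominance on partitions is a partial order.
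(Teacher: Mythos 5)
Your proof is correct. The paper itself offers no argument here --- it simply declares the proof trivial --- so your non-recursive characterization is genuine added value rather than a restatement. The two directions of the equivalence are both sound: for the forward direction you implicitly use the fact that if $A\subseteq B$ are index sets then the $k$-th largest element of $A$ is at most the $k$-th largest element of $B$, which gives the elementwise domination of the top-$(\ell+1)$ sums of any divisor $g^*\weako g$ by those of $g$; for the reverse direction the canonical witness $g^*=x_{j_{s-r+1}}\cdots x_{j_s}$ works because its top sums agree with those of $g$, and your observation that this witness has the largest top sums among all degree-$r$ divisors of $g$ correctly disposes of the worry that the existential quantifier over $g^*$ might encode a finer condition. (One minor point worth noting: the recursion in the paper's definition could in principle pass through an intermediate $g^*$ of degree strictly between $\deg f$ and $\deg g$, but since a divisor of a divisor of $g$ is again a divisor of $g$, unrolling always terminates in an equal-degree comparison against some degree-$r$ divisor of $g$, so your characterization does capture the full relation.) With the characterization in hand, reflexivity, antisymmetry (via equal degrees and successive differences of the partial sums), and transitivity (via chaining the inequalities for $\ell\le r-1$) are exactly as you state. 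Your version also tacitly repairs the index range in the paper's displayed definition, where $\ell$ is allowed to reach $s$ even though $i_{s-\ell}$ would then be $i_0$, which does not exist; your range $\ell\in\{0,\dots,r-1\}$ is the intended reading.
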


The proof of this lemma is trivial. 

\begin{figure}[!h]
    \centering
    \includegraphics[width=\textwidth]{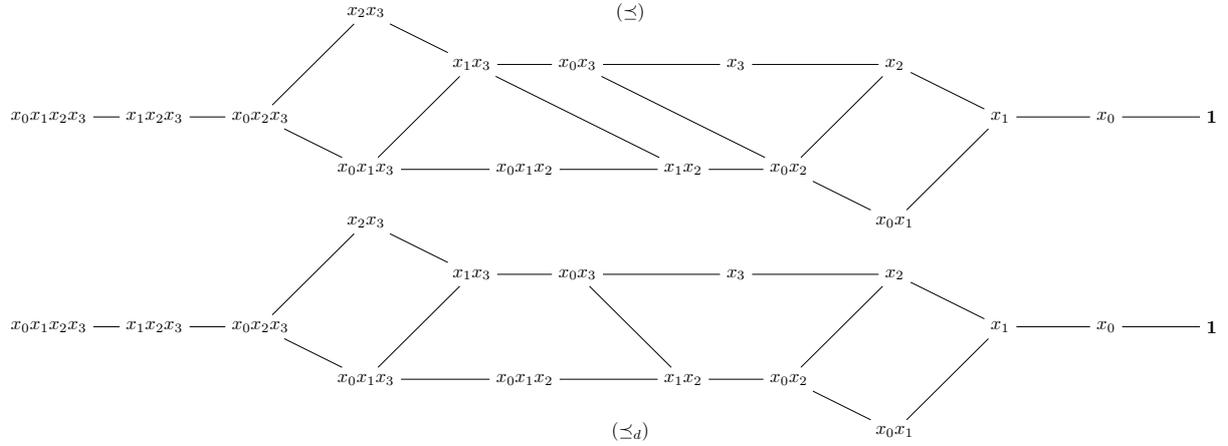}
    \caption{The two order relations $\preceq$ and $\preceq_d$ for $m=4.$}
    \label{fig:Poset}
\end{figure}

\begin{Remark}
Notice that $x_{i_1}\dots x_{i_s}  \preceq x_{j_1}\dots x_{j_s}$ implies that $x_{i_1}\dots x_{i_s}  \preceq_d x_{j_1}\dots x_{j_s}.$ The converse is no longer true, take for example the monomials $x_0x_3$ and $x_1x_2.$ 
\end{Remark}
\begin{Proposition}\label{lem:ordre_domin}
Let $f$ and $g$ be two monomial having the same degree and $x_h$ be such that $x_h\not| f$ and $x_h\not| g.$ Then we have \begin{equation}
    f\preceq_d g\text{ iff }x_hf\preceq_d x_hg.
\end{equation}  
\end{Proposition}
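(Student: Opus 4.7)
Write $I = \indices(f) = \{i_1 < \cdots < i_s\}$ and $J = \indices(g) = \{j_1 < \cdots < j_s\}$, and for any finite subset $X \subseteq \Nat$ and any integer $\ell \geq 0$ let $T_\ell(X)$ denote the sum of the $\ell$ largest elements of $X$, with the conventions $T_0(X) = 0$ and $T_\ell(X) = -\infty$ for $\ell > |X|$. The definition of $\preceq_d$ between two equal-degree monomials rewrites cleanly as
\[
f \preceq_d g \iff T_\ell(I) \le T_\ell(J) \text{ for every } \ell \in \{1,\dots,s\},
\]
and, since $h \notin I\cup J$, also as $x_h f \preceq_d x_h g \iff T_\ell(I\cup\{h\}) \le T_\ell(J\cup\{h\})$ for every $\ell \in \{1,\dots,s+1\}$. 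So the whole statement becomes an elementary fact about sums of the largest elements of finite integer sets.

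The first step I would take is the following identity: for any finite $X \subset \Nat$ with $h \notin X$ and any $1 \leq \ell \leq |X|+1$,
\[
T_\ell(X \cup \{h\}) = \max\bigl(T_\ell(X),\; h + T_{\ell-1}(X)\bigr).
\]
This follows from a one-line case split on whether $h$ lies among the top $\ell$ elements of $X \cup \{h\}$: if it does, then $T_\ell(X \cup \{h\}) = h + T_{\ell-1}(X) \ge T_\ell(X)$; if not, then $T_\ell(X \cup \{h\}) = T_\ell(X) \ge h + T_{\ell-1}(X)$. The forward implication $f\preceq_d g \Rightarrow x_h f\preceq_d x_h g$ is then immediate: apply the identity to $I$ and to $J$ and use componentwise monotonicity of $\max$, together with $T_\ell(I)\le T_\ell(J)$ and $T_{\ell-1}(I)\le T_{\ell-1}(J)$ (the latter also holds at the extremes, as $T_0 = 0$ for both sets).

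For the converse I would set up the dual identity
\[
T_\ell(X) = \min\bigl(T_\ell(X \cup \{h\}),\; T_{\ell+1}(X \cup \{h\}) - h\bigr), \qquad 0 \le \ell \le |X|,
\]
by the same kind of case split on the position of $h$ inside $X\cup\{h\}$, and then invoke monotonicity of $\min$. Applied to the hypothesis $T_\ell(I\cup\{h\}) \le T_\ell(J\cup\{h\})$ for $\ell = 1,\dots,s+1$, this yields $T_\ell(I) \le T_\ell(J)$ for $\ell = 1,\dots,s$, i.e.\ $f \preceq_d g$.

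The only real obstacle is the converse: the $\max$ identity gives the forward direction in one line, but going backwards is subtler because the position of $h$ inside $I\cup\{h\}$ need not agree with its position in $J\cup\{h\}$. The cleanest fix is the $\min$ identity above, which bundles the three awkward sub-cases (``$h$ in the top $\ell$ of both / neither / exactly one'') into a single monotone expression; a reader not willing to accept the $\min$ identity can instead treat the mixed sub-case directly, by comparing the hypothesis at index $\ell + 1$ rather than $\ell$ and using that $h$ is strictly smaller than the $(\ell+1)$-th largest element of $I$ to close the chain of inequalities.
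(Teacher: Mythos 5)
Your proof is correct, and it takes a genuinely different route from the paper's. You encode the equal-degree dominance condition as $T_\ell(I)\le T_\ell(J)$ for the top-$\ell$ partial sums, and observe that adjoining the new index $h$ acts on these sums by $T_\ell(X\cup\{h\})=\max\bigl(T_\ell(X),\,h+T_{\ell-1}(X)\bigr)$ and, dually, $T_\ell(X)=\min\bigl(T_\ell(X\cup\{h\}),\,T_{\ell+1}(X\cup\{h\})-h\bigr)$; both implications then follow from componentwise monotonicity of $\max$ and $\min$. The paper instead works directly on the sorted index sequences, splitting into cases according to the gaps of $f$ and of $g$ into which $h$ is inserted, with the mixed case (where $h$ occupies different positions relative to the two supports) resolved by a chain of inequalities ending in a contradiction. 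Your approach buys two things: the positional case analysis disappears entirely, since the $\max$/$\min$ identities absorb precisely the sub-cases that make the paper's mixed case delicate; and it actually establishes the converse implication, which the statement asserts as an ``iff'' but which the paper's argument --- written only as a deduction of $x_hf\preceq_d x_hg$ from $f\preceq_d g$ --- never addresses, even though that direction is the one invoked later (via the co-prime reformulation) in the proof of the main theorem. When writing this up, the only step to spell out is the equality (not merely ``$\le$'') in the $\min$ identity in the case where $h$ lies among the top $\ell$ elements of $X\cup\{h\}$, which is the same one-line positional observation you already give for the $\max$ identity.
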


\begin{proof} Let $f=x_{i_1}\dots x_{i_s}$ and $g=x_{j_1}\dots x_{j_s}$ with $f\preceq_d g.$ Also, let $f^*=fx_h$ and $g^*=gx_h.$ 
There are several cases to be examined here:
\begin{itemize}
    \item If $x_{j_s}\preceq x_h$ or $x_h\preceq x_{i_1}$ then the relation $f^*\preceq_d g^*$ can easily be verified by using the definition of $\preceq_d.$
    \item If there is an integer $r\in\{1,\dots,s\}$ s.t. $x_{i_r}\preceq x_h \preceq x_{i_{r+1}}$ and $x_{j_r}\preceq x_h \preceq x_{j_{r+1}}$ then the relation $f^*\preceq_d g^*$ can easily be verified as in the previous step.
    \item If there are two distinct integers $r, t\in\{1,\dots,s\}$ s.t. $x_{i_r}\preceq x_h \preceq x_{i_{r+1}}$ and $x_{j_t}\preceq x_h \preceq x_{j_{t+1}}$ then two cases have to be considered.
    \begin{itemize}
        \item If $t<r$ then we have $x_{i_h}\preceq x_{j_{t+1}}\preceq \dots \preceq x_{j_r}$ and $x_{i_{t+1}}\preceq \dots \preceq x_{i_r}\preceq x_{h}$, which implies the following ordering \begin{equation}\label{eq:lemma-preceqD}
        x_{i+{t+1}}\preceq \dots \preceq x_{i_r}\preceq x_{h}\preceq  x_{j_{t+1}}\preceq \dots \preceq x_{j_r}.
        \end{equation}
        Combining equation \eqref{eq:lemma-preceqD} with the definition of $\preceq_d$ we obtain the desired result, i.e., $f^*\preceq_d g^*.$
        \item $t>r$ then we have $x_{i_h}\preceq x_{i_{r+1}}\preceq \dots \preceq x_{i_t}$ and $x_{j_{r+1}}\preceq \dots \preceq x_{j_t}\preceq x_{h}$ which implies the following ordering \begin{equation}\label{eq:lemma-preceqD2}
        x_{j+{r+1}}\preceq \dots \preceq x_{j_t}\preceq x_{h}\preceq  x_{i_{r+1}}\preceq \dots \preceq x_{i_t}.
        \end{equation} Now, since $x_{h}\preceq x_{i_t}$ it might be possible to have $j_s+\dots j_{t+1}+h<i_s+\dots+i_{t+1}+i_{t}$, which implies a violation of the partial sum conditions in the definition of $\preceq_d.$ If the next partial sum changes the sign, i.e., $j_s+\dots j_{t+1}+h+j_t\ge i_s+\dots+i_{t+1}+i_{t}+i_{t-1}$, by setting $\Delta_{t+1}=(j_s-i_s)+\dots+(j_{t+1}-i_{t+1})$ we have the following inequalities
        \begin{align}
            \Delta_{t+1}+h-i_t&<0\\
            \Delta_{t+1}+h-i_t+j_t-i_{t-1}&\ge 0.
        \end{align}
        This implies 
        \begin{equation}\label{eq:imp-lem}
            \Delta_{t+1}+h< i_t\leq  \Delta_{t+1}+h+j_t-i_{t-1}.
        \end{equation}
        However, since $j_t<i_{t-1}$ the equation \eqref{eq:imp-lem} becomes impossible, which completes the proof.
    \end{itemize}
    \end{itemize}
\end{proof}

In particular, notice that if $f,g$ are co-prime, i.e., $\gcd{(f,g)}\ne 1$, then \begin{equation}
    f\preceq_d g\text{ iff }{f}/{\gcd{(f,g)}}\preceq_d {g}/{\gcd{(f,g)}}.
\end{equation}

Remark that when the condition on variable $x_h$ is not satisfied, the result does not hold, e.g., $x_2x_3\preceq_d x_1x_4$, but $x_1x_2x_3$ and $x_1x_4$ are not comparable with respect to $\preceq_d.$ Before we get to our main theorem of this section, the following lemma is required.
\begin{Lemma}\label{lem:domin-deg2}
Let $f,g\in\Mon, \deg{f}=\deg{g}=2$, such that $f\preceq_d g.$ Then $\Bha {\W{f}{m}}\le\Bha{\W{g}{m}}.$   
\end{Lemma}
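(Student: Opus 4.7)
The plan is to reduce the general statement to the specific comparison $x_1x_2\preceq_d x_0x_3$ highlighted in the statement, via an elementary cover analysis in the dominance order.

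First I would split on whether $f\preceq g$ holds under the classical partial order. If so, the Bhattacharyya inequality is already available from \cite{BDOT16,MT09} and there is nothing to prove. Otherwise, writing $f=x_{i_1}x_{i_2}$ and $g=x_{j_1}x_{j_2}$ with $i_1<i_2$ and $j_1<j_2$, a short case analysis of the definition of $\preceq_d$ shows that a shared variable together with $f\preceq_d g$ would already force $f\preceq g$, so $\gcd(f,g)=1$; combined with $f\not\preceq g$ this pins down the strict chain $j_1<i_1<i_2<j_2$. Setting $d_1=i_1-j_1$ and $d_2=j_2-i_2$, the partial-sum condition $i_1+i_2\le j_1+j_2$ gives $d_1\le d_2$, so iterating the diagonal cover $x_ax_b\mapsto x_{a-1}x_{b+1}$ exactly $d_1$ times from $f$ lands at $x_{j_1}x_{i_2+d_1}$, from where $d_2-d_1$ ordinary $\preceq$-covers reach $g$ and are again handled by Case~1. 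The statement thus reduces to a single cover lemma:
\[\Bha{W^{x_ax_b}_m}(p)\le\Bha{W^{x_{a-1}x_{b+1}}_m}(p)\qquad\forall\,p\in[0,1],\ 1\le a<b\le m-2.\]

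Plugging in the closed form $\Bha{W^{x_ax_b}_m}(p)=\bigl[P\bigl(P(p^{2^{m-b-1}})^{2^{b-a-1}}\bigr)\bigr]^{2^a}$ with $P(x)=1-(1-x)^2$, substituting $q=p^{2^{m-b-2}}$ and $\beta=2^{b-a-1}$, and extracting a $2^{a-1}$-th root, the cover lemma collapses into the single-variable inequality $\bigl[P(P(q^2)^\beta)\bigr]^2\le P(P(q)^{4\beta})$ for all $q\in[0,1]$ and $\beta=2^k$, $k\ge 0$. For the base case $k=0$ (which is exactly $x_1x_2\preceq_d x_0x_3$ in $m=4$), I would rewrite this as $(1-X)^2+(1-Y)^2\le 1$ with $X=(1-q^2)^4$ and $Y=(2q-q^2)^4$; parameterizing by $s=q(1-q)\in[0,1/4]$ gives $X+Y=1-4s^2-8s^3+2s^4$ and $XY=s^4(2+s)^4$, and after routine simplification the inequality reduces to the quadratic
\[(1+\sqrt 2)\,s^2+(4-4\sqrt 2)\,s+(4-2\sqrt 2)\ge 0,\]
whose discriminant $48-40\sqrt 2$ is negative and whose leading coefficient $1+\sqrt 2$ is positive, so the quadratic is strictly positive on $[0,1/4]$.

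The main obstacle is extending the base case to $k\ge 1$. The naive chain $P(q^2)\le P(q)^2$ is not enough on its own, because the companion inequality $P(s^2)\le P(s)^2$ points the \emph{wrong} way: setting $U=P(q^2)^\beta$ and $V=P(q)^{2\beta}$, both $[P(U)]^2$ and $P(V^2)$ end up bounded above by $P(V)^2$, so the direct comparison cannot close. The resolution is to track the precise slack $P(s)^2-P(s^2)=2s^2(1-s)^2$ through each composition of $P$, feeding it into an induction on $k$ anchored at the explicit base case above; this careful propagation of the slack is the most delicate part of the proof.
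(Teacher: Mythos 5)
Your reduction is sound as far as it goes: the split on $\gcd(f,g)$, the deduction that a shared variable forces $f\preceq g$, the chain $j_1<i_1<i_2<j_2$, and the factorization of $f\preceq_d g$ into diagonal covers $x_ax_b\mapsto x_{a-1}x_{b+1}$ followed by ordinary $\preceq$-steps all match the paper's decomposition. Your base case is also correct and verifiable: with $s=q(1-q)$ one indeed gets $X+Y=1-4s^2-8s^3+2s^4$ and $XY=s^4(2+s)^4$, and $(1-X)^2+(1-Y)^2\le 1$ reduces to $(1+\sqrt 2)s^2+(4-4\sqrt 2)s+(4-2\sqrt 2)\ge 0$ with discriminant $48-40\sqrt 2<0$. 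This is a clean algebraic substitute for the paper's verification, which instead expands the two reliability polynomials in the Bernstein basis and checks $N_i(\Css^{(0,1,1,0)})\le N_i(\Css^{(1,0,0,1)})$ coefficient by coefficient.

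The genuine gap is exactly the one you flag, and you do not close it. A single diagonal cover with $b-a\ge 2$ is unavoidable (e.g.\ $x_1x_3\preceq_d x_0x_4$ is one cover with $\beta=2$), and it requires $\bigl[P(P(q^2)^{\beta})\bigr]^2\le P\bigl(P(q)^{4\beta}\bigr)$ for $P(x)=1-(1-x)^2$ and $\beta=2^{b-a-1}\ge 2$, which is a genuinely different polynomial inequality from the $m=4$ case. Your diagnosis that the naive monotonicity chain fails (because $P(s^2)\le P(s)^2$ points the wrong way) is accurate, but the proposed remedy --- propagating the slack $P(s)^2-P(s^2)=2s^2(1-s)^2$ through an induction on $k$ --- is only a plan: no inductive step is formulated, and it is not evident that the bookkeeping closes. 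As written, the proposal proves the lemma only for pairs reachable by gap-one covers. (For context: the paper compresses this same step into the assertion that $\Bha{\W{x_1x_2}{4}}\le\Bha{\W{x_{0}x_{3}}{4}}$ is \emph{equivalent} to the general cover inequality, which holds verbatim only when $i_2-i_1=1$; so the $\beta\ge 2$ case is under-argued there too, but that does not make your version complete.)
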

\begin{proof}
Let $f=x_{i_1}x_{i_2}$ and $g=x_{j_1}x_{j_2}.$ If $\gcd(f,g)\neq1$ then $f\preceq g$, which implies $\Bha{\W{f}{m}}\le\Bha{\W{g}{m}}.$ 

Now suppose $\gcd(f,g)=1$, and let $j_1-i_1\leq j_2-i_2.$ Denote $\epsilon=j_1-i_1.$ By definition of $\preceq_d$ and $\preceq$ we have 
\begin{equation}
f=x_{i_1}x_{i_2}\preceq_d x_{i_1-1}x_{i_2+1}\preceq_d\dots\preceq_d x_{j_1}x_{i_2+\epsilon}\preceq x_{j_1}x_{j_2}=g.    
\end{equation}  
If we prove $x_{i_1}x_{i_2}\preceq_d x_{i_1-1}x_{i_2+1}\Rightarrow \Bha{\W{x_{i_1}x_{i_2}}{m}}\le\Bha{\W{x_{i_1-1}x_{i_2+1}}{m}}$ the proof is finished. By definition, one can easily notice that $\Bha{\W{x_1x_2}{4}}\le\Bha{\W{x_{0}x_{3}}{4}}\Leftrightarrow\Bha{\W{x_{i_1}x_{i_2}}{m}}\le\Bha{\W{x_{i_1-1}x_{i_2+1}}{m}}.$ Hence we are left to prove that $\Bha{\W{x_1x_2}{4}}\le\Bha{\W{x_{0}x_{3}}{4}}.$ 
We have that $\Bha{\W{x_{0}x_{3}}{4}}(p)=1 - (1 - (1 - (1 - p)^2)^4)^2$ and 
   $\Bha{\W{x_1x_2}{4}}(p)=(1 - ( 1-p^2)^4)^2.$

By writing the two polynomials in the Bernstein basis, and using Theorem \ref{thm:reliab_comp} we have $\Bha{\W{x_1x_2}{4}}= \Rel(\Css^{(0,1,1,0)})$ and $\Bha{\W{x_{0}x_{3}}{4}}=\Rel(\Css^{(1,0,0,1)})$ with
\begin{align*}N_i(\Css^{(0,1,1,0)}) & = 0, 0, 0, 0, 16, 192, 1008, 3040, 5828, 7456, 6552, 4048, 1788,  560, 120, 16, 1\\
N_i(\Css^{(1,0,0,1)}) & =  0, 0, 0, 0, 32, 320, 1456, 3984, 7042, 8400, 7000, 4176, 1804, 560, 120, 16, 1
\end{align*}
As $N_i(\Css^{(0,1,1,0)})\leq N_i(\Css^{(1,0,0,1)})$ for all $i\in\{0,\dots,16\}$ we conclude the proof.
\end{proof}

\begin{Corollary}\label{cor:multipl-2}
Let $f=x_{i_1}x_{i_2}$ and $g=x_{j_1}x{j_2}$ s.t. $f\preceq_d g.$ Then for any monomial $h=x_{l_1}\dots x_{l_t}$ satisfying  $i_1<l_1<\dots<l_{t}<i_2$ we have $fh\preceq_d gh$ and $\Bha{\W{fh}{m}}\leq \Bha{\W{gh}{m}}.$ 
\end{Corollary}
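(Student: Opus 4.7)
For the first claim, $fh \preceq_d gh$, I would apply Proposition~\ref{lem:ordre_domin} iteratively, once per variable $x_{l_r}$ of $h$. The hypothesis $i_1 < l_r < i_2$ immediately gives $\gcd(h,f) = 1$. Combined with $f \preceq_d g$ forcing $i_2 \le j_2$ (so $l_r < j_2$), a short case analysis on the position of $j_1$ relative to $[i_1, i_2]$---together with squarefreeness of $gh$---also yields $l_r \ne j_1$, hence $\gcd(h,g)=1$. Proposition~\ref{lem:ordre_domin} then gives the chain
\[
f \preceq_d g \iff f x_{l_1} \preceq_d g x_{l_1} \iff \cdots \iff f h \preceq_d g h.
\]

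For the Bhattacharyya inequality, I would reduce to a single ``widening step'' of Lemma~\ref{lem:domin-deg2}. Since $f \preceq_d g$ on degree-$2$ monomials, there is a chain $f = f_0 \preceq_d f_1 \preceq_d \cdots \preceq_d f_k = g$ in which each $f_{r-1} \to f_r$ is an elementary widening $x_a x_b \to x_{a-1} x_{b+1}$; as $h$'s variables stay strictly between $a$ and $b$ at every intermediate stage, by transitivity it suffices to prove $\Bha{W^{f_{r-1} h}_m} \le \Bha{W^{f_r h}_m}$ at each such step. Peeling the common leading and trailing zeros from the Arikan recursion (via $\Bha{W^{\uv}_m}(p) = \Bha{W^{\uv^{(\top)}}_{m-1}}(T_{u_{m-1}}(p))$ and its bottom analogue) reduces the comparison to an effective dimension $m' = b - a + 3$, where both Bhattacharyya parameters take the symmetric form
\[
\Bha{W^{\tilde{\uv}}_{m'}}(p) = T_0 \bigl( T_1(\Phi(T_1(T_0(p))))\bigr), \qquad \Bha{W^{\tilde{\vv}}_{m'}}(p) = T_1 \bigl( T_0(\Phi(T_0(T_1(p))))\bigr),
\]
and $\Phi : [0,1] \to [0,1]$ is the common monotone map induced by the shared $h$-block (a composition of $T_0$'s and $T_1$'s at the positions $l_r$). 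The base case $\Phi = \mathrm{id}$ is precisely the $m=4$ Bernstein-basis computation at the heart of Lemma~\ref{lem:domin-deg2}; for general $\Phi$ the inequality would follow by extending the coefficient-wise comparison $N_i(\Css^{\tilde{\uv}}) \le N_i(\Css^{\tilde{\vv}})$ via the network interpretation of Theorem~\ref{thm:reliab_comp}.

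The main obstacle I anticipate is that monotonicity of $\Phi$ alone is not sufficient: pointwise one has $T_0 \circ T_1 \ge T_1 \circ T_0$ on $[0,1]$, so the desired inequality hinges on the specific gap $T_0(T_1(p)) - T_1(T_0(p)) = 2p^2(1-p)^2$ being preserved (in an appropriate sense) through $\Phi$. Closing this gap requires either committing to the combinatorial path-counting route---exhibiting an injection of $s,t$-paths of length $i$ in $\Css^{\tilde{\uv}}$ into those in $\Css^{\tilde{\vv}}$---or verifying the polynomial inequality directly by expanding and comparing Bernstein coefficients as in Lemma~\ref{lem:domin-deg2}'s proof.
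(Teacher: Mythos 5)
Your handling of the first claim is sound: in the only case not already covered by the coarser order $\preceq$ (namely $j_1<i_1<i_2<j_2$), the hypothesis $i_1<l_r<i_2$ does force $x_{l_r}\nmid f$ and $x_{l_r}\nmid g$, so iterating Proposition~\ref{lem:ordre_domin} gives $fh\preceq_d gh$; and when $f\preceq g$ the whole corollary is classical. Your reduction of the second claim is also the right picture: after peeling the outer and inner $T_0$'s, an elementary widening step amounts to comparing $T_0\circ T_1\circ\Phi\circ T_1\circ T_0$ with $T_1\circ T_0\circ\Phi\circ T_0\circ T_1$, where $\Phi$ is the monotone map contributed by the shared $h$-block, and the case $\Phi=\mathrm{id}$ is exactly the $m=4$ computation of Lemma~\ref{lem:domin-deg2}. (One small slip: the chain from Lemma~\ref{lem:domin-deg2} ends with a $\preceq$-step $x_{j_1}x_{i_2+\epsilon}\preceq x_{j_1}x_{j_2}$ rather than a widening, but that step is harmless since $\preceq$ is preserved under coprime multiplication.)

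The genuine gap is that the proof stops precisely where the new content begins, and you say so yourself. For general $\Phi$ you offer two possible routes --- a path-counting injection establishing $N_i(\Css^{\tilde\uv})\le N_i(\Css^{\tilde\vv})$ for all $i$, or a direct polynomial verification --- but carry out neither; moreover, coefficient-wise domination in the Bernstein basis is only a \emph{sufficient} condition for the pointwise inequality, so even the choice of route is not justified. No invariant is propagated through the $h$-block that would convert the gap $T_0(T_1(p))-T_1(T_0(p))=2p^2(1-p)^2$ into the needed separation between $\Phi(T_1(T_0(p)))$ and $\Phi(T_0(T_1(p)))$. It is worth noting that the paper itself states this corollary without proof, and your analysis in fact exposes why it is \emph{not} a formal consequence of Lemma~\ref{lem:domin-deg2}: in the proof of Theorem~\ref{thm:polar-strong-dec} the degree-two pair always occurs as an extremal block, so the remaining variables can be absorbed into a transformed base channel $W^*$, whereas here $h$ lies strictly between $i_1$ and $i_2$ and $\Phi$ cannot be pushed to either end of the Arikan composition. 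So the second assertion of the corollary remains unproven in your proposal.
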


\begin{Theorem}\label{thm:polar-strong-dec}
Polar codes over the Binary Erasure Channel are strongly decreasing monomial codes.
\end{Theorem}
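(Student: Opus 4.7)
The plan is to reduce the theorem to the universal implication
\[
g \preceq_d f \quad \Longrightarrow \quad \Bha{\W{g}{m}}(p) \leq \Bha{\W{f}{m}}(p) \quad \text{for every } p \in [0,1].
\]
Once this is in hand, the defining set of a BEC polar code, which picks the $k$ monomials with smallest Bhattacharyya parameter for the fixed erasure probability, is automatically $\preceq_d$-downward closed and hence strongly decreasing: any $g \preceq_d f$ with $f$ in the code is at least as reliable as $f$ and therefore also in the code.

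I would prove the implication in two stages. First, when $\deg g < \deg f$, the definition of $\preceq_d$ supplies an intermediate $g^{*}$ of the same degree as $f$ with $g \preceq_d g^{*} \weako f$, and the chain splits into the equal-degree step $g \preceq_d g^{*}$ (treated below) together with the divisibility step $g^{*} \weako f$, to which the Mori--Tanaka monotonicity $\weako \Rightarrow \leq$ applies. Second, in the equal-degree case I would argue by induction on the length of a saturated chain from $g$ to $f$ inside the dominance poset $(\Mon, \preceq_d)$. Each covering relation in this poset is either a $\preceq$-cover, which is handled by the implication $\preceq \Rightarrow \leq$ established in \cite{BDOT16}, or an elementary dominance swap $\tilde{h}\,x_a x_b \,\prec_d\, \tilde{h}\,x_{a-1}x_{b+1}$ with $\tilde{h}$ coprime to both degree-two factors and $a < b$. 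The degree-two inequality $\Bha{\W{x_a x_b}{m}} \leq \Bha{\W{x_{a-1}x_{b+1}}{m}}$ is exactly Lemma~\ref{lem:domin-deg2}, and Corollary~\ref{cor:multipl-2} lifts the inequality through multiplication by $\tilde{h}$ whenever the indices of $\tilde{h}$ lie strictly between $a$ and $b$.

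The main obstacle is the case where $\tilde{h}$ carries indices outside the open interval $(a,b)$, which Corollary~\ref{cor:multipl-2} does not address verbatim. To treat this case I would exploit the network interpretation from Theorem~\ref{thm:reliab_comp}: splitting $\tilde{h}=\tilde{h}_{\mathrm{in}}\,\tilde{h}_{\mathrm{low}}\,\tilde{h}_{\mathrm{high}}$ along indices lying inside $(a,b)$, strictly below $a-1$, and strictly above $b+1$, multiplication by $\tilde{h}_{\mathrm{low}}$ translates on the network side into wrapping the composition $\Css^{\uv}$ inside fixed outer series or parallel compositions, while multiplication by $\tilde{h}_{\mathrm{high}}$ translates into precomposing the reliability polynomial with a monotone substitution of the form $p\mapsto p^2$ or $p\mapsto 2p-p^2$. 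Since both operations are monotone in the underlying reliability polynomial, they preserve the point-wise inequality supplied by Lemma~\ref{lem:domin-deg2} (combined with Corollary~\ref{cor:multipl-2} on the inner block $\tilde{h}_{\mathrm{in}}$). Chaining every cover in the saturated chain together and invoking transitivity of $\leq$ then yields $\Bha{\W{g}{m}}(p)\leq\Bha{\W{f}{m}}(p)$ for every $p\in[0,1]$, completing the proof.
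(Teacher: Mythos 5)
Your proposal is correct in outline but organizes the argument quite differently from the paper. Both proofs rest on the same two pillars --- the reduction to the universal pointwise implication $g\preceq_d f\Rightarrow\Bha{\W{g}{m}}\leq\Bha{\W{f}{m}}$, and the degree-two computation of Lemma~\ref{lem:domin-deg2} together with Corollary~\ref{cor:multipl-2} --- but the paper proceeds by a double induction (first on the degree $s$ for fixed $m$, then on $m$), in each inductive step locating an index $l$ where the sequences $i_\ell$ and $j_\ell$ cross and hand-building a case-specific intermediate monomial $h$ with $f\preceq_d h\preceq_d g$ to which the induction hypothesis and the degree-two lemma can be applied. You instead factor an arbitrary equal-degree relation into a saturated chain of elementary moves --- single-index increments (which are $\preceq$-steps, covered by \cite{BDOT16}) and box-move swaps $\tilde{h}x_ax_b\preceq_d\tilde{h}x_{a-1}x_{b+1}$ --- and then dispose of each swap uniformly by splitting the cofactor into inner, middle and outer blocks, observing that $\Bha{W^{\uv}}$ is the composition $\phi_{u_0}\circ\dots\circ\phi_{u_{m-1}}(p)$ of the monotone bijections $z\mapsto z^2$ and $z\mapsto 1-(1-z)^2$, so that a pointwise inequality on the differing window propagates through common inner substitutions and outer wrappings. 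This is more modular and avoids the paper's delicate case analysis and its second induction on $m$. The price is one unproven combinatorial input: that every $\preceq_d$-relation between equal-degree monomials decomposes into such elementary moves passing only through valid (distinct-index) monomials inside the $s\times(m-s)$ grid. This is essentially Brylawski's characterization of covers in the dominance lattice transported through the partition correspondence mentioned in the introduction, and it is true, but you should state it as a lemma and prove it, since your entire equal-degree case hangs on it. Note also that both your argument and the paper's lean on Corollary~\ref{cor:multipl-2}, which the paper itself leaves unproved; your inner/outer windowing observation in fact comes close to supplying that missing proof, so it would be worth making it explicit.
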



\begin{proof}
The proof is based on two induction steps. First the parameter $m$ is fixed and we prove that the result holds for any $1\leq s\leq m$. Secondly we use induction on $m$. 

Firstly, fix $m$ and use an induction argument on the degree of monomial, namely on $s.$ We also suppose that $\gcd{(f,g)}=1.$ 
For $s=1$ we have that $\preceq=\preceq_d$ so the result is obvious. For $s=2$ use Lemma \ref{lem:domin-deg2}.

Now suppose that for any $f\preceq_d g$ with $\deg{f}=\deg{g}=s-1$ we have that $\Bha{\W{f}{m}}\le \Bha{\W{g}{m}}.$ Let $f=x_{i_1}\dots x_{i_s}$ and $g=x_{j_1}\dots x_{j_s}$ such that $g\preceq_d f$ with the usual convention $i_1 < \dots < i_s$ and $j_1 < \dots < j_s$. Then we have two cases. Either if $f/x_{i_s}\preceq_d g/x_{j_s}$ or if $x_{i_1}\leq x_{j_1}$ then we have $\Bha{\W{f}{m}}\le \Bha{\W{g}{m}}$. Indeed, in the first case we have that 
\begin{align*}
\Bha{\W{f}{m}}&=\Bha{\Wg{\Wb{x_{i_s}}{m-j_{s-1}-1}}{f/x_{i_s}}{j_{s-1}+1}}\le \Bha{\Wg{\Wb{x_{j_s}}{m-j_{s-1}-1}}{f/x_{i_s}}{j_{s-1}+1}}\\
			&\le \Bha{\Wg{\Wb{x_{j_s}}{m-j_{s-1}-1}}{g/x_{j_s}}{j_{s-1}+1}}=\Bha{\W{g}{m}}.
\end{align*} 
In the second case when $x_{i_1}\leq x_{j_1}$ the proof works in the same way. If we are not in the previous case it means that $j_1<i_1$ and $i_s<j_s.$ We know that there is $l\in\{1,\dots,s-1\}$ for which $i_l>j_l.$  First we treat the two extreme cases $l=1$ or $l=s-1.$ If $l=1$ this implies that $j_k\ge i_k$ for all $k> 1$. Let $\delta=\min\{j_2-i_2,i_1-j_1\}.$ Then
    \begin{equation*}
        f=x_{i_1}x_{i_2}\dots x_{i_s}\preceq_d x_{i_1-\delta}x_{i_2+\delta}x_{i_3}\dots x_{i_s}\preceq_d x_{j_1}\dots x_{j_s}=g.
    \end{equation*}
        
    In this case, either $x_{i_1-\delta}=x_{j_1}$ or $x_{i_2+\delta}=x_{j_2}.$ Hence, by Lemma \ref{lem:domin-deg2} and using the order relation $\preceq$ we obtain 
    
    \begin{align*}
\Bha{\W{f}{m}}&=\Bha{\Wg{\Wb{f/(x_{i_1}x_{i_2})}{m-j_{2}-1}}{x_{i_2}x_{i_1}}{j_{2}+1}}\le \Bha{\Wg{\Wb{f/(x_{i_1}x_{i_2})}{m-j_{2}-1}}{x_{i_2+\delta}x_{i_1-\delta}}{j_{2}+1}}\\
			&\le \Bha{\Wg{\Wb{x_{i_s}\dots x_{i_3}}{m-j_{2}-1}}{x_{j_2}x_{j_1}}{j_{2}+1}}\le \Bha{\Wg{\Wb{x_{j_s}\dots x_{j_3}}{m-j_{2}-1}}{x_{j_2}x_{j_1}}{j_{2}+1}}=\Bha{\W{g}{m}}.
\end{align*} 
If $l=s-1$, by putting $\delta=i_{s-1}-j_{s-1}$ and taking into account that $\delta\leq j_s-i_s$ we obtain
    \begin{align*}
\Bha{\W{f}{m}}&=\Bha{\Wg{\Wb{x_{i_s}x_{i_{s-1}})}{m-j_{s-2}-1}}{f/(x_{i_s}x_{i_{s-1}})}{j_{s-2}+1}}\leq\Bha{\Wg{\Wb{x_{i_s+\delta}x_{i_{s-1}-\delta})}{m-j_{s-2}-1}}{f/(x_{i_s}x_{i_{s-1}})}{j_{s-2}+1}}	\\
&=\Bha{\Wg{\Wb{x_{i_s+\delta}x_{j_{s-1}})}{m-j_{s-2}-1}}{x_{i_{s-2}}\dots x_{i_1}}{j_{s-2}+1}}		\leq\Bha{\Wg{\Wb{x_{j_s}x_{j_{s-1}})}{m-j_{s-2}-1}}{x_{j_{s-2}}\dots x_{j_1}}{j_{s-2}+1}}=\Bha{\W{g}{m}}.
\end{align*} 

 When $1<l<s-1$ suppose that $i_s-j_l<j_{l+1}-i_s$ and denote by $\delta_{l,s}=i_s-j_l.$  Using the definition of $\preceq_d$ we have 
 \[h=x_{j_1}\dots x_{j_l+\delta_{l,s}}x_{j_{l+1}-\delta_{l,s}}\dots x_{j_s}\preceq_d\dots\preceq_d x_{j_1}\dots x_{j_l+1}x_{j_{l+1}-1}\dots x_{j_s}\preceq_d x_{j_1}\dots x_{j_s}=g.\] 

Notice that $h=x_{j_1}\dots x_{j_{l-1}}x_{i_s}x_{j_{l+1}-i_s+j_l}\dots x_{j_s}.$ Next, we prove that $f\preceq_d h.$ Since $\gcd{(f,h)}=x_{i_s}$, we can use Lemma \ref{lem:ordre_domin}  and demonstrate ${f}/{x_{i_s}}\preceq_d {h}/{x_{i_s}}$, i.e.,\\ $x_{i_1}\dots x_{i_{s-1}}\preceq_d x_{j_1}\dots x_{j_{l-1}}x_{j_{l+1}-i_{s}+j_{l}}x_{j_{l+2}}\dots x_{j_s}.$ As,
\begin{equation}
    x_{i_{l+1}}\preceq\dots \preceq x_{i_s} \preceq x_{j_{l+1}}\preceq \dots\preceq x_{j_s},
\end{equation}
we obtain $x_{i_{l+1}}\dots x_{i_{s-1}}\preceq_d x_{j_{l+2}}\dots x_{j_s}$, simply by verifying

\begin{equation}
 \forall \;t\in\{0,\dots,s-l-2\}\quad   \sum\limits_{k=0}^{t} {i_{s-1-k}}\leq \sum\limits_{k=0}^{t} {j_{s-k}}.
\end{equation}
The next partial sums inequalities, 
\begin{equation}(i_{k}+\dots+i_{l-1})+i_{l}+i_{l+1}+\dots+i_{s-1}<(j_{k}+\dots+j_{l-1})+ j_{l+1}-i_s+j_l+j_{l+2}+\dots+j_{s}
\end{equation} are verified from the relation $f\preceq_d g.$ 
So we check the partial sums step by step:
\begin{enumerate}
\item $i_{s-1}<i_s<j_{l+1}<j_s$ and thus $x_{i_{s-1}}\preceq_d x_{j_s}$
\item $i_{s-1}+i_{s-1}<j_{s-1}+j_{s}$ and thus $x_{i_s-2}x_{i_{s-1}}\preceq_d x_{j_{s-1}}x_{j_s}$ 
\item $\dots$
\item $i_{l+1}+\dots+i_{s-1}<j_{l+2}+\dots+j_{s}$ and thus $x_{i_{l+1}}\dots x_{i_{s-1}}\preceq_d x_{j_{l+2}}\dots x_{j_s}$
\item $i_{l}+i_{l+1}+\dots+i_{s-1}<j_{l+1}-i_s+j_l+j_{l+2}+\dots+j_{s}$ by definition of $f\preceq_d g$ and thus $x_{i_{l}}\dots x_{i_{s-1}}\preceq_d x_{j_{l+1}-i_s+j_l}\dots x_{j_s}$
\item $\dots$
\end{enumerate}

Hence we have that ${f}/{x_{i_s}}\preceq_d {h}/{x_{i_s}}$ which implies, using the induction hypothesis, that 
 $   \Bha{\W{f/x_{i_s}}{m}}\le\Bha{\W{h/x_{i_s}}{m}},$
from which we deduce 
 $   \Bha{\W{f}{m}}\leq \Bha{\W{h}{m}}.$
Also,
\begin{align*}
    \Bha{\W{h}{m}}&=\Bha{\left(\Wb{x_{j_s}\dots x_{j_{l+2}}}{}^{x_{j_{l+1-\delta_{l,s}}}x_{j_{l+\delta_{l,s}}}}\right)^{x_{j_{l-1}}\dots x_{j_1}}}\\
    &=\Bha{\left((W^*)^{x_{j_{l+1-\delta_{l,s}}}x_{j_{l+\delta_{l,s}}}}\right)^{x_{j_{l-1}}\dots x_{j_1}}}\leq \Bha{\left((W^*)^{x_{j_{l+1}}x_{j_{l}}}\right)^{x_{j_{l-1}}\dots x_{j_1}}}\\
    &=\Bha{\left(\Wb{x_{j_s}\dots x_{j_{l+2}}}{}^{x_{j_{l+1}}x_{j_{l}}}\right)^{x_{j_{l-1}}\dots x_{j_1}}}=\Bha{\W{g}{m}}.
\end{align*}
Secondly, we use induction on the number of variables $m.$ For the first values of $m$, i.e., $m\leq 4$ it is straightforward to check the result.

Let $f=x_{i_1}\dots x_{i_s}$ and $g=x_{j_1}\dots x_{j_s}$ such that $g\preceq_d f$ with the usual convention $i_1 < \dots < i_s$ and $j_1 < \dots < j_s$. The following cases are possible 
\begin{itemize}
\item If $i_s=j_s=m$ then we have 
$\W{f}{m+1}  =  \Wg{\Wb{f_{m}}{1}}{f_{[0, m-1]}}{m}$ and 
$\W{g}{m+1}  =  \Wg{\Wb{g_{m}}{1}}{g_{[0,m-1]}}{m}.$
Since $f_{[0,m-1]}\preceq_d g_{[0,m-1]}$ we have by the induction hypothesis
\begin{equation}
\Bha{\Wg{\Wb{x_{m}}{1}}{f_{[0,m-1]}}{m} }\le\Bha{  \Wg{\Wb{x_{m}}{1}}{g_{[0,m-1]}}{m}}.
\end{equation}
\item Else, by the definition of the order we necessary have $j_s>i_s.$ We also have that $$h=x_{j_1}\dots x_{j_{s-1}+1}x_{j_s-1}\preceq_d x_{j_1}\dots x_{j_{s-1}}x_{j_s}=g.$$ Which implies that $\Bha{\W{h}{m+1}}\le \Bha{\W{g}{m+1}}.$ In the same time notice that $f\preceq_d h$ and $\indices(f),\indices(h)\in\{0,\dots,m-1\}.$ Therefore we obtain \begin{equation}
    \Bha{\W{f}{m+1}}\le \Bha{\W{h}{m+1}}\le \Bha{\W{g}{m+1}}.
\end{equation}
\end{itemize}
\end{proof}

\section{Average Reliability of the synthetic channels}
 
 The geometric approach of the properties of a function by means of its subgraph and/or epigraph generated useful mathematical tools from the very beginning of the theory of functions. Measure, intersection, support and shape properties lead to applications in various domains: optimization, shape description and recognition, etc. Here we propose a geometric approach in the field of polar coding. Recently, the concept of \emph{average reliability} was introduced and analyzed in the context of all terminal reliability \cite{Brown14}.   

As the synthetic channels cannot be totally ordered \cite{BDOT16,WFS17}, we propose a different method to define the optimality of a synthetic channel. For that we will check how reliable is a channel in average, i.e., we define
\begin{Definition}
Let $m$ be a strictly positive integer and $\uv\in \{0,1\}^m$. The average reliability of $W^{\uv}$ is 
\[\Avr\left(W^{\uv}\right)=\int\limits_{0}^{1}\Bha{W^{\uv}}(p)dp.\] 
Moreover, we define the relation $\LAR$ 
$$\uv\LAR\vv\Leftrightarrow \Avr({W^{\uv}})\leq\Avr({W^{\vv}})$$
\end{Definition}  

This notion of optimality has a meaning in the following context. Imagine that the communication channel is a $\BEC$ with variable erasure probability, coming from different physical reasons. This means that either we choose a different polar code in function of the variations of $p$ and in this case we obtain the best performance for each instance, or we choose a polar code and hope that in average it performs in an optimal way. The former strategy comes with the cost of computing for each value of $p$ the corresponding polar code, as for the later the cost is minimal since we only construct a polar code.    

\subsection{Properties}
\begin{Lemma}
The relation $\LAR$ is reflexive and transitive. In other words, $\LAR$ is a pre-order relation.  
\end{Lemma}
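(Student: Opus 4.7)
The statement is essentially a direct transfer of the reflexivity and transitivity of the usual order $\leq$ on $\R$ through the real-valued functional $\Avr$, so the plan is short and conceptual rather than computational.

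First, I would unpack the definition: $\uv \LAR \vv$ means $\Avr(W^{\uv}) \leq \Avr(W^{\vv})$, where $\Avr(W^{\uv}) = \int_0^1 \Bha{W^{\uv}}(p)\,dp$ is a well-defined real number (the integrand is a polynomial in $p$ on $[0,1]$, hence integrable and finite). So $\LAR$ is the pullback of the standard order $\leq$ on $\R$ along the map $\uv \mapsto \Avr(W^{\uv})$.

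Second, I would verify reflexivity in one line: for every $\uv \in \{0,1\}^m$ we have $\Avr(W^{\uv}) \leq \Avr(W^{\uv})$ because $\leq$ is reflexive on $\R$, hence $\uv \LAR \uv$.

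Third, I would verify transitivity in one line: if $\uv \LAR \vv$ and $\vv \LAR \wv$, then $\Avr(W^{\uv}) \leq \Avr(W^{\vv})$ and $\Avr(W^{\vv}) \leq \Avr(W^{\wv})$, and transitivity of $\leq$ on $\R$ yields $\Avr(W^{\uv}) \leq \Avr(W^{\wv})$, i.e., $\uv \LAR \wv$. I would also remark briefly why one does not get anti-symmetry for free (and hence the result is only a pre-order rather than a partial order): two distinct binary vectors $\uv \neq \vv$ may produce the same integral value, since the map $\uv \mapsto \Avr(W^{\uv})$ is not a priori injective on $\{0,1\}^m$. No obstacle is anticipated here; the entire argument is immediate from the definition and the standard properties of $\leq$ on $\R$.
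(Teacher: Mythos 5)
Your proof is correct and is exactly the intended argument: the paper gives no explicit proof of this lemma precisely because, as you observe, $\LAR$ is the pullback of the total order $\leq$ on $\R$ along the map $\uv \mapsto \Avr(W^{\uv})$, which immediately inherits reflexivity and transitivity. Your closing remark on the failure of anti-symmetry also matches the paper's own discussion following the lemma, where the quotient by $\equiv_{\mathrm{Avr}}$ is introduced to recover a total order.
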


Our simulations have shown that up to $m=13$, $\LAR$ is also antisymmetric. However, this property might not be true in general. Indeed, one can easily find two distinct polynomials with integer coefficients defined over $[0,1]$ with values in $[0,1]$, such that their integrals are equal. Nonetheless, we can overcome this by applying the following procedure.

\begin{Remark}
Let $\uv\equiv_{\mathrm{Avr}}\vv$ if and only if $\Avr(\Bha{W^{\uv}})=\Avr(\Bha{W^{\vv}}).$ Let us extend the relation $\LAR$ to the factor set $\Mon/\equiv_{\mathrm{Avr}}$ naturally, using the relation between class representatives. Then $\LAR$ is a total order relation over $\Mon/\equiv_{\mathrm{Avr}}.$ Indeed, one can easily check that $\LAR$ is antisymmetric over $\Mon/\equiv_{\mathrm{Avr}}.$ 
\end{Remark}

\begin{figure}[!h]
    \begin{subfigure}{.32\textwidth}
 \centering
  \includegraphics[width=\textwidth]{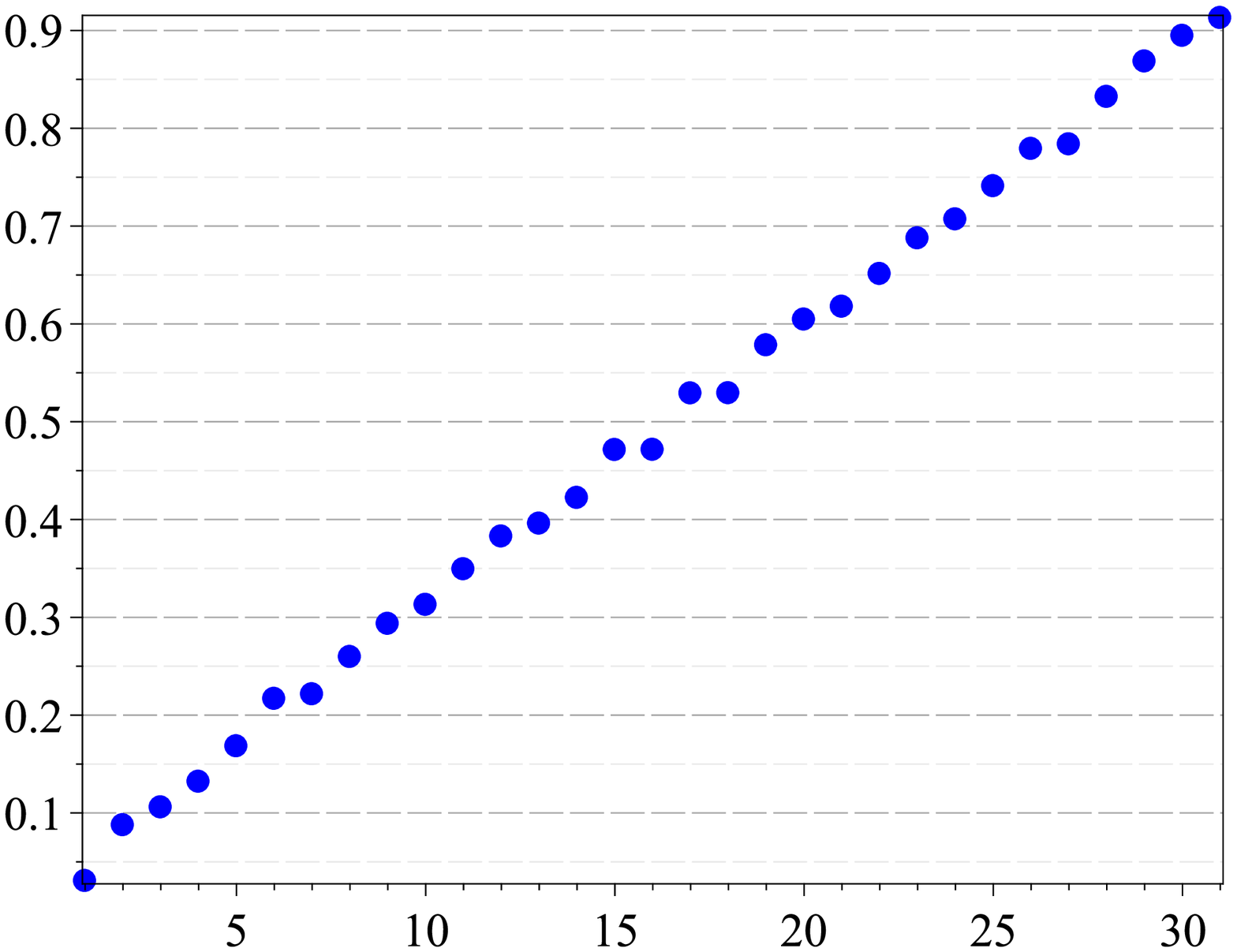}  
  \caption{$m=5$}
  \label{fig:avr5-sort}
\end{subfigure}
    \begin{subfigure}{.32\textwidth}
\centering
    \includegraphics[width=\textwidth]{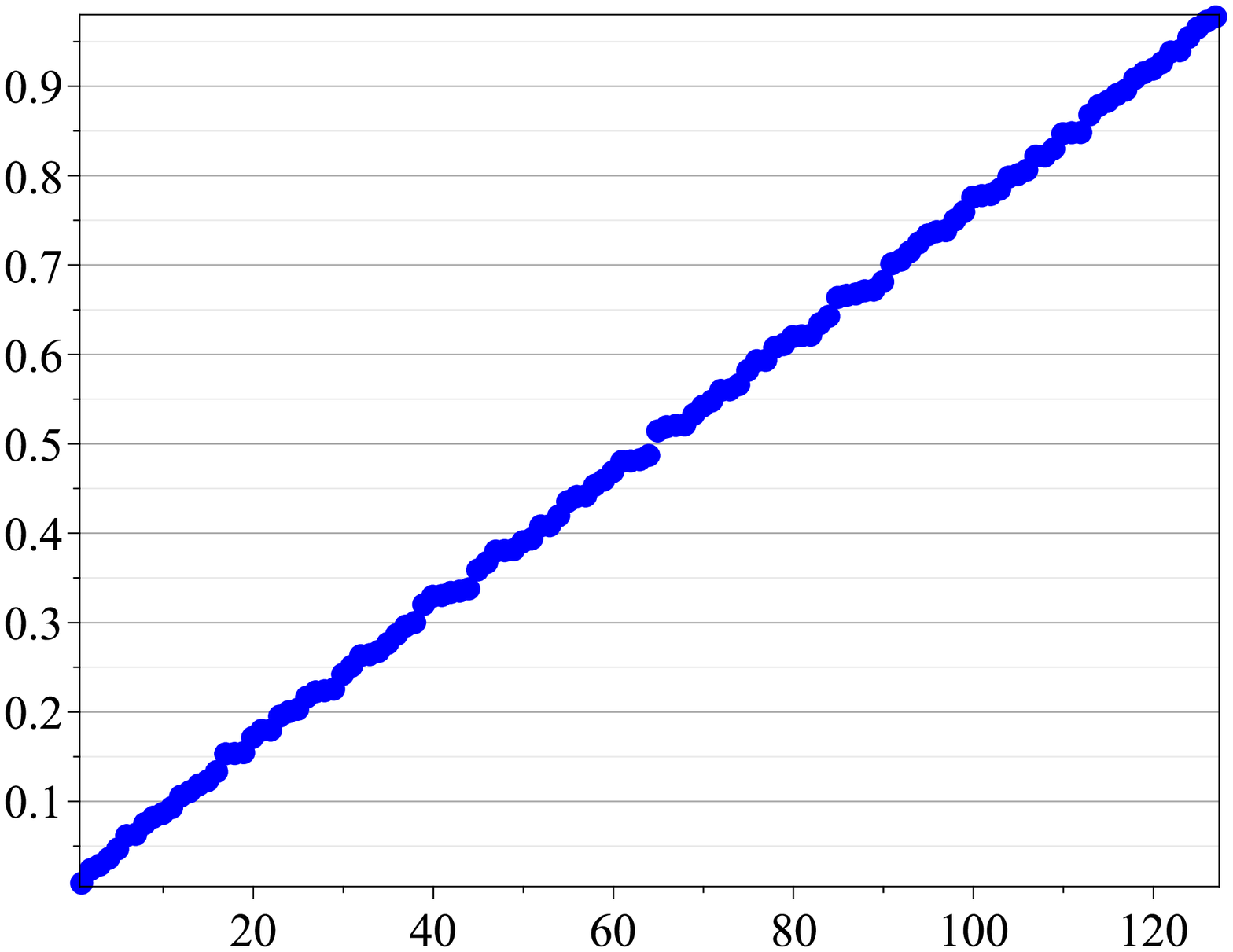}
    \caption{$m=7$}
    \label{fig:avr7-sort}
    \end{subfigure}
\begin{subfigure}{.32\textwidth}
    \centering
    \includegraphics[width=\textwidth]{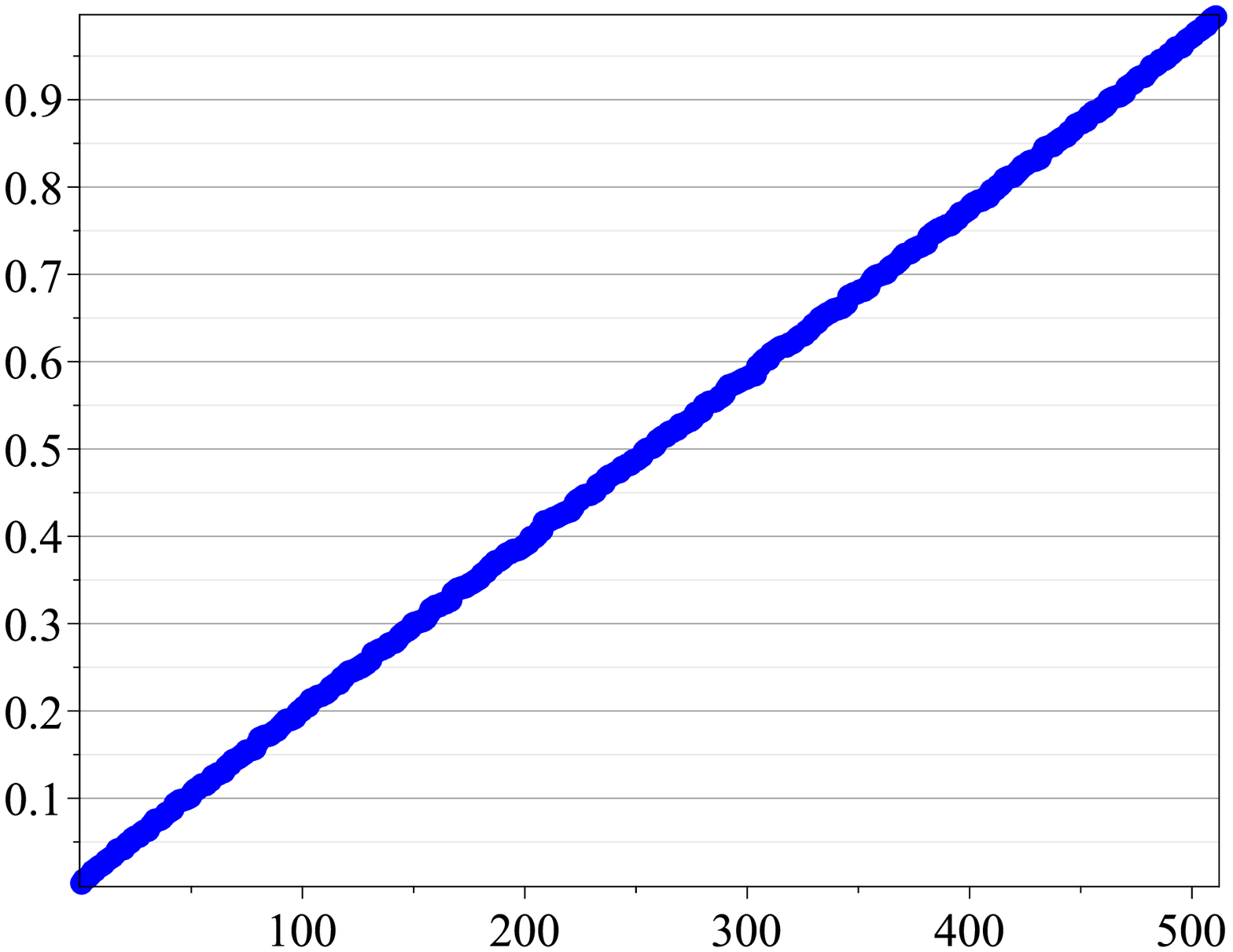}
    \caption{$m=9$}
    \label{fig:avr9-sort}
    \end{subfigure}
    \caption{Sorted $\Avr(\Bha{W^{\uv}})$ for all $\uv\in\{0,1\}^m.$}\label{fig:Avr-sort}
\end{figure}

\begin{Lemma}Let $m$ be a strictly positive integer, $n=2^m$, and $\uv\in \{0,1\}^m.$ Then
\begin{equation}
\Avr\left(W^{\uv}\right)=\dfrac{1}{n+1}\sum\limits_{i=2^{m-|\uv|}}^n\dfrac{N_i(\Css^{\uv})}{\binom{n}{i}}.
\end{equation}
\begin{equation}
\Avr\left(W^{\uv}\right)+\Avr\left(W^{\overline{\uv}}\right)=1.
\end{equation}
\end{Lemma}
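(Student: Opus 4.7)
The plan is to derive both identities directly from two results stated earlier in the paper: Theorem \ref{thm:reliab_comp}, which equates $\Bha{W^{\uv}}(p)$ with the reliability polynomial $\Rel(\Css^{\uv};p)$, and the duality relation $\Bha{W^{\overline{\uv}}}(p) = 1 - \Bha{W^{\uv}}(1-p)$.

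For the first identity, I would start from the Bernstein-form expression
\[
\Bha{W^{\uv}}(p) \;=\; \Rel(\Css^{\uv};p) \;=\; \sum_{i=0}^{n} N_i(\Css^{\uv})\, p^i(1-p)^{n-i},
\]
and integrate termwise over $[0,1]$. The key ingredient is the standard Beta integral
\[
\int_0^1 p^i(1-p)^{n-i}\,dp \;=\; \frac{i!\,(n-i)!}{(n+1)!} \;=\; \frac{1}{(n+1)\binom{n}{i}},
\]
which immediately produces $\Avr(W^{\uv}) = \frac{1}{n+1}\sum_{i=0}^{n}\frac{N_i(\Css^{\uv})}{\binom{n}{i}}$. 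To sharpen the lower summation index to $2^{m-|\uv|}$, I invoke Proposition \ref{pr:param_comp}: any $s,t$-path in $\Css^{\uv}$ must traverse at least $l = 2^{m-|\uv|}$ devices, so $N_i(\Css^{\uv}) = 0$ for every $i < 2^{m-|\uv|}$. Truncating the zero terms yields the stated formula.

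For the complementarity identity, the argument is essentially a change of variables. Using the duality relation cited above,
\[
\Avr\!\left(W^{\overline{\uv}}\right) \;=\; \int_0^1 \Bha{W^{\overline{\uv}}}(p)\,dp \;=\; \int_0^1 \bigl(1 - \Bha{W^{\uv}}(1-p)\bigr)\,dp,
\]
and the substitution $q = 1-p$ turns the remaining integral into $\int_0^1 \Bha{W^{\uv}}(q)\,dq = \Avr(W^{\uv})$, giving $\Avr(W^{\overline{\uv}}) = 1 - \Avr(W^{\uv})$.

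Neither step presents a genuine obstacle, since both rely on already-established identities; the only point requiring care is the justification of the lower index in the sum, where one must explicitly tie the minimal path length from Proposition \ref{pr:param_comp} to the vanishing of the coefficients $N_i(\Css^{\uv})$. One could alternatively express the second identity as a corollary of the first by computing $N_i(\Css^{\overline{\uv}}) = \binom{n}{i} - N_{n-i}(\Css^{\uv})$ via network duality, but the substitution-based proof above is shorter and avoids any combinatorial bookkeeping.
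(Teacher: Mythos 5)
Your proof is correct, and in fact the paper states this lemma without giving any proof at all, so there is nothing to compare against. Your argument — expanding $\Bha{W^{\uv}}(p)=\Rel(\Css^{\uv};p)$ in the Bernstein form of \eqref{rel:N_form}, integrating termwise with the Beta integral $\int_0^1 p^i(1-p)^{n-i}\,dp=\tfrac{1}{(n+1)\binom{n}{i}}$, dropping the coefficients $N_i(\Css^{\uv})=0$ for $i<2^{m-|\uv|}$ via the minimal path length of Proposition \ref{pr:param_comp}, and obtaining the complementarity identity from \eqref{eq:dual} by the substitution $q=1-p$ — is evidently the intended one, since the resulting formula is taken as the unproved starting point of the computation in Theorem \ref{thm:avr_pos}. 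Nothing is missing.
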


\begin{Proposition}\label{pr:ord_pos_avr}
Let $m$ be a strictly positive integer and $\uv,\vv$ be two binary vectors of length $m$ such that $\uv\leq \vv$. Then 
\begin{equation} \uv\leq \vv\Rightarrow \Avr\left(W^{\uv}\right)\leq \Avr\left(W^{\vv}\right).
\end{equation}
\end{Proposition}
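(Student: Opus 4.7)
The statement is essentially the monotonicity of the integral applied to pointwise-ordered integrands, so the plan is to unfold the definitions carefully and invoke that monotonicity. The hypothesis $\uv\leq\vv$ is, by Definition \ref{def:polar_code}, exactly the pointwise inequality
\[
\Bha{W^{\uv}}(p)\;\leq\;\Bha{W^{\vv}}(p)\qquad \text{for every } p\in[0,1].
\]
By Theorem \ref{thm:reliab_comp}, both functions equal reliability polynomials of the compositions $\Css^{\uv}$ and $\Css^{\vv}$, hence are polynomials in $p$ with values in $[0,1]$ and are in particular continuous, bounded, and Riemann-integrable over the unit interval.

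First I would apply the operator $\int_{0}^{1}(\cdot)\,dp$ to both sides of the pointwise inequality above. Monotonicity of the Riemann integral for bounded continuous functions then yields
\[
\int_{0}^{1}\Bha{W^{\uv}}(p)\,dp\;\leq\;\int_{0}^{1}\Bha{W^{\vv}}(p)\,dp,
\]
which is, by the definition of $\Avr$, exactly the desired conclusion $\Avr(W^{\uv})\leq \Avr(W^{\vv})$, i.e., $\uv\LAR\vv$.

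There is no real obstacle: the step $\leq \Rightarrow \LAR$ is a direct consequence of integration being an order-preserving functional on $C([0,1])$, and requires no combinatorial information about the coefficients $N_i(\Css^{\uv})$ beyond what is already encoded in the hypothesis $\uv\leq\vv$. The content of the statement is structural rather than computational: it records that the pointwise partial order on synthetic channels is a refinement of the average pre-order $\LAR$, which is exactly the red edge from $\leq$ to $\LAR$ drawn in Figure \ref{fig:order-rel-BEC} and positions $\LAR$ as the coarsest layer of the hierarchy $\weako,\preceq,\preceq_d,\leq,\LAR$ developed in the paper.
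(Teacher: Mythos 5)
Your proof is correct and is exactly the intended argument: the paper states this proposition without proof precisely because, once $\uv\leq\vv$ is unfolded via \eqref{eq:rel-order} into the pointwise inequality $\Bha{W^{\uv}}(p)\leq\Bha{W^{\vv}}(p)$ on $[0,1]$, the conclusion follows immediately from the monotonicity of the integral applied to these continuous (polynomial) functions. Nothing further is needed.
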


\begin{table}
\caption{Average reliability of the synthetic channels}\label{tab:avr}
\vspace{10pt}
\centering
  \resizebox{\textwidth}{!}{%
\begin{tabular}{|c|c|c|c|c|c|c|c|c|c|c|c|c|c|c|c|}
\hline
\multicolumn{16}{|c|}{$m=2$}\\
\hline
0&1&2&3&&&&&&&&&&&&\\
\hline
$0.20$&$0.47$&$0.53$&$0.80$&&&&&&&&&&&&\\
\hline
\hline
\multicolumn{16}{|c|}{$m=3$}\\
\hline
0&1&2&4&3&5&6&7&&&&&&&&\\
\hline
$0.11$&$0.29$&$0.34$&$0.41$&$0.59$&$0.66$&$0.71$&$0.89$&&&&&&&&\\
\hline
\hline
\multicolumn{16}{|c|}{$m=4$}\\
\hline
0&1&2&4&8&3&5&6&9&10&12&7&11&13&14&15\\
\hline
$0.06$&$0.16$&$0.20$&$0.24$&$0.30$&$0.38$&$0.44$&$0.48$&$0.52$&$0.56$&$0.62$&$0.70$&$0.76$&$0.80$&$0.84$&$0.94$\\
\hline
\end{tabular}
}
\end{table}

In Table \ref{tab:avr} we compute the $\Avr\left(W^{\uv}\right)$ all the binary vectors $\uv\in \{0,1\}^m$ for $m \in \{2,3,4\}.$ Notice that in this case Proposition \ref{pr:ord_pos_avr} applies, since we know that up to $m=4$ the synthetic channels can be totally ordered over the $\BEC$ \cite{D17,WFS17}. Starting from $m=5$ this property is no longer true. When $\uv$ and $\vv$ are no longer comparable, i.e., there is $p_0\in (0,1)$ such that $\Bha{W^{\uv}}(p_0)=\Bha{W^{\vv}}(p_0)$, we can still decide whether in average $\uv$ is optimal compared with $\vv.$ The set of non comparable pairs $(\uv,\vv)$ for $m=5$ is $\{(3,16),(12,17),(7,20),(7,24),(11,24),(14,19),(15,28)\}.$ Notice that half of the pairs are coming from duality, i.e., if $(\uv,\vv)$ are not comparable then $(\overline{\uv},\overline\vv)$ are also non-comparable. However, these are ordered with respect to average reliability.  The average reliability for the first 4 non-comparable pairs are $(0.221,0.216), (0.396,0.383),$ $(0.4712,0.4710), (0.4712, 0.5288)$. Hence, for $m=5$ the ordering with respect to the average reliability is $0,1,2,4,8,16,3,5,6,9,10,17,12,18,20,7,24$, and the rest can be completed by symmetry. 

\begin{figure}[!h]
 \hrulefill
\begin{Example}The ordering induced by the average reliability.
\begin{itemize}
\item $m=5$
\[0, \underbrace{1, 2, 4, 8, 16}_{\RM(1,5)},\overbrace{ 3, 5, 6, 9, 10, 17, 12, 18, 20}^{\RM(2,5)},\underbrace{7}_{\RM(3,5)} ,\overbrace{ 24}^{\RM(2,5)}\]
\item[\noindent]
\item $m=6$
\[0, \underbrace{1, 2, 4, 8, 16}_{\RM(1,6)}, \overbrace{3, 5}^{\RM(2,6)},\underbrace{ 32}_{\RM(1,6)}, \overbrace{6, 9, 10, 17, 12, 18, 33, 20}^{\RM(2,6)},\underbrace{ 7}_{\RM(3,6)}, \overbrace{34, 24}^{\RM(2,6)},\underbrace{ 11}_{\RM(3,6)},\overbrace{ 36}^{\RM(2,6)},\]\[\underbrace{  13, 19, 14}_{\RM(3,6)} , \overbrace{ 40}^{\RM(2,6)},\underbrace{  21}_{\RM(3,6)}, \overbrace{ 48}^{\RM(2,6)}, \underbrace{ 22, 35, 25, 37, 26, 38, 28, 41}_{\RM(3,6)}\]
\end{itemize}
\end{Example}

 \hrulefill
\vspace*{4pt}
\end{figure}

Our simulations have shown that, considering the relation $\LAR$  in the set of the synthetic channels, in each sub-interval $(i/10,(i+1)/10])$, for $0\leq i\leq 9$, we have a rough proportion of $2^m/10$ binary vectors $\uv.$ So, roughly speaking an uniform distribution could be used to approximate the number of $\uv$ inside each sub-interval (illustrated in Figure \ref{fig:Avr-sort}), with respect to $\Avr$ (see Table \ref{tab:distrib-avr} for $5\leq m\leq 11.$) 
\begin{table}[!h]
    \centering
    \caption{Number of $\uv\in\{0,1\}^m$ that satisfy $\Avr(\Bha{W^{\uv}})\in (i/10,(i+1)/10]$, for $0\leq i<5$,  $\epsilon_m=2^{m-4}/10.$}
    \label{tab:distrib-avr}
    \begin{tabular}{|c||c|c|c|c|c||c|}
         \hline
         $m$&$(0,0.1]$&$(0.1,0.2]$&$(0.2,0.3]$&$(0.3,0.4]$&$(0.4,0.5]$&$[\lfloor 2^m/10-\epsilon_m\rfloor,\lceil 2^m/10+\epsilon_m\rceil]$ \\
         \hline\hline
        5 &2&3&4&4&3&$[3,4]$\\
        \hline
         6&5&7&6&8&6&$[6,7]$\\
        \hline
     7&11&13&14&13&13&$[12,14]$\\
        \hline
        8&23&25&27&27&26&$[24,28]$\\
        \hline
        9&49&51&50&55&51&$[48,55]$\\
        \hline
        10&99&104&98&107&104&$[97,109]$\\
        \hline
        11&199&209&204&204&208&$[194,218]$\\
        \hline
    \end{tabular}
\end{table}

\subsection{Relation to $\beta$-expansion}

$\beta$-expansion \cite{He_2017} is a well-known method for an efficient construction of Polar codes. Hence, it is with no surprise that our results on average reliability determine possible more refined choices of the variable $\beta.$ Let us begin by defining the method.
\begin{equation}
    \beta(\uv)=\sum\limits_{i=0}^{m-1}u_i\beta^i
\end{equation}

In \cite{He_2017} the authors proved that for any $\beta\in(1,\infty)$ the order induced by $\beta$ on the sequence of synthetic channels respects the order relation $\preceq.$ In particular this means that if $\uv\preceq \vv$ then $\beta(\uv)\leq\beta(\vv)$ and this for any value of $\beta>1.$ Some values of $\beta$ are of high interest, in particular $\beta=2^{1/4}$, when $W$ is designed for Additive White Gaussian Noise (AWGN). In the case of AWGN, the authors in \cite{He_2017} proposed a procedure in which an interval for $\beta$ is determined, interval that converges to a value close to $2^{1/4}.$ Notice that in \cite{He_2017} the order induced by $\beta$ is not valid for any Signal to Noise Ratio value, but it tries to cover as much as possible the interval $[0,1].$ A natural question that one could raise is whether there is a $\beta$-expansion for the average reliability, i.e., is there a real value $\beta$ such that $\beta$ and $\Avr$ are identical over the set of binary vectors of length $m.$ There is a significant difference between the two relations. In our case, not only that $W$ is a BEC but also the pre-order induced by the average reliability is total over $\Mon/\equiv_{\mathrm{Avr}}$ and holds for the entire interval $[0,1].$  

\begin{Remark}
By computer simulations one can easily prove that for $m\leq 5,$ there is $\beta\in(1,\infty)$ such that the order induced by $beta$ and the pre-order induced by the average reliability coincide. It can be done by simply tacking $\beta=1.22$.
\end{Remark}

\begin{Conjecture}
For $m>6$ we did not find a value of $\beta$ for which the two aforementioned relations are equal.
Moreover, for $\beta\sim 1.22$ the number of elements having similar mutual relations with respect to the two relations is minimized (see Table \ref{tab:non-comp-beta-avr}).   
\end{Conjecture}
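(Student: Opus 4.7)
The statement has two parts. For part (i)---no $\beta$-expansion matches the pre-order $\LAR$ for any $m > 6$---my plan begins with a compactness-plus-finiteness observation. For fixed $m$ and every ordered pair $(\uv, \vv) \in (\{0,1\}^m)^2$, the map
\[
\beta \mapsto \beta(\vv) - \beta(\uv) = \sum_{i=0}^{m-1}(v_i - u_i)\beta^i
\]
is a polynomial with coefficients in $\{-1, 0, 1\}$ and hence has at most $m-1$ real roots in $(1, \infty)$. The union of such roots over all $\binom{2^m}{2}$ pairs partitions $(1, \infty)$ into finitely many open intervals on each of which the $\beta$-induced total order is constant. It therefore suffices, for each $m$ one wishes to treat, to pick one representative $\beta$ per interval and check that some pair is ordered oppositely to $\LAR$. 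For $m = 7$ this is a finite verification which (empirically, per the authors) already produces a witness $(\uv, \vv)$.

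To extend the witness to all $m \ge 7$, I would try to lift by zero-padding at the low indices: given a discrepant pair $(\uv, \vv) \in (\{0,1\}^7)^2$, consider $\uv'' = (\bm{0}_{m-7}, \uv)$ and $\vv'' = (\bm{0}_{m-7}, \vv)$. By Theorem \ref{thm:reliab_comp} and the series-composition interpretation, this yields $\Bha{W^{\uv''}}(p) = \bigl(\Bha{W^{\uv}}(p)\bigr)^{2^{m-7}}$ and likewise for $\vv''$. Because $x \mapsto x^{2^{m-7}}$ is strictly monotone on $[0, 1]$, the sign of $\Bha{W^{\uv''}}(p) - \Bha{W^{\vv''}}(p)$ agrees pointwise with that of $\Bha{W^{\uv}}(p) - \Bha{W^{\vv}}(p)$. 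The main obstacle here is that the reweighted integral $\int_0^1 \bigl(\Bha{W^{\uv}}(p)\bigr)^{2^{m-7}}\,dp$ need not preserve the sign of the original $\Avr$-comparison; I would try to push through by exploiting the fact that the pointwise-inequality set has positive measure and treat the degenerate knife-edge cases (where $\Avr$ equality could arise after padding) using the integer Bernstein coefficients $N_i(\Css^{\uv})$ from Section~2 to exclude them combinatorially. On the $\beta$-expansion side the shift in indices is harmless since the low-order coefficients of $\beta(\uv'') - \beta(\vv'')$ are all zero.

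The second half of the conjecture, that $\beta \approx 1.22$ minimizes the disagreement count $D(\beta, m) = |\{(\uv, \vv) : \uv \LAR \vv \text{ but } \beta(\uv) > \beta(\vv)\}|$, seems genuinely harder. The same finiteness argument shows that $D(\cdot, m)$ is a piecewise-constant step function of $\beta$ with finitely many jumps, so for each $m$ the minimum is attained on a union of intervals that can in principle be enumerated. A rigorous asymptotic identification of the limiting minimizer as $m \to \infty$ would likely require estimating $\Avr(W^{\uv})$ to high precision for generic $\uv$, a program seeded by the closed form $\Avr(W^{1^i 0^{m-i}}) = 1 - 1/\binom{2^i + 2^{i-m}}{2^i}$ established earlier in the paper. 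Matching the dominant asymptotics of $\Avr$ against $\sum_i u_i \beta^i$ should yield a transcendental equation in $\beta$ whose numerical root lies near $1.22$; converting this heuristic into a proof is the principal obstacle, and is presumably the reason the authors record the result only as a conjecture.
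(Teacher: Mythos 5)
The statement you are addressing is labelled a \emph{Conjecture} in the paper and is supported only by the computations reported in Table \ref{tab:non-comp-beta-avr}; the authors give no proof, so there is nothing of theirs to compare your argument against. Your fixed-$m$ observation is sound and worth recording: for each pair $(\uv,\vv)$ the map $\beta\mapsto\beta(\vv)-\beta(\uv)$ is a polynomial with coefficients in $\{-1,0,1\}$, so the $\beta$-induced order is piecewise constant on $(1,\infty)$ with finitely many breakpoints, and deciding the claim for any single $m$ reduces to a finite check of one representative $\beta$ per interval. This is essentially a formalisation of what the authors' simulations already do, and it legitimately establishes the first sentence for $m=7,8,9$ given the numerics.

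The genuine gap is in your lifting step from $m=7$ to all $m\ge 7$. Zero-padding at the low indices does give $\Bha{W^{\uv''}}(p)=\bigl(\Bha{W^{\uv}}(p)\bigr)^{2^{m-7}}$ and does preserve the $\beta$-comparison, since $\beta(\uv'')=\beta^{m-7}\,\beta(\uv)$. But it does not preserve the $\Avr$-comparison: $\int_0^1 f\,dp\le\int_0^1 g\,dp$ does not imply $\int_0^1 f^k\,dp\le\int_0^1 g^k\,dp$ when $f$ and $g$ cross, which is exactly the situation for incomparable synthetic channels. Worse, since $\Bha{W^{\uv}}(1)=1$ for every $\uv$, a Laplace-type estimate shows that for large $k$ the integral $\int_0^1 \bigl(\Bha{W^{\uv}}\bigr)^k\,dp$ is governed by the local behaviour of $\Bha{W^{\uv}}$ near $p=1$, so the padded comparison is eventually decided by the derivatives at $p=1$ rather than by the original averages; a discrepant pair at $m=7$ can therefore cease to be discrepant after padding. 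The remark that ``the pointwise-inequality set has positive measure'' cannot repair this, and the appeal to the Bernstein coefficients $N_i(\Css^{\uv})$ is not developed into an argument. The second half of the conjecture (minimality of the disagreement count near $\beta\approx 1.22$) you leave entirely heuristic, as you yourself acknowledge. In short: your proposal correctly explains why the claim is finitely checkable for each fixed $m$, but it does not prove it for all $m>6$, and the statement remains, as in the paper, a conjecture supported by numerics.
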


\begin{table}[!h]
    \centering
        \caption{Number of pairs $(\uv,\vv)$ satisfying $\Avr(\Bha{W^{\uv}})\leq \Avr(\Bha{W^{\vv}})$ for which $\not\exists \beta\in(1,\infty)$ s.t. $\beta(\uv)\leq \beta(\vv).$  }
    \label{tab:non-comp-beta-avr}
    \begin{tabular}{c|c|c|c}
         \hline
         $m$&$\beta$&number of incompatible pair of elements & $2^m$\\
        \hline\hline
         4& $(1,1.32]$& --& 16\\
         5& $(1.18,1.22]$& --& 32\\
         6& $1.22$&2&64\\
         7&$1.22$ & 10&128\\
         8&$1.22$ & 36 &256\\
         9&$1.22$ & 99 &512\\
         \hline
    \end{tabular}
\end{table}

\subsection{Threshold points of the binary erasure polarization sub-channels}

The fact that when $m$ goes to infinity the Bhattacharyya polynomial has a sharp transition from zero to one when $m$ goes to infinity has already been proven (\cite{mondelli2016}).
\begin{Lemma}
\begin{equation}
\lim\limits_{m\to \infty}\Bha{W^{\uv}}=\left\{
\begin{array}{cc}
0 &p\in [0,p_0(\uv))\\
1&p \in (p_0(\uv),1]\\
\end{array}
\right.
\end{equation}
\end{Lemma}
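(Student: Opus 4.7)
The statement concerns a fixed infinite bit-sequence $\uv = (u_0, u_1, \ldots)$ and its truncations $\uv_m = (u_0, \ldots, u_{m-1})$: it asserts that the polynomials $\Bha{W^{\uv_m}}(p)$ converge pointwise, as $m \to \infty$, to the indicator function of $(p_0(\uv), 1]$. My approach combines the reliability-polynomial interpretation with the dynamics induced by polarization.

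First, set $Z_m(p) := \Bha{W^{\uv_m}}(p)$. By Theorem \ref{thm:reliab_comp}, $Z_m(\cdot)$ is the reliability polynomial of $\Css^{\uv_m}$, which is strictly increasing on $[0,1]$ with $Z_m(0)=0$ and $Z_m(1)=1$. Thus each level set $p_m := Z_m^{-1}(1/2)$ is a single point in $(0,1)$, and I would define $p_0(\uv)$ as a limit point of $(p_m)_m$, its existence being secured by compactness. The recursion
\begin{equation*}
Z_{m+1}(p) = \begin{cases} Z_m(p)^2 & \text{if } u_m = 0, \\ 2Z_m(p) - Z_m(p)^2 & \text{if } u_m = 1, \end{cases}
\end{equation*}
governs the dynamics: the map $f(Z)=Z^2$ has $0$ as an attracting fixed point and $1$ as repelling, while $g(Z)=2Z-Z^2=1-(1-Z)^2$ has $1$ as attracting and $0$ as repelling.

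Second, I would prove the sharp convergence by showing that $Z_m'(p_m) \to \infty$ as $m \to \infty$. Via the Margulis--Russo formula, $Z_m'(p)$ equals the total edge-influence of the 2TN $\Css^{\uv_m}$ at probability $p$; at critical reliability, this total influence grows unboundedly in the size $2^m$ by general sharp-threshold theorems for monotone properties (e.g.\ Friedgut--Kalai or Bourgain). Consequently, for any $\epsilon > 0$ the ``transition window'' $\{p : \epsilon \leq Z_m(p) \leq 1-\epsilon\}$ has length $O(1/Z_m'(p_m))$, which tends to $0$. Combined with convergence of $p_m$ to $p_0(\uv)$, this forces $Z_m(p) \to 0$ for every fixed $p < p_0(\uv)$ and $Z_m(p) \to 1$ for every fixed $p > p_0(\uv)$.

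The main obstacle is the derivative blow-up at the critical point for a deterministic sequence $\uv$. Arikan's martingale argument yields almost-sure convergence of $Z_m$ to $\{0,1\}$ for random polarization trajectories, but for a fixed $\uv$ one must use the explicit series-parallel structure of $\Css^{\uv_m}$ (built by alternating width-doubling and length-doubling operations) to obtain a uniform influence estimate. A cleaner alternative, perhaps better suited to this setting, is a direct dynamical-systems analysis on $[0,1]$: both $f$ and $g$ are strictly expanding away from their attractor, so any orbit starting at $p \neq p_0(\uv)$ eventually enters a strict basin of attraction of $0$ or $1$, whereupon the superexponential contraction under iterated $f$ or $g$ delivers the desired convergence.
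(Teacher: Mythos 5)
First, be aware that the paper does not actually prove this lemma: it is stated as known and attributed to \cite{mondelli2016}, so there is no in-paper argument to measure yours against. Judged on its own terms, your sketch assembles the right objects (the reliability polynomial $Z_m$, its strict monotonicity in $p$, the median $p_m$, the recursion through $h_0(x)=x^2$ and $h_1(x)=2x-x^2$, and an influence bound for the transition window), and the Friedgut--Kalai route is in principle viable here because the automorphism group of $\Css^{\uv}$ (an iterated wreath product permuting devices level by level) acts transitively on the $2^m$ edges and preserves $s$--$t$ connectivity --- but that transitivity is a hypothesis you must verify rather than invoke, since Friedgut--Kalai requires it and Bourgain's theorem has a different conclusion. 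Note also that with the paper's convention $W^{\uv}=((W^{u_{m-1}})^{\cdots})^{u_0}$ a newly appended bit acts innermost, so the recursion is $Z_{m+1}(p)=Z_m(h_{u_m}(p))$ rather than $h_{u_m}(Z_m(p))$; this does not change the substance, but it does change your dynamical picture, and the window bound needs a derivative lower bound throughout the window, not only at $p_m$.

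The genuine gap is your treatment of $p_0(\uv)$. Compactness gives only a subsequential limit of the medians $(p_m)$, and convergence of the medians is precisely the content of the lemma, not a technicality: even with a transition window of width $O(1/m)$, if $(p_m)$ has two distinct limit points $a<b$, then for every $p\in(a,b)$ the values $Z_m(p)$ oscillate between near $0$ and near $1$, and the pointwise limit fails on a whole interval. This can really happen for a general bit sequence: a long block of zeros drives the median toward $1$ (with your recursion $Z_{m+k}=Z_m^{2^k}$, so the median solves $Z_m(p)=2^{-2^{-k}}\to 1$), while a long block of ones drives it toward $0$, so an adversarial $\uv$ makes $(p_m)$ oscillate and the statement, read for arbitrary $\uv$, is false. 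A correct proof must therefore either restrict to sequences whose medians converge (which covers the families $W^{(1^i0^{m-i})}$ the paper actually uses) or establish that convergence; your sketch does neither. For the same reason your ``cleaner alternative'' collapses: $x\mapsto x^2$ and $x\mapsto 2x-x^2$ are not expanding away from their attractors (the first is a contraction on all of $[0,1/2]$), and for a nonautonomous composition of the two an orbit need not ever enter a basin of attraction --- the adversarial sequence above keeps $Z_m(p)$ inside $[1/4,3/4]$ forever at the critical point and shuttles a whole interval of nearby orbits back and forth.
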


However, finding the point $p_0(\uv)$ where this transition holds is not trivial (see \cite{Roth2019,He_2017}). Here we will use the average reliability to determine this point for some specific channels.  
\begin{Lemma}\label{pr:change_lim_avr}
\begin{equation}
\lim\limits_{m\to \infty}\Avr\left(W^{\uv}\right)=1-p_0(\uv).
\end{equation}
\end{Lemma}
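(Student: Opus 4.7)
The plan is to recognize this as a direct application of Lebesgue's bounded convergence theorem to the preceding lemma. By Theorem \ref{thm:reliab_comp}, for each $m$ the function $p \mapsto \Bha{W^{\uv}}(p) = \Rel(\Css^{\uv};p)$ is a polynomial in $p$, hence continuous, and takes values in $[0,1]$ (being a reliability polynomial, or equivalently the Bhattacharyya parameter of a BEC, which is itself an erasure probability). So the entire sequence of integrands $\{\Bha{W^{\uv}}\}_m$ is uniformly bounded by the constant $1$, which is integrable on the finite-measure interval $[0,1]$.

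From the previous lemma, the pointwise limit exists at every $p \in [0,1] \setminus \{p_0(\uv)\}$, equalling $0$ on $[0,p_0(\uv))$ and $1$ on $(p_0(\uv),1]$, that is, the indicator $\mathbf{1}_{(p_0(\uv),1]}(p)$ almost everywhere (the singleton $\{p_0(\uv)\}$ is Lebesgue-null, so the value of the limit there is immaterial for integration).

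I would then apply the bounded convergence theorem to interchange limit and integral:
\begin{equation*}
\lim_{m\to\infty}\Avr(W^{\uv}) = \lim_{m\to\infty}\int_0^1 \Bha{W^{\uv}}(p)\,dp = \int_0^1 \lim_{m\to\infty}\Bha{W^{\uv}}(p)\,dp = \int_{p_0(\uv)}^1 dp = 1 - p_0(\uv),
\end{equation*}
which is the claimed identity.

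There is essentially no hard step here: the entire content lies in the preceding lemma (the sharp-transition behaviour of $\Bha{W^{\uv}}$), together with the trivial bound $0 \le \Bha{W^{\uv}} \le 1$. The only mild subtlety worth stating explicitly in the write-up is that $\uv$ must be interpreted as a sequence of vectors indexed by $m$ (for instance $\uv = (1^i 0^{m-i})$ as in the threshold-points application that motivates this lemma), and that $p_0(\uv)$ is consequently the limiting threshold associated to this sequence; once this is clarified, the bounded convergence argument goes through verbatim.
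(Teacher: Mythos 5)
Your argument is correct: the paper states Lemma \ref{pr:change_lim_avr} without proof, and the bounded convergence theorem applied to the sharp-transition lemma (integrands uniformly bounded by $1$, converging pointwise a.e.\ to $\mathbf{1}_{(p_0(\uv),1]}$) is exactly the natural way to justify it. Your remark that $\uv$ must be read as a sequence of vectors indexed by $m$, with $p_0(\uv)$ the associated limiting threshold, correctly identifies the one point the paper leaves implicit.
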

A particular interesting channel analyzed in \cite{Roth2019,WFS17} is the synthetic channel $W^{(1^i0^{m-i})}.$ More exactly, the authors analyze the sharp transition of  $W^{(1^i0^{m-i})}$ from $0$ to $1$ when $m$ tends to infinity, in function of the limit ${i}/{m-i}.$ Here, we will give an exact formula for the average reliability of $W^{(1^i0^{m-i})}$. This result combined with Lemma \ref{pr:change_lim_avr} will allow us to obtain a finer approximation of $p_0(\uv).$ To achieve our goal, we will look at the corresponding 2TN, namely at $\Css^{(1^i0^{m-i})}$. For simplification we use $l=2^{m-i}, w=2^{i}$ and $n=2^m.$ Notice that
\begin{equation} 
\Rel\left(\Css^{(1^i0^{m-i})};p\right)=1-(1-p^l)^w.
\end{equation}

\begin{Theorem}\label{thm:rel_POS} \begin{equation}
\Rel\left(\Css^{(1^i0^{m-i})};p\right)=\sum\limits_{i=l}^{n}\sum\limits_{j=1}^{\lfloor\frac{i}{l}\rfloor}(-1)^{j+1}\binom{w}{j}\binom{n-jl}{n-i}p^i(1-p)^{n-i}.
\end{equation}
\end{Theorem}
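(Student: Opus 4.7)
The plan is to start from the closed-form expression $\Rel(\Css^{(1^i0^{m-i})};p)=1-(1-p^l)^w$ stated immediately before the theorem, and convert the monomial form to the Bernstein form \eqref{rel:N_form} that defines the reliability polynomial's coefficients $N_i$. Everything reduces to two applications of the binomial theorem and a careful index swap.

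First, I would expand the outer power via the binomial theorem to obtain
\begin{equation*}
1-(1-p^l)^w = -\sum_{j=0}^{w}\binom{w}{j}(-1)^{j}p^{jl} + 1 = \sum_{j=1}^{w}(-1)^{j+1}\binom{w}{j}\, p^{jl}.
\end{equation*}
Next, to lift each monomial $p^{jl}$ into Bernstein form, I would multiply by $(p+(1-p))^{n-jl}=1$ and apply the binomial theorem again:
\begin{equation*}
p^{jl} = p^{jl}\sum_{k=0}^{n-jl}\binom{n-jl}{k}p^{k}(1-p)^{n-jl-k}.
\end{equation*}
Setting $i=jl+k$ (so $k=i-jl$ and $i$ ranges from $jl$ up to $n$), and using the symmetry $\binom{n-jl}{i-jl}=\binom{n-jl}{n-i}$, this rewrites as
\begin{equation*}
p^{jl}=\sum_{i=jl}^{n}\binom{n-jl}{n-i}\,p^{i}(1-p)^{n-i}.
\end{equation*}

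Substituting back gives a double sum indexed by $j\in\{1,\dots,w\}$ and $i\in\{jl,\dots,n\}$. The final step is simply to exchange the order of summation: the condition $i\ge jl$ with $j\ge 1$ is equivalent to $i\ge l$ together with $1\le j\le \lfloor i/l\rfloor$, which yields exactly the claimed identity
\begin{equation*}
\Rel\left(\Css^{(1^i0^{m-i})};p\right)=\sum_{i=l}^{n}\sum_{j=1}^{\lfloor i/l\rfloor}(-1)^{j+1}\binom{w}{j}\binom{n-jl}{n-i}p^{i}(1-p)^{n-i}.
\end{equation*}

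There is no real obstacle here: the main thing to watch is the index bookkeeping in the swap (making sure the upper bound becomes $\lfloor i/l\rfloor$ rather than $w$, which is automatic because $\binom{n-jl}{n-i}=0$ whenever $jl>i$, so the sum is self-truncating). As a sanity check, reading off the coefficient of $p^i(1-p)^{n-i}$ produces the path-count formula $N_i(\Css^{(1^i0^{m-i})})=\sum_{j=1}^{\lfloor i/l\rfloor}(-1)^{j+1}\binom{w}{j}\binom{n-jl}{n-i}$, which is consistent with an inclusion–exclusion over the $w$ parallel branches of length $l$ (a branch ``survives'' if all $l$ of its devices are in the selected set of size $i$), giving an alternative combinatorial route should a second proof be desired.
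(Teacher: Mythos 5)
Your proof is correct, and it takes a genuinely different route from the paper. The paper works combinatorially: it identifies the Bernstein coefficients $N_i(\Css^{(1^i0^{m-i})})$ as counts of ``paths of length $i$'' in the network of $w$ parallel chains of $l$ devices each, and derives $N_i=\sum_{j\ge 1}(-1)^{j+1}\binom{w}{j}\binom{n-jl}{n-i}$ by inclusion--exclusion over which of the $w$ chains are fully contained in the chosen $i$-subset --- exactly the argument you sketch as a sanity check at the end. You instead start from the closed form $1-(1-p^l)^w$ and perform a purely algebraic change of basis: one binomial expansion to get $\sum_{j=1}^{w}(-1)^{j+1}\binom{w}{j}p^{jl}$, a second expansion of $(p+(1-p))^{n-jl}$ to lift each monomial into the Bernstein basis, and an index swap whose bound $\lfloor i/l\rfloor$ you justify correctly (both by the direct reindexing and by the vanishing of $\binom{n-jl}{n-i}$ for $jl>i$). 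Your route is arguably tighter: the paper's counting argument is stated somewhat loosely (e.g.\ it writes $\binom{n-l}{i}$ where $\binom{n-l}{i-l}$ is meant, and conflates ``paths of length $i$'' with ``$i$-subsets of devices containing a complete chain''), whereas your computation is mechanical and leaves nothing to interpret. What the paper's approach buys is the direct combinatorial meaning of the coefficients $N_i$, but since the closed form $1-(1-p^l)^w$ is already asserted just before the theorem, nothing is lost by your derivation, and the two proofs usefully corroborate each other.
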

\begin{proof}
In order to prove our result we need to demonstrate that $\forall \;l\leq i\leq n$
\begin{equation}
N_i(\Css^{(1^i0^{m-i})})=\sum\limits_{j=1}^{\lfloor\frac{i}{l}\rfloor}(-1)^{j+1}\binom{w}{j}\binom{n-jl}{n-i}
\end{equation}

The proof is based on an inclusion-exclusion argument. Denote by $\mathcal{P}_i$ the set of paths of length $i$ from $S$ to $T$ for the $\Css^{(1^i0^{m-i})}$. This leads to $\left|\mathcal{P}_i\right|=N_i(\Css^{(1^i0^{m-i})}).$ 

Any path of length $i$ with $l\leq i$ is composed of at least one path of length $l$, hence we have $w$ choices for fixing a path of length $l$ and $\binom{n-l}{i}$ choices for the remaining positions. However, in the  $\binom{n-l}{i}$ choices we might count other $l$ length paths. Hence, we need to subtract the over-counting, which is all the combinations of two length $l$ paths, i.e., $\binom{w}{2}$, times the number of choices for the remaining positions, i.e., $\binom{n-2l}{i-2l}.$ Now, we need to add all the paths that are composed of at least $3$ $l$ paths which equals $\binom{w}{3}\binom{n-3l}{i-3l}.$ And so on till we reached the last level, i.e., $\binom{w}{\lfloor\frac{i}{l}\rfloor}\binom{n-l\lfloor\frac{i}{l}\rfloor}{i-l\lfloor\frac{i}{l}\rfloor}.$

\end{proof}

\begin{Theorem}\label{thm:avr_pos}
\begin{equation}
\Avr\left(W^{(1^i0^{m-i})}\right)=1-\dfrac{1}{\binom{2^i+2^{i-m}}{2^i}}
\end{equation}
\end{Theorem}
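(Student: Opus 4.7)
The plan is to evaluate
\begin{equation*}
\Avr\left(W^{(1^i 0^{m-i})}\right)=\int_0^1 \Bha{W^{(1^i 0^{m-i})}}(p)\,dp
\end{equation*}
directly from the closed form $\Bha{W^{(1^i 0^{m-i})}}(p)=\Rel(\Css^{(1^i 0^{m-i})};p)=1-(1-p^l)^w$ (with $l=2^{m-i}$, $w=2^i$, $n=2^m$) recorded just before Theorem~\ref{thm:rel_POS}, which itself is a consequence of Theorem~\ref{thm:reliab_comp} and the series-parallel structure of $\Css^{(1^i 0^{m-i})}$. Rather than integrating the Bernstein expansion of Theorem~\ref{thm:rel_POS} term by term (which would leave an awkward double alternating sum to collapse), I would recognize the resulting one-dimensional integral as an Euler Beta integral via a power substitution.

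Concretely, writing $\Avr(W^{(1^i 0^{m-i})})=1-\int_0^1 (1-p^l)^w\,dp$ and substituting $u=p^l$, so that $dp=\tfrac{1}{l}u^{1/l-1}du$, transforms the integral into
\begin{equation*}
\int_0^1 (1-p^l)^w\,dp=\frac{1}{l}\int_0^1 u^{1/l-1}(1-u)^w\,du=\frac{1}{l}B\!\left(\frac{1}{l},w+1\right)=\frac{1}{l}\cdot\frac{\Gamma(1/l)\,\Gamma(w+1)}{\Gamma(w+1+1/l)}.
\end{equation*}
Next, using the functional equation $\Gamma(1+1/l)=(1/l)\Gamma(1/l)$ absorbs the prefactor $1/l$, and telescoping $\Gamma(w+1+1/l)/\Gamma(1+1/l)=\prod_{k=1}^{w}(k+1/l)$ reduces the right-hand side to the clean expression $w!\big/\prod_{k=1}^{w}(k+1/l)$, which is precisely $1\big/\binom{w+1/l}{w}$ in the generalized binomial coefficient notation.

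Substituting back $w=2^i$ and $1/l=2^{i-m}$ yields the stated identity. The only real subtlety is interpreting $\binom{2^i+2^{i-m}}{2^i}$ as the generalized binomial coefficient $\prod_{k=1}^{2^i}(k+2^{i-m})\big/(2^i)!$; once this convention is fixed, the derivation is essentially a one-line Beta-integral computation and I do not expect any genuine obstacle. For completeness one could alternatively integrate Theorem~\ref{thm:rel_POS} term by term using $\int_0^1 p^i(1-p)^{n-i}dp=1/((n+1)\binom{n}{i})$ and then simplify
\begin{equation*}
\sum_{i=l}^{n}\sum_{j=1}^{\lfloor i/l\rfloor}\frac{(-1)^{j+1}\binom{w}{j}\binom{n-jl}{n-i}}{(n+1)\binom{n}{i}},
\end{equation*}
but this route merely reproduces the Beta-integral answer through a much more painful inclusion/exclusion collapse, so the substitution argument is preferable.
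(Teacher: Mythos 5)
Your computation is correct, and it reaches the stated formula by a genuinely different route than the paper. The paper never integrates the closed form $1-(1-p^l)^w$ directly: it first expands $\Rel(\Css^{(1^i0^{m-i})};p)$ in the Bernstein basis (Theorem~\ref{thm:rel_POS}, whose coefficients $N_i$ are obtained by inclusion--exclusion on paths), then applies the general lemma $\Avr(W^{\uv})=\frac{1}{n+1}\sum_i N_i(\Css^{\uv})/\binom{n}{i}$, swaps the order of summation, uses the hockey-stick identity $\sum_{i\ge jl}\binom{i}{jl}=\binom{n+1}{jl+1}$, and finally collapses the alternating sum $\sum_{j=0}^{w}(-1)^j\binom{w}{j}\frac{1}{jl+1}$ into $1/\binom{(n+1)/l}{w}$ --- an identity which is itself the Beta integral in disguise, since $\frac{1}{jl+1}=\int_0^1 p^{jl}\,dp$ and $\sum_j(-1)^j\binom{w}{j}p^{jl}=(1-p^l)^w$. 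Your substitution $u=p^l$ reaches $\frac{1}{l}B(1/l,w+1)=1/\binom{w+1/l}{w}$ in one step and is the cleaner derivation of the theorem as stated; what the paper's longer route buys is the intermediate combinatorial data (the path-count coefficients $N_i$ and the Bernstein-form reliability polynomial of Theorem~\ref{thm:rel_POS}), which ties the result into the network-reliability framework used elsewhere in the paper. Your reading of $\binom{2^i+2^{i-m}}{2^i}$ as the generalized binomial coefficient $\prod_{k=1}^{2^i}(k+2^{i-m})/(2^i)!$ is exactly the convention the paper uses (it writes the same quantity as $\binom{(n+1)/l}{w}$ at the end of its proof), so there is no gap there.
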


\begin{proof}
\begin{align*}
\Avr\left(W^{\uv}\right)&=\dfrac{1}{n+1}\sum\limits_{i=l}^n\dfrac{N_i(\Css^{\uv})}{\binom{n}{i}}=\dfrac{1}{n+1}\sum\limits_{i=l}^{n}\sum\limits_{j=1}^{\lfloor\frac{i}{l}\rfloor}(-1)^{j+1}\dfrac{\binom{w}{j}\binom{n-jl}{n-i}}{\binom{n}{i}}\\
&=\dfrac{1}{n+1}\sum\limits_{j=1}^w\sum\limits_{i=jl}^{n}(-1)^{j+1}\dfrac{\binom{w}{j}\binom{n-jl}{n-i}}{\binom{n}{i}}=\dfrac{1}{n+1}\sum\limits_{j=1}^w\sum\limits_{i=jl}^{n}(-1)^{j+1}\dfrac{\binom{w}{j}\binom{i}{jl}}{\binom{n}{jl}}\\
&=\dfrac{1}{n+1}\sum\limits_{j=1}^w(-1)^{j+1}\dfrac{\binom{w}{j}}{\binom{n}{jl}}\sum\limits_{i=jl}^{n}\binom{i}{jl}=\dfrac{1}{n+1}\sum\limits_{j=1}^w(-1)^{j+1}\dfrac{\binom{w}{j}\binom{n+1}{jl+1}}{\binom{n}{jl}}\\
&=\sum\limits_{j=1}^w(-1)^{j+1}\binom{w}{j}\dfrac{1}{jl+1}=1-\sum\limits_{j=0}^w(-1)^{j}\binom{w}{j}\dfrac{1}{jl+1}=1-\dfrac{1}{\binom{\frac{n+1}{l}}{w}}
\end{align*}
\end{proof}

Strightforward, we have
\begin{Corollary}
\begin{equation}
\Avr\left({W^{(0^i1^{m-i})}}\right)=\dfrac{1}{\binom{2^i+2^{i-m}}{2^i}}
\end{equation}
\end{Corollary}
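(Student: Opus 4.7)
The plan is to derive the corollary as an immediate consequence of Theorem \ref{thm:avr_pos} combined with the duality relation for the average reliability stated in the earlier lemma, namely
\begin{equation*}
\Avr\left(W^{\uv}\right)+\Avr\left(W^{\overline{\uv}}\right)=1.
\end{equation*}
The observation that makes this work is that the bit-wise complement of $\uv=(1^i0^{m-i})$ is exactly $\overline{\uv}=(0^i1^{m-i})$, so the two average reliabilities in play are linked by precisely this complementarity identity.

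Concretely, first I would apply $\overline{\uv}=\bm{1_m}\oplus \uv$ coordinate-wise to check that the complement of $(1^i 0^{m-i})$ is indeed $(0^i 1^{m-i})$, which is a one-line verification. Next I would invoke the duality lemma to write
\begin{equation*}
\Avr\left(W^{(0^i1^{m-i})}\right)=1-\Avr\left(W^{(1^i0^{m-i})}\right).
\end{equation*}
Finally, substituting the closed form from Theorem \ref{thm:avr_pos},
\begin{equation*}
\Avr\left(W^{(0^i1^{m-i})}\right)=1-\left(1-\dfrac{1}{\binom{2^i+2^{i-m}}{2^i}}\right)=\dfrac{1}{\binom{2^i+2^{i-m}}{2^i}},
\end{equation*}
yields the claimed identity.

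There is essentially no obstacle: all the work has already been done in Theorem \ref{thm:avr_pos} (the inclusion–exclusion argument on paths in $\Css^{(1^i0^{m-i})}$ and the subsequent telescoping of binomial sums), and the complementarity lemma was proven from the duality relation $\Bha{W^{\overline{\uv}}}(p)=1-\Bha{W^{\uv}}(1-p)$ by a change of variables inside the integral defining $\Avr$. The only thing to check is the bookkeeping on which vector is the complement of which, which is immediate. Hence the corollary is a direct substitution and requires no additional combinatorial effort.
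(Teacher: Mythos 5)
Your proposal is correct and matches the route the paper intends: the corollary is placed immediately after Theorem \ref{thm:avr_pos} with the remark that it is straightforward, and the intended derivation is exactly the complementarity identity $\Avr\left(W^{\uv}\right)+\Avr\left(W^{\overline{\uv}}\right)=1$ applied to $\overline{(1^i0^{m-i})}=(0^i1^{m-i})$. Nothing is missing.
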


Based on Theorem \ref{thm:avr_pos} we can establish new classes of asymptotically "good" channels. For that we will need the following result.
\begin{Lemma}\label{lem:asympt-binom}
\begin{align}
    \lim\limits_{n\to\infty}\begin{pmatrix}
    \frac{n}{\log_2(n)(\log_2(\log_2(n)))}+\frac{1}{\log_2(n)(\log_2(\log_2(n)))}\\ \frac{n}{\log_2(n)(\log_2(\log_2(n)))}\end{pmatrix} &=1.\\
    \lim\limits_{n\to\infty}\begin{pmatrix}
    \frac{n}{\log_2(\log_2(n))}+\frac{1}{\log_2(\log_2(n))}\\ \frac{n}{\log_2(\log_2(n))}\end{pmatrix} &=\infty.\\
        \lim\limits_{n\to\infty}\begin{pmatrix}
    \frac{n}{\log_2(n)}+\frac{1}{\log_2(n)}\\ \frac{n}{\log_2(n)}\end{pmatrix} &=2.
\end{align}    
\end{Lemma}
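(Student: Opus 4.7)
The plan is to observe that all three expressions have the unified form $\binom{a_n+b_n}{a_n}$ where $a_n\to\infty$ and $b_n\to 0^{+}$, with the binomial interpreted through the Gamma function via
\[
\binom{a+b}{a}=\frac{\Gamma(a+b+1)}{\Gamma(a+1)\,\Gamma(b+1)}.
\]
Since $b_n\to 0$ one has $\Gamma(b_n+1)\to\Gamma(1)=1$, so the asymptotics are driven entirely by the ratio $\Gamma(a_n+b_n+1)/\Gamma(a_n+1)$. I would invoke the standard expansion $\Gamma(x+s)/\Gamma(x)=x^{s}\bigl(1+O(s/x)\bigr)$ as $x\to\infty$, uniform for $s$ in a bounded neighbourhood of $0$. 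Taking $x=a_n+1$ and $s=b_n$ yields
\[
\binom{a_n+b_n}{a_n}=\exp\!\bigl(b_n\ln a_n\bigr)\cdot(1+o(1)),
\]
so each of the three limits reduces to computing $\lim_{n\to\infty}b_n\ln a_n$.

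The three cases then become direct substitutions. For part~(1), with $a_n=n/(\log_2 n\cdot\log_2\log_2 n)$ and $b_n=1/(\log_2 n\cdot\log_2\log_2 n)$,
\[
b_n\ln a_n=\frac{\ln n-\ln\log_2 n-\ln\log_2\log_2 n}{\log_2 n\cdot\log_2\log_2 n}\sim\frac{\ln 2}{\log_2\log_2 n}\longrightarrow 0,
\]
so the limit is $e^{0}=1$. For part~(2), with $a_n=n/\log_2\log_2 n$ and $b_n=1/\log_2\log_2 n$, one checks $b_n\ln a_n\sim \ln n/\log_2\log_2 n\to\infty$, giving the limit $+\infty$. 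For part~(3), with $a_n=n/\log_2 n$ and $b_n=1/\log_2 n$,
\[
b_n\ln a_n=\frac{\ln n-\ln\log_2 n}{\log_2 n}\longrightarrow\ln 2,
\]
and the limit equals $e^{\ln 2}=2$.

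The only delicate point in this plan is justifying the Gamma asymptotic $\Gamma(x+s)/\Gamma(x)=x^{s}(1+o(1))$ in the \emph{joint} regime $x\to\infty$, $s\to 0^{+}$, since the classical Tricomi--Erd\'elyi statement is usually quoted for fixed $s$. I would close this gap by writing $\ln\Gamma(x+s)-\ln\Gamma(x)=\int_{0}^{s}\psi(x+t)\,dt$ and using $\psi(x+t)=\ln x+O\bigl((1+t)/x\bigr)$ for large $x$ and bounded $t$, which gives $\ln\bigl(\Gamma(x+s)/\Gamma(x)\bigr)=s\ln x+O(s/x)$. Since $b_n/(a_n+1)\to 0$ in each of the three regimes, the error is negligible and the displayed asymptotic is legitimate. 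Once this uniform lemma is secured, the remaining work is pure bookkeeping.
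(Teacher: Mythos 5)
Your argument is correct, and in fact the paper offers no proof of this lemma at all (it is stated bare and immediately used), so your write-up fills a genuine gap rather than duplicating one. The reduction to $\lim b_n\ln a_n$ via $\binom{a_n+b_n}{a_n}=\frac{\Gamma(a_n+b_n+1)}{\Gamma(a_n+1)\Gamma(b_n+1)}$ is sound, the three substitutions are computed correctly ($0$, $+\infty$, $\ln 2$), and you rightly flag and repair the only delicate point, namely uniformity of $\Gamma(x+s)/\Gamma(x)=x^s(1+o(1))$ as $s\to 0^+$ simultaneously with $x\to\infty$; the digamma-integral bound $\ln\Gamma(x+s)-\ln\Gamma(x)=\int_0^s\psi(x+t)\,dt=s\ln x+O(s/x)$ settles it. One remark worth recording: in the context where this lemma is applied (Theorem \ref{thm:avr_pos}), the binomial arises as $\binom{w+1/l}{w}$ with $w=2^i$ a genuine integer and $1/l=2^{i-m}$ the small parameter, so one can bypass the Gamma function entirely by writing
\begin{equation*}
\binom{w+\tfrac{1}{l}}{w}=\prod_{k=1}^{w}\Bigl(1+\frac{1}{kl}\Bigr),\qquad
\ln\binom{w+\tfrac{1}{l}}{w}=\sum_{k=1}^{w}\ln\Bigl(1+\frac{1}{kl}\Bigr)=\frac{H_w}{l}+O\Bigl(\frac{1}{l^2}\Bigr)=\frac{\ln w}{l}+o(1),
\end{equation*}
which gives the same three limits with only elementary estimates on harmonic numbers. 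Your Gamma-function route is more general (it covers non-integer $a_n$, as literally written in the lemma), while the product form is shorter and self-contained; either is acceptable.
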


Theorem \ref{thm:avr_pos}, Lemma \ref{lem:asympt-binom} and Lemma \ref{pr:change_lim_avr} imply the following result.
    
\begin{Corollary}
Let $m$ be a strictly positive integer and $\uv=\left(1^i0^{m-i}\right)$ Then
\begin{itemize}
\item for any $i\leq m-\log_2(m)-\log_2(\log_2(m))$ we have $p_0(\uv)\to 1$ and $p_0(\overline{\uv})\to 0.$
\item for any $i\ge m-\log_2(\log_2(m))$ we have $p_0(\uv)\to 0$ and $p_0(\overline{\uv})\to 1.$
\end{itemize}
\end{Corollary}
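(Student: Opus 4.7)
My plan is to read off an explicit closed form for $\Avr(W^{(1^i 0^{m-i})})$ from Theorem~\ref{thm:avr_pos}, apply the asymptotic identities of Lemma~\ref{lem:asympt-binom} to the corresponding generalized binomial, and then convert the limits of $\Avr$ into limits of $p_0(\uv)$ via Lemma~\ref{pr:change_lim_avr}. The parts of the statement concerning $\overline{\uv}$ will follow at once from the complementarity relation $\Avr(W^{\overline{\uv}}) = 1 - \Avr(W^\uv)$ established earlier, so only the limits for $\uv$ itself need a direct argument.

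First I would observe that $\Avr(W^{(1^i 0^{m-i})})$ is nondecreasing in $i$ for $m$ fixed. Writing $q = 1 - p^{2^{m-i-1}}$ one has $\Bha{W^{(1^i 0^{m-i})}}(p) = 1 - (q(2-q))^{2^i}$ and $\Bha{W^{(1^{i+1} 0^{m-i-1})}}(p) = 1 - (q^2)^{2^i}$; since $q(2-q) \geq q^2$ on $[0,1]$, the inequality holds pointwise and is preserved by integration in $p$. Equivalently, $\binom{2^i + 2^{i-m}}{2^i}$ is nondecreasing in $i$ with $m$ fixed, so in each case of the corollary it suffices to verify the conclusion at the extremal admissible integer value of $i$.

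For Case~1 take $i = \lfloor m - \log_2 m - \log_2\log_2 m \rfloor$; then $2^i \leq n/(\log_2 n \cdot \log_2\log_2 n)$ and $2^{i-m} \leq 1/(\log_2 n \cdot \log_2\log_2 n)$ with $n = 2^m$, so by Part~1 of Lemma~\ref{lem:asympt-binom} the generalized binomial tends to $1$. Therefore $\Avr(W^\uv) \to 0$; Lemma~\ref{pr:change_lim_avr} yields $p_0(\uv) \to 1$ and complementarity yields $p_0(\overline{\uv}) \to 0$. For Case~2 take $i = \lceil m - \log_2\log_2 m \rceil$; then $2^i \geq n/\log_2\log_2 n$ and $2^{i-m} \geq 1/\log_2\log_2 n$, so Part~2 of Lemma~\ref{lem:asympt-binom} forces the binomial to diverge, whence $\Avr(W^\uv) \to 1$, $p_0(\uv) \to 0$, and $p_0(\overline{\uv}) \to 1$.

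The only technical nuisance I anticipate is that the two cutoffs are generally not integers, so after rounding the bounds on $2^i$ and $2^{i-m}$ are only accurate up to a bounded multiplicative factor. Since both Part~1 and Part~2 of Lemma~\ref{lem:asympt-binom} depend only on the leading order $n/f(n)$ and are stable under such perturbations, this harm is invisible in the limit; beyond that point the argument is purely bookkeeping combining the three auxiliary results.
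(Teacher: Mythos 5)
Your proposal is correct and is exactly the route the paper takes: it derives the corollary by combining Theorem \ref{thm:avr_pos}, Lemma \ref{lem:asympt-binom} and Lemma \ref{pr:change_lim_avr}, together with the complementarity relation for $\overline{\uv}$. Your explicit monotonicity argument in $i$ (via the pointwise comparison of the reliability polynomials) is a welcome addition that the paper leaves implicit, since Lemma \ref{lem:asympt-binom} only treats the exact cutoff values.
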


Another direct consequence of our results is that for any $i\leq m-\log_2(m)-\log_2(\log_2(m))$ the monomial $f=x_0\dots x_{i-1}$ is highly reliable in average. Hence, all the monomials $g\preceq_d f$ are also highly reliable in average, as their average reliability tends to zero when $m$ goes to infinity. Also, $f$ becomes unreliable in average for $i\geq m-\log_2(\log_2(m)).$ The values $m-\log_2(m)-\log_2(\log_2(m))< i < m-\log_2(\log_2(m))$ are to be considered in more details.  

\begin{Corollary}
Let $m$ be a strictly positive integer and $i\leq \log_2(\log_2(m)).$ Then for any $f\in\Mon$ with $f\preceq_d x_{m-i+1}\dots x_m$ we have that $\Avr(W^{f})\to 0$ when $m\to\infty.$ In other words, any synthetic channel in the $\RM(i,m)$ is asymptotically "good" in average.  
\end{Corollary}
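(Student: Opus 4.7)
The plan is to reduce the claim to the asymptotic estimate already established for the channels $W^{(1^j 0^{m-j})}$, and then to propagate it to every monomial dominated by the maximal-index monomial $f_0 := x_{m-i}\cdots x_{m-1}$ (the product of the $i$ largest variables; I interpret ``$x_{m-i+1}\cdots x_m$'' in the statement this way, since the paper indexes variables by $0,\dots,m-1$).

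First I reduce $\Avr(W^{f_0})$ to a quantity that the preceding corollary controls. The binary vector underlying $f_0$ is $(0^{m-i}1^i)$, the bit-wise complement of $(1^{m-i}0^i)$, so the duality identity $\Avr(W^{\overline{\uv}})=1-\Avr(W^{\uv})$ gives
\begin{equation*}
\Avr(W^{f_0}) \;=\; 1 - \Avr\bigl(W^{(1^{m-i}0^{i})}\bigr).
\end{equation*}
Setting $j:=m-i$ in the second clause of the previous corollary, the condition $j\ge m-\log_2(\log_2(m))$ becomes exactly $i\le \log_2(\log_2(m))$, and yields $p_0((1^{m-i}0^{i}))\to 0$; Lemma \ref{pr:change_lim_avr} then upgrades this to $\Avr(W^{(1^{m-i}0^{i})})\to 1$, and consequently $\Avr(W^{f_0})\to 0$.

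Next I propagate the estimate to every $f\preceq_d f_0$ by monotonicity. By Theorem \ref{thm:polar-strong-dec}, $h\preceq_d g$ implies $h\le g$ pointwise in $p$, and Proposition \ref{pr:ord_pos_avr} then gives $\Avr(W^h)\le \Avr(W^g)$; hence $0\le \Avr(W^f)\le \Avr(W^{f_0})\to 0$, which proves the first assertion. For the Reed--Muller reformulation it remains to check that every monomial $g$ of degree $d\le i$ satisfies $g\preceq_d f_0$. Let $g^*:=x_{m-d}\cdots x_{m-1}$. Writing $g=x_{i_1}\cdots x_{i_d}$ with $i_1<\cdots<i_d\le m-1$, the strict increase forces $i_\ell\le m-d+\ell-1$ for every $\ell$, so $g\preceq g^*$ in the sense of Definition \ref{def:order}, and hence $g\preceq_d g^*$. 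Since $\{m-d,\dots,m-1\}\subseteq\{m-i,\dots,m-1\}$, $g^*$ divides $f_0$, i.e.\ $g^*\weako f_0$, so the degree-different clause of the definition of $\preceq_d$ yields $g\preceq_d g^*\weako f_0$, as required. The only real pitfall is the index bookkeeping around duality: the ``$\log_2\log_2 m$'' threshold on $i$ appearing in this corollary is the image under $j\mapsto m-j$ of the ``$m-\log_2\log_2 m$'' threshold on the number of ones handled by the previous corollary; once the duality is used correctly, every other step is immediate from the order-theoretic machinery already in place.
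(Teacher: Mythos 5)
Your proof is correct and uses exactly the chain of results the paper assembles for this corollary (whose proof the paper leaves implicit): identify the top monomial with the vector $(0^{m-i}1^{i})$, show its average reliability vanishes via duality and the threshold results, and then propagate to all $f\preceq_d x_{m-i}\cdots x_{m-1}$ through Theorem \ref{thm:polar-strong-dec} and Proposition \ref{pr:ord_pos_avr}; your explicit verification that every degree-$d\leq i$ monomial sits below $x_{m-i}\cdots x_{m-1}$ in $\preceq_d$ (via $g\preceq g^{*}\weako f_0$) is a detail the paper omits but needs. The only cosmetic remark is that one can obtain $\Avr(W^{f_0})\to 0$ a bit more directly from the Corollary following Theorem \ref{thm:avr_pos} combined with the second limit of Lemma \ref{lem:asympt-binom}, without routing through $p_0$ and Lemma \ref{pr:change_lim_avr}.
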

\section{Conclusions and Perspectives}
A complete characterization of the Bhattacharyya parameter of synthetic channels of a monomial code is an open problem that has attracted a lot of attention in the last decade. Even for the particular case of Binary Erasure Channel the question remains unanswered. However, the implications of such a result are of high importance in coding theory, specially in polar coding. In this article, we make a step forward by proposing an order relation $\preceq_d$ that decreases the gap between state-of-the-art and the ultimate partial order relation for the Battacharya parameter of synthetic channels. The advantage of this approach is that our algebraic description is rather easy to implement and analyze, compared to other order relations such as \cite{WFS17}. Simulations show that $\preceq_d$ is a valid order relation on Binary Symmetric Channel, and a deeper inspection of \cite{WFS17} and our work, could potentially determine an algebraic description that fits the latest results.            

As the relations on the Bhattacharyya parameter are all partial orders, we have proposed an alternative solution for ordering the synthetic channels. For that, we have used the concept of average reliability, borrowed from the network theory. Instead of the local evaluation of the Bhattacharyya parameter, we propose a global one, by evaluating the integral, i.e., by measuring its global average behavior. Hence, we rank the synthetic channels using a pre-order relation $\LAR$, given by the value of the integral. Our result is not constructive, in the sense that it does not fully characterize the channels that belong to a specific interval. An answer to this question might provide an extremely efficient method for constructing polar codes and give much more insight on the synthetic channels $W^{\uv}.$ 

\section*{Acknowledgements} V-F. Dragoi is supported by a grant of the Romanian Ministry of Education and Research, CNCS- UEFISCDI, project number PN-III-P1-1.1-PD-2019-0285, within PNCDI III.

\bibliographystyle{abbrv}

\end{document}